\newcommand{\gs}{{\mathrm{gs}}}
\newcommand{\hf}{{\mathrm{hf}}}
\newcommand{\SD}{{\mathcal{SD}}}
\newcommand{\Ex}{{\mathrm{Ex}}}
\newcommand{\eps}{{\varepsilon}}        
\newcommand{\vphi}{{\varphi}}           
\newcommand{\cA}{\mathcal{A}}
\newcommand{\cB}{\mathcal{B}}
\newcommand{\cD}{\mathcal{D}}
\newcommand{\cF}{\mathcal{F}}
\newcommand{\cL}{\mathcal{L}}         
\newcommand{\cO}{\mathcal{O}}         
\newcommand{\cP}{\mathcal{P}}         
\newcommand{\fh}{\mathfrak{h}}
\newcommand{\uR}{{\underline R}}        
\newcommand{\uZ}{{\underline Z}}
\def\e{{c}^{*}}
\def\v{{c}}
\newcommand{\TRh}[1]{\mathrm{tr}_1\left\{#1\right\}}
\newcommand{\TRhh}[1]{\mathrm{tr}_2\left\{#1\right\}}
\newcommand{\TRF}[1]{\mathrm{tr}_{\mathcal{F}}\left\{#1\right\}}
\newcommand{\vv}{\varphi}
\def\TAI{Documenta Mathematica}
\def\rec{March 14, 2012} 
\def\YEAR{\year}\newcount\VOL\VOL=\YEAR\advance\VOL by-1995
\def\firstpage{1}\def\lastpage{5}
\def\revised{}
\def\communicated{}
\def\magnification{\afterassignment\m@g\count@}
\def\m@g{\mag=\count@\hsize6.5truein\vsize8.9truein\dimen\footins8truein}
\font\eightrm=cmr8
\font\caps=cmcsc10                    
\font\Caps=cmcsc10 scaled \magstep1   
\def\DocMath{}
\renewcommand{\@evenhead}{%
    \ifnum\thepage>\lastpage\rlap{\thepage}\hfill%
    \else\rlap{\thepage}\slshape\leftmark\hfill{\caps\SAuthor}\hfill\fi}%
\renewcommand{\@oddhead}{%
    \ifnum\thepage=\firstpage{\DocMath\hfill\llap{\thepage}}%
    \else{\slshape\rightmark}\hfill{\caps\STitle}\hfill\llap{\thepage}\fi}%
\def\TSkip{\bigskip}
\newbox\TheTitle{\obeylines\gdef\GetTitle #1
\ShortTitle  #2
\SubTitle    #3
\Author      #4
\ShortAuthor #5
\EndTitle
{\setbox\TheTitle=\vbox{\baselineskip=20pt\let\par=\cr\obeylines%
\halign{\centerline{\Caps##}\cr\noalign{\medskip}\cr#1\cr}}%
	\copy\TheTitle\TSkip\TSkip%
\def\next{#2}\ifx\next\empty\gdef\STitle{#1}\else\gdef\STitle{#2}\fi%
\def\next{#3}\ifx\next\empty%
    \else\setbox\TheTitle=\vbox{\baselineskip=20pt\let\par=\cr\obeylines%
    \halign{\centerline{\caps##} #3\cr}}\copy\TheTitle\TSkip\TSkip\fi%
\centerline{\caps #4}\TSkip\TSkip%
\def\next{#5}\ifx\next\empty\gdef\SAuthor{#4}\else\gdef\SAuthor{#5}\fi%
\ifx\TAI\empty\TSkip%
    \else\centerline{\eightrm To appear in \TAI}\TSkip\fi%
\ifx\rec\empty\relax
    \else\centerline{\eightrm Received: \rec}\fi%
\ifx\revised\empty\TSkip%
    \else\centerline{\eightrm Revised: \revised}\TSkip\fi%
\ifx\communicated\empty\relax
    \else\centerline{\eightrm Communicated by \communicated}\fi\TSkip\TSkip%
\catcode'015=5}}\def\Title{\obeylines\GetTitle}
\def\Abstract{\begingroup\narrower
    \parskip=\medskipamount\parindent=0pt{\caps Abstract. }}
\def\EndAbstract{\par\endgroup\TSkip}
\long\def\MSC#1\EndMSC{\def\arg{#1}\ifx\arg\empty\relax\else
     {\par\narrower\noindent%
     2000 Mathematics Subject Classification: #1\par}\fi}
\long\def\KEY#1\EndKEY{\def\arg{#1}\ifx\arg\empty\relax\else
	{\par\narrower\noindent Keywords and Phrases: #1\par}\fi\TSkip}
\newbox\TheAdd\def\Addresses{\vfill\copy\TheAdd\vfill
    \ifodd\number\lastpage\vfill\eject\phantom{.}\vfill\eject\fi}
{\obeylines\gdef\GetAddress #1
\Address #2
\Address #3
\Address #4
\EndAddress
{\def\xs{4.3truecm}\parindent=0pt
\setbox0=\vtop{{\obeylines\hsize=\xs#1\par}}\def\next{#2}
\ifx\next\empty 
     \setbox\TheAdd=\hbox to\hsize{\hfill\copy0\hfill}
\else\setbox1=\vtop{{\obeylines\hsize=\xs#2\par}}\def\next{#3}
\ifx\next\empty 
     \setbox\TheAdd=\hbox to\hsize{\hfill\copy0\hfill\copy1\hfill}
\else\setbox2=\vtop{{\obeylines\hsize=\xs#3\par}}\def\next{#4}
\ifx\next\empty\ 
     \setbox\TheAdd=\vtop{\hbox to\hsize{\hfill\copy0\hfill\copy1\hfill}
                \vskip20pt\hbox to\hsize{\hfill\copy2\hfill}}
\else\setbox3=\vtop{{\obeylines\hsize=\xs#4\par}}
     \setbox\TheAdd=\vtop{\hbox to\hsize{\hfill\copy0\hfill\copy1\hfill}
	        \vskip20pt\hbox to\hsize{\hfill\copy2\hfill\copy3\hfill}}
\fi\fi\fi\catcode'015=5}}\gdef\Address{\obeylines\GetAddress}
\begin{document}
\theoremstyle{plain}
\newtheorem{thm}{Theorem}[section]
\newtheorem{lem}[thm]{Lemma}
\newtheorem{cor}[thm]{Corollary}
\theoremstyle{definition}
\newtheorem{defn}[thm]{Definition}
\theoremstyle{remark}
\newtheorem{rem}[thm]{Remark}
\bibliographystyle{plain}

\Title
			Fermion Correlation Inequalities
			derived from G- and P-Conditions                                     
\ShortTitle
			Correlation Inequalities from G and P
\SubTitle
			
\Author
		Volker Bach, Hans Konrad Knörr, Edmund Menge
\ShortAuthor
		   Bach, Knörr, Menge
\EndTitle

\Abstract
It is shown in this paper that the G-Condition and the P-Condition
from representability imply the fermion correlation estimate from
\cite{VBA} which, in turn, is known to yield a nontrivial bound on the
accuracy of the Hartree--Fock approximation for large Coulomb systems.
\EndAbstract

\MSC
Primary: 81V70; Secondary: 81V45 and 81V55
\EndMSC

\KEY
Fermion correlation estimate, representability conditions
\EndKEY

\Address
	Volker Bach
	v.bach@tu-bs.de

\Address
	Hans Konrad Knörr
		h-k.knörr@tu-bs.de

\Address
	Edmund Menge
	e.menge@tu-bs.de
	\phantom{.}
	Institut für Analysis und Algebra
	TU Braunschweig
	Pockelsstraße 14,
	38118 Braunschweig
\Address
\EndAddress


\section{Introduction}

The dynamics of $N$ electrons in an atom ($K=1$) or molecule ($K \geq
2$) with $K$ nuclei of charges $\uZ := (Z_1, Z_2, \ldots, Z_K)$, fixed
at positions $\uR := (R_1, R_2, \ldots, R_K)$, is generated by the
Hamiltonian
\begin{align}
H^{(N)}(\uZ,\uR)\ :=\ 
\sum\limits_{n=1}^{N} 
\Bigg(-\Delta_{x_n} -\sum\limits_{j=1}^{K}\frac{Z_j}{\left|x_n-R_j\right|}\Bigg)
+\sum\limits_{1\leq n<m \leq N} \frac{1}{\left|x_n-x_m\right|} \label{Ham1}
\end{align}
to lowest order in the Born--Oppenheimer
approximation. $H^{(N)}(\uZ,\uR)\equiv H^{(N)}$ is a semibounded,
self-adjoint operator defined on a suitable dense domain $\cD^{(N)}$ in
the Hilbert space $\cF_{\mathrm{f}}^{(N)}[\fh]$ of antisymmetric $N$-electron wave
functions, cf.~\eqref{Hamiltonian} below. 

Basic quantities of interest are the ground state energy
\begin{align*} 
E_\gs^{(N)}(\uZ,\uR) \ := \ \inf\left\{\sigma\left\{H^{(N)}(\uZ,\uR)\right\}\right\} ,
\end{align*}
whose variational characterization
\begin{align} \label{eq-1.3}
E_\gs^{(N)}(\uZ,\uR)=
\inf \left\{ \left< \Psi^{(N)}\Big|\,H^{(N)}\Psi^{(N)} \right> \Big|\  
\Psi^{(N)} \in \cD^{(N)} \cap \cF_{\mathrm{f}}^{(N)}[\fh],\ \|\Psi^{(N)}\| = 1 \right\}
\end{align}
is given by the Rayleigh--Ritz principle, and corresponding ground
states $\Psi_\gs^{(N)}$, i.e., normalized solutions of the stationary
Schr\"odinger equation
\begin{align*} 
H^{(N)}(\uZ,\uR) \Psi_\gs^{(N)} 
\ = \
E_\gs^{(N)}(\uZ,\uR) \Psi_\gs^{(N)} .
\end{align*}
The Hartree--Fock (HF) variational principle is an important method to
obtain approximations to both, the ground state energy and ground
states. The HF energy $E_\hf^{(N)}(\uZ,\uR)$ is defined by restricting
the variation in \eqref{eq-1.3} to $\SD^{(N)}[\fh]$,
\begin{align} \label{eq-1.5}
\nonumber E_\hf^{(N)}&(\uZ,\uR)\\
&=\inf\left\{ \left< \Phi^{(N)}\Big|\,H^{(N)}\Phi^{(N)} \right> \Big|\  
\Phi^{(N)} \in \cD^{(N)} \cap \SD^{(N)}[\fh],\ \|\Phi^{(N)}\| = 1 \right\},
\end{align}
where $\SD^{(N)}[\fh] \subseteq \cF_{\mathrm{f}}^{(N)}[\fh]$ denotes the set of
Slater determinants, i.e., the set of all antisymmetrized
product vectors $\vphi_1 \wedge \cdots \wedge \vphi_N$.
Since the variation in \eqref{eq-1.5}, compared to \eqref{eq-1.3}, is
restricted, we clearly have 
\begin{align*} 
E_\hf^{(N)}(\uZ,\uR) 
\ \geq \
E_\gs^{(N)}(\uZ,\uR). 
\end{align*}
A lower bound to the ground state energy by the HF energy minus an
error which is small in the large-$Z$ limit was obtained by one of us
in \cite{VBA,VB2}. In the case of a neutral atom, i.e., $N =
Z := Z_1$ and $R_1 =0$, the resulting estimate was
\begin{align} \label{eq-1.7}
E_\gs^{(Z)}(Z) 
\ \geq \
E_\hf^{(Z)}(Z) - \cO\big( Z^{(5/3) - \eps}\big) ,
\end{align}
for some $\eps >0$. The error term $\cO( Z^{(5/3) - \eps})$ is
small compared to all three contributions to $E_\hf^{(Z)}(Z)$, namely,
the kinetic, the classical electrostatic, and the exchange energy
which are at least of size $c Z^{5/3}$ in magnitude for some
constant $c >0$.

A key inequality derived in \cite{VBA} that
eventually lead to \eqref{eq-1.7} is the fermion correlation estimate
\begin{align} \label{eq-1.8}
\TRhh{(X \otimes X) \Gamma^{\left(\mathrm{T}\right)}}\ \geq\ 
-\, \TRh{ X \gamma}\min\left\{ 1;\ \mathrm{const}\cdot\sqrt{\TRh{X \left(\gamma - \gamma^2\right)}}\right\}
\end{align}

where $X = X^* = X^2$ is an orthogonal projection, 
$\Gamma^{\left(\mathrm{T}\right)} := \Gamma - (1 - \Ex)(\gamma \otimes \gamma)$, $\Gamma
\equiv \Gamma_{\Phi^{(N)}}$ is the two-particle and $\gamma \equiv
\gamma_{\Phi^{(N)}}$ the one-particle density matrix of a normalized
$N$-electron state $\Phi^{(N)} \in \cF_{\mathrm{f}}^{(N)}[\fh]$.
 
The purpose of the present paper is to give an alternative derivation
of \eqref{eq-1.8} by using ideas originating from the theory of
$N$-representability. More precisely, we show that \eqref{eq-1.8}
follows already from the G-Condition and the P-Condition
specified by Garrod and Percus \cite{CJP} and Coleman \cite{AJC}. 

Observing that the Rayleigh--Ritz principle \eqref{eq-1.3} can be
rewritten as a variation over all $N$-representable two-particle
density matrices $\Gamma$, we consequently obtain \eqref{eq-1.7} from
relaxing the requirement of $N$-representability of $\Gamma$ to merely
requiring $\Gamma$ to fulfill the G-Condition and the P-Condition:

\begin{thm}
The G-Condition and the P-Condition imply \eqref{eq-1.8}.
\end{thm}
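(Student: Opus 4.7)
The plan is to rewrite the left-hand side of \eqref{eq-1.8} as a variance-type quantity, and then to use the two representability conditions separately to handle the regimes where $\TRh{X(\gamma-\gamma^2)}$ is ``large'' respectively ``small''.

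Setting $\hat X := d\Gamma(X) = \sum_{ij}X_{ij}a^*_i a_j$, a direct computation with the CAR (using $X^2=X$ together with $\Gamma^{\left(\mathrm{T}\right)} = \Gamma - (1-\Ex)(\gamma\otimes\gamma)$) yields the purely algebraic identity
\begin{align*}
\TRhh{(X\otimes X)\Gamma^{\left(\mathrm{T}\right)}}
\;=\; \bigl(\langle\hat X^2\rangle - \langle\hat X\rangle^2\bigr) \;-\; \bigl(\TRh{X\gamma} - \TRh{(X\gamma)^2}\bigr).
\end{align*}
The first parenthesis is the variance of $\hat X$, hence non-negative (a special case of the G-Condition applied to centred one-body operators). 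The second parenthesis is itself non-negative and bounded above by $\TRh{X\gamma}$ thanks to $\gamma\leq\mathbf{1}$, which follows from the P-Condition applied to a rank-one test operator. Putting these together one already recovers the coarse estimate $\TRhh{(X\otimes X)\Gamma^{\left(\mathrm{T}\right)}}\geq -\TRh{X\gamma}$, and this accounts for the $\min\{1,\cdot\}$ cap.

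To obtain the finer $-\TRh{X\gamma}\cdot\const\sqrt{\TRh{X(\gamma-\gamma^2)}}$ estimate I would use the G-Condition in its bilinear (polarized) form: for all one-body operators $A, B$ on $\fh$,
\begin{align*}
\bigl|\langle d\Gamma(A)^* d\Gamma(B)\rangle\bigr|^2 \;\leq\; \langle d\Gamma(A)^* d\Gamma(A)\rangle\,\langle d\Gamma(B)^* d\Gamma(B)\rangle.
\end{align*}
I would then choose $A=X$ and $B$ to be an ``off-diagonal'' partner of $X$ adapted to the spectral decomposition of $\gamma$, for example $B = X\gamma^{1/2}(\mathbf{1}-\gamma)^{1/2}$ (suitably symmetrized). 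After normal-ordering, the right-hand factor $\langle d\Gamma(B)^* d\Gamma(B)\rangle$ should be controlled by $\TRh{X(\gamma-\gamma^2)}$, the left-hand factor $\langle d\Gamma(A)^* d\Gamma(A)\rangle$ by $\TRh{X\gamma}^2$, and the cross term $\langle d\Gamma(A)^* d\Gamma(B)\rangle$ should reproduce exactly the deviation of $\TRhh{(X\otimes X)\Gamma^{\left(\mathrm{T}\right)}}$ from its quasi-free (Slater) value. The P-Condition enters here to absorb the extra two-body residuals produced by the normal-ordering: without the positivity $\Gamma\geq 0$, those contributions could wreck the one-body bounds needed on each factor.

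The hard part will be exactly this last bookkeeping step. Every application of the CAR produces a mixture of one- and two-body expectations, and making the Cauchy-Schwarz quantitative requires showing that each residual two-body piece is either positive (and hence discardable via P) or already bounded by the variance estimate from the coarse regime. I also expect the explicit constant $\const$ in \eqref{eq-1.8} to emerge from optimizing a scalar parameter in the definition of $B$; tracking this parameter through the CAR-rearrangements while keeping the sign conventions of $\Gamma^{\left(\mathrm{T}\right)}$ straight is what I anticipate to be the most delicate part of the argument.
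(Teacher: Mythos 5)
Your first step is correct and coincides with the paper's: the identity
\begin{align*}
\TRhh{\left(X\otimes X\right)\Gamma^{\left(\mathrm{T}\right)}}
=\Big(\big\langle \hat X^2\big\rangle-\big\langle \hat X\big\rangle^2\Big)
-\Big(\TRh{X\gamma}-\TRh{X\gamma X\gamma}\Big)
\end{align*}
holds, the variance term is nonnegative precisely because the G-Condition with $A=X$ reads $\TRhh{(X\otimes X)(\Gamma+\Ex(\gamma\otimes\mathbbm{1}))}\geq\left(\TRh{X\gamma}\right)^2$, and $\TRh{X\gamma X\gamma}\geq 0$ gives the coarse bound $\geq-\TRh{X\gamma}$ (this is Theorem \ref{mthm4a} of the paper). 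The problem is the second half, which is the entire content of the theorem, and there your sketch has two genuine gaps. First, there is no identity expressing $\TRhh{(X\otimes X)\Gamma^{\left(\mathrm{T}\right)}}$ as a cross term $\langle \mathrm{d}\Gamma(X)^*\,\mathrm{d}\Gamma(B)\rangle$ for a one-body $B$ such as $X\gamma^{1/2}(\mathbbm{1}-\gamma)^{1/2}$: after normal ordering, that cross term is $\TRh{XB\gamma}$ plus $\Gamma$ tested against $X\otimes B$, whereas the quantity to be bounded is $\Gamma-(1-\Ex)(\gamma\otimes\gamma)$ tested against $X\otimes X$; these are not related by any algebraic manipulation, so the "should reproduce exactly the deviation" step has no mechanism behind it. Second, the factor bounds you need are not available: $\langle \mathrm{d}\Gamma(X)^*\mathrm{d}\Gamma(X)\rangle\lesssim\left(\TRh{X\gamma}\right)^2$ is essentially the statement being proven (a priori one only has the bound of order $N\,\TRh{X\gamma}$), and normal-ordering $\mathrm{d}\Gamma(B)^*\mathrm{d}\Gamma(B)$ leaves the two-body piece $\TRhh{(B^*\otimes B)\Gamma}$, which even using the P-Condition is only controlled by $(N-1)\TRh{B^*B\gamma}$ — the extra factor of $N$ destroys the claimed $\sqrt{\TRh{X(\gamma-\gamma^2)}}$ scaling.

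What is missing is exactly the machinery the paper builds. One needs the spectral splitting of $\gamma$ at $\tfrac12$, $P:=\mathbbm{1}\left[\gamma>\tfrac12\right]$, with the elementary but crucial bounds $P\leq 2\gamma$ and $P^{\perp}\leq 2(\mathbbm{1}-\gamma)$, so that every "small" factor can be converted into $\TRh{X(\gamma-\gamma^2)}$. Then $\TRhh{(X\otimes X)\Gamma}$ is decomposed into a Main Part, a Remainder and a Main Error Term; the Main Part, tested with $A=(P+2P^{\perp})XP$ in the G-Condition, dominates the Hartree--Fock direct-minus-exchange term up to $\mathcal{O}\big(\TRh{X\gamma}\TRh{X(\gamma-\gamma^2)}\big)$, while the other pieces are handled by Cauchy--Schwarz with respect to the two positive semidefinite forms supplied by the P-Condition ($(A,B)\mapsto\TRhh{A^*B\,\Gamma}$) and the G-Condition ($(A,B)\mapsto\TRhh{(A^*\otimes B)(\Gamma+\Ex(\gamma\otimes\mathbbm{1}))}$). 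The pivotal estimate, which your single Cauchy--Schwarz cannot produce, is Lemma \ref{lem4a}: summing the G-Condition over the family of rank-one operators $B(r,s)=\left|PX\psi_r\right>\left<P^{\perp}X\psi_s\right|$ yields $\TRhh{(PXP\otimes P^{\perp}XP^{\perp})\Gamma}\leq 4\,\TRh{X\gamma}\TRh{X(\gamma-\gamma^2)}$, and it is this inequality that feeds all the remaining bounds and ultimately the constant in \eqref{eq-1.8}. Without the splitting at $\tfrac12$ and this summed application of the G-Condition, the bookkeeping you defer to "the hard part" cannot be closed.
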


We note that (\ref{eq-1.8}) was also derived by Graf and Solovej in \cite{SVG} by a different
method that, in retrospective, resembles the application of Garrod
and Percus' G-Condition. In fact, one part of the derivation in \cite{SVG}
follows already from the G-Condition. A main difference to using representability methods, however, lies in the use of
operator inequalities in \cite{SVG} which are necessarily formulated on
the $N$-particle Hilbert space, as opposed to the one- or two-particle Hilbert spaces in the presented work. \\
\ \\
In future work we plan to sharpen this result by making additional use
of Erdahl's $\text{T}_1$- and $\text{T}_2$-Conditions \cite{RME,RME2}
which have recently lead to very good numerical results in quantum
chemistry computations \cite{CML, MAZ, UBS}, as well as Coleman's Q-Condition which was also given in \cite{AJC}
but is not necessary for the derivation of our present result. Furthermore, 
similar representability conditions also exist for bosons \cite{MAZ}. There we like to adress the question whether
analogous results can also be obtained.
\ \\
\ \\
{\sc{Acknowledements.}}
We would like to thank Gero Friesecke, Peter M\"uller and Heinz Siedentop for 
fruitful remarks and discussions. H. K. K. and E. M. were partially supported by the
MPGC Mainz and the ESI.


\section{Density Matrices and Reduced Density Matrices}


\subsection{Fock space, Creation and Annihilation Operators}
Let $\fh$ be a separable complex Hilbert space which we henceforth refer to as the one-particle Hilbert space. 
The fermion Fock space $\cF\equiv\cF_{\mathrm{f}}[\fh]$ is defined to be the orthogonal sum
\begin{align*}
 \cF_{\mathrm{f}}[\fh]:=\bigoplus_{N=0}^{\infty}\cF_{\mathrm{f}}^{(N)}[\fh],
\end{align*}
where 
\begin{align*}
 \cF_{\mathrm{f}}^{(N)}[\fh]:=\stackrel{N}{\bigwedge}\fh:=
	\cA_{N}\Big(\stackrel{N}{\bigotimes}\fh\Big)
\end{align*}
is the antisymmetric tensor product of $N$ copies of $\fh$, for $N\geq 1$, and
$\cF_{\mathrm{f}}^{(0)}[\fh]:=\mathbbm{C}\cdot\Omega$, with $\Omega$ being the normalized 
vacuum vector.
Here, $\cA_{N}$ is the orthogonal 
projection from $\stackrel{N}{\bigotimes}\fh$ onto $\stackrel{N}{\bigwedge}\fh$ uniquely defined by 
\begin{align*}
\nonumber \cA_{N}\left(\varphi_1\otimes\cdots\otimes\varphi_N\right)&:=
    \frac{1}{N!}\sum\limits_{\pi\in\mathcal{S}_N}\left(-1\right)^{\pi}\varphi_{\pi(1)}
    \otimes\cdots\otimes\varphi_{\pi(N)}\\
    &=:\frac{1}{\sqrt{N!}}\varphi_1\wedge\cdots\wedge\varphi_N,
\end{align*}
for $\varphi_1,\dots,\varphi_N\in\fh$. It is convenient to introduce creation operators 
$\e(f)\in\cB(\cF)$ for any $f\in\fh$ by 
\begin{align}
 \e(f)\Omega:=f, \label{DefCre1}
\end{align}
\begin{align}
 \e(f)\left(\varphi_1\wedge\cdots\wedge\varphi_N\right):=f\wedge\varphi_1\wedge\cdots\wedge\varphi_N 
												  \label{DefCre2}
\end{align}
for $\varphi_1,\dots,\varphi_N\in\fh$, and extension by linearity and continuity. By induction and 
(\ref{DefCre1})-(\ref{DefCre2})
\begin{align}
 \varphi_1\wedge\varphi_2\wedge\cdots\wedge\varphi_N=\e(\varphi_1)\e(\varphi_2)
	\cdots\e(\varphi_N)\Omega \label{AufbauSD}
\end{align}
for all $\varphi_1,\varphi_2,\dots,\varphi_N\in\fh$. If $\left\{\varphi_k\right\}_{k=1}^{\infty}\subseteq\fh$ 
is an orthonormal basis (ONB) of $\fh$, then for any $N\in\mathbbm{N}$
\begin{align}
 \left\{ \e(\varphi_{k_1})\cdots\e(\varphi_{k_N})\Omega |\ 1\leq k_1<k_2
      <\dots<k_N\right\}\subseteq\cF_{\mathrm{f}}^{(N)}[\fh] \label{ONBFN}
\end{align}
is an ONB of $\cF_{\mathrm{f}}^{(N)}[\fh]$, and
\begin{align}
 \left\{ \e(\varphi_{k_1})\cdots\e(\varphi_{k_N})\Omega 
      |\ N\in\mathbbm{N}_0,\ 1\leq k_1<k_2<\dots<k_N\right\}\subseteq\cF \label{ONBF}
\end{align}
is an ONB of $\cF$.
\newline
The adjoint operators $\v(f):=\left(\e(f)\right)^*\in\cB(\cF)$, with $f\in\fh$, are the annihilation operators. Note that, while
$f\mapsto\e(f)$ is linear, $f\mapsto\v(f)$ is antilinear. Together with the creation operators 
they fulfill the canonical anticommutation relations (CAR), i.e.,
\begin{align}
 \forall f,g\in\fh:\ \left\{\v(f),\e(g)\right\}=\left<f|g\right>_{\fh}\cdot\mathbbm{1}_{\cF},
	\quad \left\{\e(f),\e(g)\right\} = 0, \label{DefCAR}
\end{align}
where $\left\{A,B\right\}:=AB+BA$ denotes the anticommutator.\\
Moreover,
\begin{align}
 \forall f\in\fh:\ \v(f)\Omega=0, \label{DefAni2}
\end{align}
and $\left\{\e(f),\v(f)|\ f\in\fh\right\}$ is completely determined by (\ref{DefCre1}), (\ref{DefCAR}) and 
(\ref{DefAni2}), i.e., (\ref{DefCre2})-(\ref{ONBF}) follow from (\ref{DefCre1}), (\ref{DefCAR}) and (\ref{DefAni2}).
The creation and annihilation operators introduced here are a specific representation of the (abstract) 
CAR (\ref{DefCAR}), namely the Fock representation. For $\varphi_k$ being any element of a given ONB 
$\left\{\varphi_k\right\}_{k=1}^{\infty}\subseteq\fh$, we write 
\begin{align*}
  \e_k\equiv\e(\varphi_k),\ \v_k\equiv\v(\varphi_k).
\end{align*}
\newline
An important unbounded, self-adjoint and positive operator on $\cF$ is the number operator 
$\widehat{\mathbbm{N}}$ defined by 
\begin{align*}
 \widehat{\mathbbm{N}}\left(\e(f_1)\cdots\e(f_N)\Omega\right)
	:=N\cdot\e(f_1)\cdots\e(f_N)\Omega
\end{align*}
for any $f_1,\dots,f_N\in\fh$. It is not difficult to see that
\begin{align*}
 \widehat{\mathbbm{N}}=\sum\limits_{k=1}^{\infty}\e_k\v_k
\end{align*}
as a quadratic form, for any ONB $\left\{\varphi_k\right\}_{k=1}^{\infty}\subseteq\fh$.


\subsection{Density Matrices}
A positive trace class operator $\rho\in\cL_{+}^{1}(\cF)$ of unit trace, 
$\text{tr}_{\cF}\left\{\rho\right\}=1$, is called density matrix. Given a density matrix $\rho$, the map 
$A\mapsto \text{tr}_{\cF}\left\{\rho\, A\right\}$ defines a state, i.e., a normalized, linear, and positive 
functional on $\cB(\cF)\ni A$. If $\Psi\in\cF$ is a normalized vector, then 
$\left|\Psi\right>\left<\Psi\right|$ is a density matrix (of rank one) called pure state. In this paper we 
study fermion systems with a repulsive interaction and whose dynamics preserve the particle number. For this 
reason we restrict our attention to density matrices which commute with the particle
number operator and have a finite squared particle number expectation value,
\begin{align}
 \rho= \bigoplus_{N=0}^{\infty} \rho^{(N)},\quad\text{and}\quad 
  \big<\widehat{\mathbbm{N}}^2\big>_{\rho}<\infty \label{rhoF},
\end{align}
where here and henceforth we denote for any $A\in\cB(\fh)$ 
\begin{align*}
 \left<A\right>_{\rho}:=
    \TRF{\rho^{\frac{1}{2}}A\rho^{\frac{1}{2}}}.
\end{align*}
Note that, if $m\neq n,\ m,n\geq0$, then $\text{tr}_{\cF}\left\{\rho\,\e(f_1)
		\cdots\e(f_m)\v(g_1)\cdots\v(g_n)\right\}=0$ for any choice of 
		$f_1,\dots,f_m,g_1,\dots,g_n\in\fh$, 
due to (\ref{rhoF}).


\subsection{Reduced Density Matrices}
Given a density matrix $\rho\in\cL_{+}^{1}(\cF)$ subject to (\ref{rhoF}), we introduce two bounded 
operators, $\gamma_{\rho}\in\cB(\fh)$ and $\Gamma_{\rho}\in\cB(\fh\otimes\fh)$, by
\begin{align}
 \forall f,g\in\fh:\ \left<f |\,\gamma_\rho g\right>&:=\text{tr}_{\cF}\left\{\rho\, \e(g)
	      \v(f)\right\}\label{1pdm}
\end{align}
and
\begin{align}
 \forall f_1, f_2 ,g_1 , g_2 \in \fh:\ \left<f_1\otimes f_2 |\, \Gamma_\rho( g_1\otimes g_2)\right> 
:=\text{tr}_{\cF}\left\{\rho\,\e(g_2)\e(g_1)\v(f_1)\v(f_2)\right\}. \label{2pdm}
\end{align}
$\gamma_\rho$ is called the one-particle density matrix (1-pdm) and $\Gamma_\rho$ the two-particle density 
matrix (2-pdm) corresponding to $\rho$. For any ONB $\left\{\varphi_k\right\}_{k=1}^{\infty}\subseteq\fh$ we define the exchange operator 
$\mathrm{Ex}\in\mathcal{B}\left(\fh\otimes\fh\right)$ by 
\begin{align}
 \mathrm{Ex}:=\sum\limits_{k,l=1}^\infty\left|\vv_k\otimes\vv_l\right>\left<\vv_l\otimes\vv_k\right|,
\end{align}
such that $\mathrm{Ex}\left(f\otimes g\right)=g\otimes f$. Then the CAR leads to the antisymmetry property of $\Gamma_\rho$:
\begin{align}
 \mathrm{Ex}\,\Gamma_\rho=-\Gamma_\rho=\Gamma_\rho\,\mathrm{Ex}.
\end{align}
\ \\
The following properties of the 1-pdm and the 2-pdm are easily proven (we denote $\text{tr}_1 := \text{tr}_{\fh}$ 
and $\text{tr}_2 := \text{tr}_{\fh \otimes \fh}$):

\begin{lem}
 Let $\rho\in\cL_{+}^{1}(\cF)$ be a density matrix obeying (\ref{rhoF}). Then the following assertions hold true:
\begin{itemize}
\item[i)]$\gamma_\rho\in\cL_{+}^{1}(\fh),\quad 0\leq\gamma_\rho\leq \mathbbm{1},\quad \TRh{\gamma_\rho}
=\big<\widehat{\mathbbm{N}}\big>_{\rho},\quad\Gamma_\rho\in\cL_{+}^{1}(\fh\otimes\fh),\\ 0\leq
\Gamma_\rho\leq\big<\widehat{\mathbbm{N}}\big>_{\rho},\ \text{and}\quad \TRhh{\Gamma_\rho}
=\big<\widehat{\mathbbm{N}}\big(\widehat{\mathbbm{N}}-1\big)\big>_{\rho}$.
\item[ii)]
If $\mathrm{Ran}\left\{\rho\right\}\subseteq\cF_{\mathrm{f}}^{(N)}$, then, for all $f,g\in\fh$,
	\begin{align*}
	 \left<f|\,\gamma_\rho g\right>= \frac{1}{N-1}\sum\limits_{k=1}^{\infty}\left<f\otimes\varphi_k 
	  |\,\Gamma_\rho(g\otimes\varphi_k)\right>,
	\end{align*}
	where $\left\{\varphi_k\right\}_{k=1}^{\infty}\subseteq\fh$ is an ONB. 
\item[iii)] Furthermore,
	\begin{align*}
	\rho=\left|\e(\varphi_1)\cdots\e(\varphi_N)\Omega\right>
	\left<\e(\varphi_1)\cdots\e(\varphi_N)\Omega\right|\ \ \Leftrightarrow \ \ 
	\gamma_\rho=\sum\limits_{i=1}^{N}\left|\varphi_i\right>\left<\varphi_i\right|
	\end{align*}
	and, in this case,
	\begin{align*}
	 \Gamma_\rho=\left(1-\mathrm{Ex}\right)\left(\gamma_\rho\otimes\gamma_\rho\right).
	\end{align*}
\end{itemize}
\end{lem}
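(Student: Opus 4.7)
The plan is to prove each of the three parts directly from the defining formulas (\ref{1pdm}), (\ref{2pdm}), the CAR (\ref{DefCAR}), cyclicity of the trace, and the identity $\widehat{\mathbbm{N}}=\sum_k\e_k\v_k$.

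For (i), positivity of $\gamma_\rho$ comes from the cyclic rewriting $\left<f|\gamma_\rho f\right>=\TRF{\rho\,\e(f)\v(f)}=\|\v(f)\rho^{1/2}\|_{\mathrm{HS}}^2\ge 0$, and the upper bound $\gamma_\rho\le\mathbbm{1}$ follows from the CAR estimate $\e(f)\v(f)\le\|f\|^2$. An analogous computation for $\Gamma_\rho$, after expanding $\Psi\in\fh\otimes\fh$ in an ONB, gives $\left<\Psi|\Gamma_\rho\Psi\right>=\|A_\Psi\rho^{1/2}\|_{\mathrm{HS}}^2$ with $A_\Psi:=\sum_{ij}\bar\Psi_{ij}\v(\varphi_i)\v(\varphi_j)$; the operator bound on $\Gamma_\rho$ then results from iterating the CAR estimate twice. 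Finally, picking an ONB $\{\varphi_k\}_{k\ge1}$ and interchanging sum and trace yields $\TRh{\gamma_\rho}=\TRF{\rho\,\widehat{\mathbbm{N}}}=\left<\widehat{\mathbbm{N}}\right>_\rho$ and, using the CAR identity $\sum_{k,l}\e_l\e_k\v_k\v_l=\widehat{\mathbbm{N}}(\widehat{\mathbbm{N}}-1)$, the analogous formula for $\TRhh{\Gamma_\rho}$. Finiteness of all expectations is ensured by the hypothesis $\left<\widehat{\mathbbm{N}}^2\right>_\rho<\infty$.

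For (ii), the identity is proved by a direct CAR manipulation. Starting from
\begin{align*}
\sum_{k=1}^{\infty}\left<f\otimes\varphi_k\,\big|\,\Gamma_\rho(g\otimes\varphi_k)\right>=\sum_{k=1}^{\infty}\TRF{\rho\,\e(\varphi_k)\e(g)\v(f)\v(\varphi_k)},
\end{align*}
one anticommutes $\v(f)\v(\varphi_k)=-\v(\varphi_k)\v(f)$ and applies $\{\e(g),\v(\varphi_k)\}=\left<\varphi_k|g\right>$. The $k$-sum then collapses to $\TRF{\rho\,(\widehat{\mathbbm{N}}-1)\e(g)\v(f)}$, and the support assumption $\Ran\{\rho\}\subseteq\cF_{\mathrm{f}}^{(N)}[\fh]$ replaces $\widehat{\mathbbm{N}}$ by $N$, producing the claimed prefactor $(N-1)$ in front of $\left<f|\gamma_\rho g\right>$.

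For (iii), the forward direction is a direct computation on the Slater determinant using (\ref{AufbauSD}) and the CAR. The converse is the main obstacle. Assuming $\gamma_\rho=\sum_{i=1}^N|\varphi_i\rangle\langle\varphi_i|$ is a rank-$N$ orthogonal projection, I would complete $\{\varphi_i\}_{i=1}^N$ to an ONB $\{\varphi_k\}_{k\ge1}$ of $\fh$ and exploit the saturation of the bounds $0\le\gamma_\rho\le\mathbbm{1}$: for $j\le N$ one has
\begin{align*}
\TRF{\rho\,\v(\varphi_j)\e(\varphi_j)}=\|\varphi_j\|^2-\left<\varphi_j|\gamma_\rho\varphi_j\right>=0,
\end{align*}
which combined with the manifest positivity of $\v(\varphi_j)\e(\varphi_j)$ forces $\e(\varphi_j)\rho^{1/2}=0$; for $k>N$ the analogous identity gives $\v(\varphi_k)\rho^{1/2}=0$. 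Any vector annihilated by all these operators must be proportional to $\e(\varphi_1)\cdots\e(\varphi_N)\Omega$, so $\Ran\{\rho\}$ is one-dimensional, and the normalization $\TRF{\rho}=1$ pins down $\rho$ uniquely. The formula $\Gamma_\rho=(1-\Ex)(\gamma_\rho\otimes\gamma_\rho)$ then follows from the forward computation on the resulting pure Slater state.
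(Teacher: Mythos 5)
The paper never actually proves this lemma --- it is introduced with ``easily proven'' and left to the reader --- so there is no authors' argument to measure yours against; the question is only whether your proof stands on its own. Most of it does, and it is the standard route: the positivity computations $\left<f|\,\gamma_\rho f\right>=\|\v(f)\rho^{1/2}\|_{\mathrm{HS}}^2$ and $\left<\Psi|\,\Gamma_\rho\Psi\right>=\|A_\Psi\rho^{1/2}\|_{\mathrm{HS}}^2$, the bound $\gamma_\rho\leq\mathbbm{1}$ from $\e(f)\v(f)\leq\|f\|^2$, the trace identities via $\sum_k\e_k\v_k=\widehat{\mathbbm{N}}$ and $\sum_{k,l}\e_l\e_k\v_k\v_l=\widehat{\mathbbm{N}}\big(\widehat{\mathbbm{N}}-1\big)$ (the sum/trace interchange being justified by monotone convergence for the positive $\rho$ at hand, in the same spirit as steps b), c), f) in the paper's proof of Theorem \ref{mthm3}), the normal-ordering computation behind (ii) with the factor $\widehat{\mathbbm{N}}-1$ collapsing to $N-1$ on $\cF_{\mathrm{f}}^{(N)}[\fh]$, and the converse in (iii) via saturation of $0\leq\gamma_\rho\leq\mathbbm{1}$ --- forcing $\e(\varphi_j)\rho^{1/2}=0$ for $j\leq N$ and $\v(\varphi_k)\rho^{1/2}=0$ for $k>N$, whose joint kernel is $\mathbbm{C}\cdot\e(\varphi_1)\cdots\e(\varphi_N)\Omega$ --- are all correct.

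The one step that does not work as written is the operator bound $\Gamma_\rho\leq\big<\widehat{\mathbbm{N}}\big>_{\rho}$. ``Iterating the CAR estimate twice'' on $A_\Psi=\sum_{ij}\overline{\Psi}_{ij}\v(\varphi_i)\v(\varphi_j)$ only gives $\|A_\Psi\|_{\mathrm{op}}\leq\sum_{ij}|\Psi_{ij}|$, an $\ell^1$-type bound that is not controlled by $\|\Psi\|$ and produces no factor $\big<\widehat{\mathbbm{N}}\big>_{\rho}$; likewise, bounding the top eigenvalue by the trace only yields the weaker $\Gamma_\rho\leq\big<\widehat{\mathbbm{N}}\big(\widehat{\mathbbm{N}}-1\big)\big>_{\rho}$. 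The missing ingredient is the number operator itself: apply the CAR norm bound only to the inner sum, $\big\|\sum_i\overline{\Psi}_{ij}\v(\varphi_i)\big\|_{\mathrm{op}}\leq\big(\sum_i|\Psi_{ij}|^2\big)^{1/2}$, and then Cauchy--Schwarz in $j$ together with $\sum_j\e_j\v_j=\widehat{\mathbbm{N}}$, which gives $\|A_\Psi\phi\|\leq\|\Psi\|\,\big<\phi\big|\,\widehat{\mathbbm{N}}\phi\big>^{1/2}$, i.e.\ $A_\Psi^*A_\Psi\leq\|\Psi\|^2\,\widehat{\mathbbm{N}}$ as quadratic forms; tracing this against $\rho$ (finite by $\big<\widehat{\mathbbm{N}}^2\big>_{\rho}<\infty$) yields $\left<\Psi|\,\Gamma_\rho\Psi\right>\leq\|\Psi\|^2\big<\widehat{\mathbbm{N}}\big>_{\rho}$. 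With that repair the proposal is complete.
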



\subsection{Hamiltonian and Ground State Energy}
Recall from (\ref{Ham1}) that the Hamiltonian of an atom or molecule is given by 
\begin{align}
H^{(N)}(\underline{Z},\underline{R}):=\sum\limits_{n=1}^{N}\left(-\Delta_{x_{n}}-
\sum\limits_{k=1}^{K}\frac{Z_{k}}{\left|x_{n}-R_{k}\right|}\right)+\sum\limits_{1\leq n<m\leq N}
\frac{1}{\left|x_{n}-x_{m}\right|}. \label{Hamiltonian}
\end{align}
Choosing an ONB $\left\{\varphi_k\right\}_{k=1}^{\infty}\subseteq \fh=\text{L}^2\left(\mathbbm{R}^3\times
\mathbbm{Z}_2\right)$ such that $\left\{\varphi_k\right\}_{k=1}^{\infty}\subseteq \text{H}^2
\left(\mathbbm{R}^3\times\mathbbm{Z}_2\right)$, where $\text{H}^2\left(\mathbbm{R}^3\times\mathbbm{Z}_2\right)$ 
denotes the Sobolev space, we define 
\begin{align*}
 &h_{kl}:=\left<\varphi_{k}\Bigg|\left(-\Delta_x-\sum\limits_{k=1}^{K}\frac{Z_k}{\left|x-R_k\right|}\right)
\varphi_l \right>,\\
 &V_{kl;mn}:=\left< \varphi_{k} \otimes \varphi_{l} \bigg| \frac{1}{\left|x-y\right|}\left(\varphi_m \otimes 
\varphi_n \right) \right>,
\end{align*}
and
\begin{align*}
\widehat{\mathbbm{H}}:=\sum\limits_{k,l=1}^{\infty}h_{kl}\,\e_{k}\v_{l}+\sum\limits_{k,l,m,n=1}^{\infty}
V_{kl;mn}\,\e_{l}\e_{k}\v_{m}\v_{n}.
\end{align*}
Stability of matter ensures that $\widehat{\mathbbm{H}}+\mu \widehat{\mathbbm{N}}$ is a semibounded self-adjoint 
operator, provided $\mu<\infty$ is sufficiently large. Moreover, the Hamiltonian
of an atom or molecule can be viewed as
\begin{align*}
 H^{(N)}(\underline{Z},\underline{R})=
\widehat{\mathbbm{H}}\big|_{\cF_{\mathrm{f}}^{(N)}\left[\fh\right]},
\end{align*}
i.e., $H^{(N)}(\underline{Z},\underline{R})$ is the restriction of 
$\widehat{\mathbbm{H}}$ to $\cF_{\mathrm{f}}^{(N)}\left[\fh\right]$.
\newline
The ground state energy can now be reexpressed as
\begin{align*}
 \nonumber E_{\text{gs}}^{(N)}(\underline{Z},\underline{R})&=\inf\left\{  
\text{tr}_{\cF}\left\{\rho^{\frac{1}{2}}\widehat{\mathbbm{H}}\rho^{\frac{1}{2}}\right\} \Big|\  
\rho\in\cL_{+}^1(\cF),\ \widehat{\mathbbm{N}}\,\rho=N\rho,\ \TRF{\rho}=1\right\}\\
    &=\inf\left\{\mathcal{E}\left(\gamma_\rho,\Gamma_\rho\right) \Big|\ \rho\in\cL_{+}^1(\cF),\ 
\widehat{\mathbbm{N}}\,\rho=N\rho,\ \TRF{\rho}=1 \right\},
\end{align*}
where the energy functional is defined as
\begin{align*}
 \mathcal{E}\left(\gamma_\rho,\Gamma_\rho\right):=\text{tr}_1\left\{h\gamma_\rho\right\}+\frac{1}{2}\,
\text{tr}_2\left\{V\Gamma_\rho\right\}.
\end{align*}
We call $\left(\gamma,\Gamma\right)\in\cB(\fh)\times\cB(\fh\otimes\fh)$ 
{\underline{$N$-representable}} if there exists a density matrix $\rho\in\cL_+^1(\cF)$ 
with $\widehat{\mathbbm{N}}\,\rho=N\rho\ \text{and}\ \text{tr}_{\cF}\left\{\rho\right\}=1$
such that $\gamma=\gamma_\rho$ and $\Gamma=\Gamma_\rho$. Using the notion of $N$-representability, 
the ground state energy can be rewritten as
\begin{align*}
 E_{\text{gs}}^{(N)}=\inf\left\{\mathcal{E}\left(\gamma,\Gamma\right)\Big|\ 
\left(\gamma,\Gamma\right)\ \text{is}\ N\text{-representable}\right\}. 
\end{align*}
By Lemma 2.1, we have that
\begin{align*}
 E_{\text{hf}}^{(N)}=\inf\left\{\mathcal{E}\Big(\gamma,(1-\text{Ex})
(\gamma\otimes\gamma)\Big)\Big|\ \gamma=\gamma^*=\gamma^2,\ \text{tr}_1\left\{\gamma\right\}=N\right\},
\end{align*}
and Lieb's variational principle \cite{EHL, VBA} ensures that actually 
\begin{align*}
 E_{\text{hf}}^{(N)}=\inf\left\{\mathcal{E}\Big(\gamma,(1-\text{Ex})
(\gamma\otimes\gamma)\Big)\Big|\ 0\leq \gamma\leq 1,\ \text{tr}_1\left\{\gamma\right\}=N\right\}. 
\end{align*}


\section{G-, P- and Q-Conditions}

In this section we derive necessary conditions on $\left(\gamma,\Gamma\right)$ to be $N$-representable. 
To this end, we assume $N\in\mathbbm{N}$, $\gamma\in\cL^1(\fh)$ with $0\leq \gamma\leq 1$ 
and $\TRh{\gamma}=N$, $\Gamma\in\cL^1(\fh\otimes\fh)$, $\mathrm{Ex}\,\Gamma=\Gamma\,\mathrm{Ex}=-\,\Gamma$,
and we call $\left(\gamma, \Gamma\right)$ admissible in this case.
\begin{itemize}
\item[(P)] $\left(\gamma,\Gamma\right)$ fulfills the P-Condition
    \begin{align}
     :\Leftrightarrow \Gamma\geq 0.\label{PBed} 
     \end{align}
\item[(G)] $\left(\gamma,\Gamma\right)$ fulfills the G-Condition
    \begin{align}
   :\Leftrightarrow\forall A\in\cB(\fh):\ \TRhh{\left(A^*\otimes A\right)\left(\Gamma+\mathrm{Ex}
    \left(\gamma\otimes \mathbbm{1}\right)\right)}\geq\left|\TRh{A\gamma}\right|^2. \label{GBed}
    \end{align}
\item[(Q)] $\left(\gamma,\Gamma\right)$ fulfills the Q-Condition
    \begin{align}
 :\Leftrightarrow \Gamma+(1-\mathrm{Ex})(\mathbbm{1}\otimes\mathbbm{1}-\gamma\otimes\mathbbm{1}
    -\mathbbm{1}\otimes\gamma)\geq 0.\label{QBed}
    \end{align}
\end{itemize}

Our main result of this section is
\begin{thm}\label{mthm3}
 Let $\rho\in\cL^{1}(\cF)$ (not necessarily positive) such that $\TRF{\rho}=1$, 
$\TRF{\left|\rho\right|^{\frac{1}{2}}\widehat{\mathbbm{N}}^2\left|\rho\right|^{\frac{1}{2}}}<\infty$,
and that $\rho$ preserves the particle number, i.e., $\left[\widehat{\mathbbm{N}},\rho\right]=0$. Define $\gamma_\rho$ and $\Gamma_\rho$ by (\ref{1pdm}) and (\ref{2pdm}), respectively, and 
let $\left\{\vv_k\right\}_{k=1}^\infty\subseteq\fh$ be an ONB. 
Then the following two statements are equivalent.
\begin{itemize}
\item[(i)] If $\cP_r\in\cB(\cF)$ is a polynomial in $
    \left\{\e_k,\v_k\right\}_{k=1}^{\infty}$ of degree $r\leq 2$, then 
	\begin{align}
	\TRF{\rho\,\cP_r^*\cP_r}\geq 0.
	\end{align}
\item[(ii)] $\left(\gamma_\rho,\Gamma_\rho\right)$ is admissible and fulfills the G-, P- and 
	Q-Conditions.
\end{itemize}
\end{thm}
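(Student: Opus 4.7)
The plan is to reduce (i) $\Leftrightarrow$ (ii) to a decomposition of $\cP_r$ into its homogeneous components of definite particle-number change. After applying the CAR (\ref{DefCAR}) to move every $\v$-operator to the right of every $\e$-operator, any polynomial of degree $\le 2$ can be written as $\cP_r = c_0\mathbbm{1} + \cP^{(+1)} + \cP^{(-1)} + \cP^{(+2)} + \cP^{(-2)} + \cP^{(0)}$, where $\cP^{(+1)} = \sum_k \alpha_k\e_k$, $\cP^{(-1)} = \sum_k \beta_k\v_k$, $\cP^{(+2)}$ is bilinear in creation operators, $\cP^{(-2)}$ is bilinear in annihilation operators, and $\cP^{(0)} = \sum_{k,l} A_{kl}\e_k\v_l$. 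Because $[\widehat{\mathbbm{N}},\rho] = 0$, a product $(\cP^{(j)})^*\cP^{(k)}$ has non-zero trace against $\rho$ only for $j = k$, and the scalar $c_0$ couples only with $\cP^{(0)}$. Thus (i) is equivalent to a set of four independent positivity statements: one for each of the channels $k = \pm 1$, $k = \pm 2$, and the coupled $(c_0,\cP^{(0)})$ channel.

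In the degree-$1$ channels, (\ref{1pdm}) directly gives $\mathrm{tr}_{\cF}\{\rho(\cP^{(-1)})^*\cP^{(-1)}\} = \langle\beta | \gamma_\rho\beta\rangle$ and, after one CAR move, $\mathrm{tr}_{\cF}\{\rho(\cP^{(+1)})^*\cP^{(+1)}\} = \langle\alpha | (\mathbbm{1}-\gamma_\rho)\alpha\rangle$, so these two channels together are equivalent to $0\le\gamma_\rho\le\mathbbm{1}$. In the P-channel, (\ref{2pdm}) yields $\mathrm{tr}_{\cF}\{\rho(\cP^{(-2)})^*\cP^{(-2)}\} = \langle w | \Gamma_\rho w\rangle$ for an antisymmetric two-particle vector $w\in\fh\otimes\fh$ built from the coefficients of $\cP^{(-2)}$; combined with the antisymmetry $\mathrm{Ex}\,\Gamma_\rho = -\Gamma_\rho$ (automatic from the CAR), this forces $\Gamma_\rho$ to vanish on symmetric vectors, and so the P-channel is equivalent to (\ref{PBed}). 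The Q-channel is analogous: four CAR moves normal-order $\v_l\v_k\e_m\e_n$, and the one-body and scalar corrections that emerge reassemble, using $\mathrm{Ex}$-antisymmetry, into $(\mathbbm{1}-\mathrm{Ex})(\mathbbm{1}\otimes\mathbbm{1} - \gamma_\rho\otimes\mathbbm{1} - \mathbbm{1}\otimes\gamma_\rho)$, yielding (\ref{QBed}).

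The G-channel couples $\cQ := \cP^{(0)}$ with the scalar $c_0 = \lambda$. One CAR move gives
\[
\mathrm{tr}_{\cF}\{\rho\,\cQ^*\cQ\} = \mathrm{tr}_2\{(A^*\otimes A)(\Gamma_\rho + \mathrm{Ex}(\gamma_\rho\otimes\mathbbm{1}))\},\qquad \mathrm{tr}_{\cF}\{\rho\,\cQ\} = \mathrm{tr}_1\{A\gamma_\rho\},
\]
so non-negativity of the Hermitian form $|\lambda|^2 + 2\,\mathrm{Re}(\bar\lambda\,\mathrm{tr}_1\{A\gamma_\rho\}) + \mathrm{tr}_{\cF}\{\rho\,\cQ^*\cQ\}$ for every $\lambda\in\CC$ is equivalent, via the standard $2\times 2$ positivity criterion, to $\mathrm{tr}_2\{(A^*\otimes A)(\Gamma_\rho + \mathrm{Ex}(\gamma_\rho\otimes\mathbbm{1}))\}\ge|\mathrm{tr}_1\{A\gamma_\rho\}|^2$, which is precisely (\ref{GBed}). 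Admissibility of $(\gamma_\rho,\Gamma_\rho)$ — trace-class property together with antisymmetry $\mathrm{Ex}\,\Gamma_\rho = \Gamma_\rho\,\mathrm{Ex} = -\Gamma_\rho$ — follows from the hypothesis $\mathrm{tr}_{\cF}\{|\rho|^{1/2}\widehat{\mathbbm{N}}^2|\rho|^{1/2}\}<\infty$ and from the CAR applied directly to (\ref{2pdm}).

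The main obstacle I anticipate is the bookkeeping in the G- and Q-channels. The definition (\ref{2pdm}) introduces a nontrivial reversal between the order of $(f_1,f_2)$ in the bra and of $(\v(f_1),\v(f_2))$ in the Fock-space trace, so the exchange operator $\mathrm{Ex}$ appears every time a four-operator Fock trace is translated into a $\Gamma_\rho$-matrix element. Consistently tracking this $\mathrm{Ex}$ through the several CAR moves, and verifying that the one-body and constant corrections in the Q-channel reassemble into exactly $(\mathbbm{1}-\mathrm{Ex})(\mathbbm{1}\otimes\mathbbm{1}-\gamma_\rho\otimes\mathbbm{1}-\mathbbm{1}\otimes\gamma_\rho)$ rather than some cosmetically different combination, is the delicate step of the proof.
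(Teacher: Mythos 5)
Your algebraic core coincides with the paper's own finite/algebraic step (its Lemma \ref{lem3}): the decomposition of a degree-$\le 2$ polynomial into particle-number channels that decouple under $\left[\widehat{\mathbbm{N}},\rho\right]=0$, the identification of the $\pm1$ channels with $0\le\gamma_\rho\le\mathbbm{1}$, of the $-2$ and $+2$ channels with (\ref{PBed}) and (\ref{QBed}), and of the scalar-coupled number-preserving channel with (\ref{GBed}) via optimization over the scalar, is exactly the paper's argument, and your identity $\TRF{\rho\,\cQ^*\cQ}=\TRhh{\left(A^*\otimes A\right)\left(\Gamma_\rho+\mathrm{Ex}\left(\gamma_\rho\otimes\mathbbm{1}\right)\right)}$ checks out. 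The direction \textit{(ii)} $\Rightarrow$ \textit{(i)} is therefore complete, since a polynomial involves only finitely many modes.

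The genuine gap is in \textit{(i)} $\Rightarrow$ \textit{(ii)}: what you obtain from polynomials are the inequalities tested only on finite linear combinations of basis vectors and on finite-rank $A$ built in the given ONB, whereas statement \textit{(ii)} asserts admissibility (in particular $\gamma_\rho\in\cL^1(\fh)$ with $\TRh{\gamma_\rho}=N$, $\Gamma_\rho\in\cL^1(\fh\otimes\fh)$), the P- and Q-Conditions as operator inequalities on all of $\fh\otimes\fh$, and the G-Condition (\ref{GBed}) for \emph{every} $A\in\cB(\fh)$. This passage is precisely the content of the paper's proof of the theorem beyond its Lemma \ref{lem3}, and your proposal compresses it into one unsupported sentence. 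Concretely: (a) to extend the quadratic-form inequalities from the algebraic span to all vectors one must first show $\|\gamma_\rho\|_\mathrm{op},\|\Gamma_\rho\|_\mathrm{op}<\infty$ (Cauchy--Schwarz against $\TRF{\left|\rho\right|}$); (b) trace-class membership and $\TRh{\gamma_\rho}=N$ require the number-operator moment hypothesis together with monotone convergence, which is only available \emph{after} positivity has been extended; (c) most seriously, the step from $A_M:=P_MAP_M$ to a general bounded $A$ in (\ref{GBed}) is not a routine continuity argument, because the form $A\mapsto\TRhh{\left(A^*\otimes A\right)\left(\Gamma_\rho+\mathrm{Ex}\left(\gamma_\rho\otimes\mathbbm{1}\right)\right)}-\left|\TRh{A\gamma_\rho}\right|^2$ is quadratic and its error terms must be controlled using the already-established P-Condition $\Gamma_\rho\ge0$, the partial-trace identity $\TRhh{\left(P_M^\perp\otimes\mathbbm{1}\right)\Gamma_\rho}=\left(N-1\right)\TRh{P_M^\perp\gamma_\rho}$, and the tail estimate $\TRh{P_M^\perp\gamma_\rho}\to0$, which itself presupposes $\gamma_\rho\in\cL^1$ with trace $N$. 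Without supplying these limiting arguments you have proved the finite-dimensional analogue, not the theorem as stated.
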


Before we turn to the proof of Theorem \ref{mthm3}, we establish its finite-dimensional analogue in Lemma \ref{lem3}
below. Theorem \ref{mthm3} then follows from Lemma \ref{lem3} by a limiting argument.

\begin{lem}\label{lem3}
 Let $\rho\in\cL^{1}(\cF)$ (not necessarily positive) such that $\TRF{\rho}=1$, 
$\TRF{\left|\rho\right|^{\frac{1}{2}}\widehat{\mathbbm{N}}^2\left|\rho\right|^{\frac{1}{2}}}<\infty$,
and that $\rho$ preserves the particle number, i.e., $\left[\widehat{\mathbbm{N}},\rho\right]=0$. 
Define $\gamma_\rho$ and $\Gamma_\rho$ by (\ref{1pdm}) and (\ref{2pdm}), respectively, and 
let $\left\{\vv_k\right\}_{k=1}^\infty\subseteq\fh$ be an ONB. Then the following statements are equivalent.
\begin{itemize}
\item[(i)] If $\cP_r\in\cB(\cF)$ is a polynomial in 
    $\left\{\e_k,\v_k\right\}_{k=1}^{\infty}$ of degree $r\leq 2$, then 
    \begin{align}
	\TRF{\rho\,\cP_r^*\cP_r}\geq 0 \label{posi}. 
    \end{align}
\item[(ii)] For any $\phi\in\mathrm{span}\left\{\vv_k|\ k\in\mathbbm{N}\right\}$,
      $\Psi\in\mathrm{span}\left\{\vv_k\otimes\vv_l|\ k,l\in\mathbbm{N}\right\}$, we have
      \begin{align}
	0\leq \left<\phi|\, \gamma_\rho\phi\right>\leq 1,\phantom{.............................} \label{preas}\\
	\left<\Psi |\, \Gamma_\rho \Psi\right>\geq 0,\phantom{..............................} \label{psdp}\\
	\left<\Psi |\left(\Gamma_\rho +(1-\mathrm{Ex})(\mathbbm{1}
	\otimes\mathbbm{1}-\gamma_\rho\otimes\mathbbm{1}-\mathbbm{1}\otimes\gamma_\rho)\right)\Psi\right>\geq 0,
	\label{psdq}
      \end{align}
    and, for all $A:=\sum\limits_{k,l=1}^{M}\alpha_{kl}\left|\vv_k\right>\left<\vv_l\right|$, $M<\infty$,
    $\left(\alpha_{kl}\right)_{k,l=1}^M\in\mathbbm{C}^{M\times M}$,
    \begin{align}
        \TRhh{\left(A^*\otimes A\right)\left(\Gamma_\rho+\mathrm{Ex}
    \left(\gamma_\rho\otimes \mathbbm{1}\right)\right)}\geq\left|\TRh{A\gamma_\rho}\right|^2. \label{psdg}
    \end{align}

\end{itemize}
\end{lem}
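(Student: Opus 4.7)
The whole argument rests on exploiting the number-conservation $[\widehat{\mathbbm{N}},\rho]=0$. Writing a general degree-$\leq 2$ polynomial
\begin{equation*}
\cP_r \;=\; \lambda\,\mathbbm{1}\,+\,\sum_k\alpha_k\e_k\,+\,\sum_k\beta_k\v_k\,+\,\sum_{k,l}A_{kl}\e_k\e_l\,+\,\sum_{k,l}B_{kl}\v_k\v_l\,+\,\sum_{k,l}C_{kl}\e_k\v_l
\end{equation*}
and grading it by the number-shift $n\in\{-2,-1,0,+1,+2\}$, the identity $[\widehat{\mathbbm{N}},\rho]=0$ kills all cross-terms in $\cP_r^*\cP_r$ that change the particle number, so
\begin{equation*}
\TRF{\rho\,\cP_r^*\cP_r}\;=\;\sum_{n=-2}^{2}\TRF{\rho\,\cP_{(n)}^*\cP_{(n)}},
\end{equation*}
where $\cP_{(0)}=\lambda\mathbbm{1}+\sum C_{kl}\e_k\v_l$ groups the constant together with the bilinear $\e\v$ sector. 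Hence (i) is equivalent to the positivity of each of the five summands separately, and the plan is to identify these five positivities with the five ingredients of (ii).

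I would handle the easy sectors first. For $n=\pm 1$, a single CAR move (\ref{DefCAR}) and the definition \eqref{1pdm} give
\begin{equation*}
\TRF{\rho\,\cP_{(+1)}^*\cP_{(+1)}}=\|\alpha\|^2-\langle\alpha|\gamma_\rho\alpha\rangle,\qquad\TRF{\rho\,\cP_{(-1)}^*\cP_{(-1)}}=\langle\beta|\gamma_\rho\beta\rangle,
\end{equation*}
with $\alpha=\sum\alpha_k\vv_k$, $\beta=\sum\beta_k\vv_k$, so joint positivity for all finite $\alpha,\beta$ is exactly \eqref{preas}. For $n=-2$, the definition \eqref{2pdm} unpacks directly to
\begin{equation*}
\TRF{\rho\,\cP_{(-2)}^*\cP_{(-2)}}=\langle\Psi|\Gamma_\rho\Psi\rangle,\qquad\Psi=\sum_{k,l}B_{kl}\,\vv_k\otimes\vv_l,
\end{equation*}
giving \eqref{psdp}.

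The two substantial sectors are $n=+2$ and $n=0$. For $n=+2$, I expand $\v_j\v_i\e_k\e_l$ by four uses of the CAR and collect the seven resulting terms. After taking $\TRF{\rho\,\cdot\,}$ and identifying the $\delta\gamma$-terms with the matrix elements of $\mathbbm{1}\otimes\mathbbm{1}$, $\gamma_\rho\otimes\mathbbm{1}$, $\mathbbm{1}\otimes\gamma_\rho$ and their $\Ex$-twins in the basis $\{\vv_k\otimes\vv_l\}$, and using the antisymmetry $\Ex\Gamma_\rho=-\Gamma_\rho$ to symmetrize the genuine two-particle block, the computation yields
\begin{equation*}
\TRF{\rho\,\cP_{(+2)}^*\cP_{(+2)}}=\bigl\langle\Psi\bigm|\bigl(\Gamma_\rho+(1-\Ex)(\mathbbm{1}\otimes\mathbbm{1}-\gamma_\rho\otimes\mathbbm{1}-\mathbbm{1}\otimes\gamma_\rho)\bigr)\Psi\bigr\rangle,
\end{equation*}
which is \eqref{psdq}. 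For $n=0$, setting $A=\sum C_{kl}|\vv_k\rangle\langle\vv_l|$ gives $\TRF{\rho\sum C_{kl}\e_k\v_l}=\TRh{A\gamma_\rho}$ for the linear cross terms, while two CAR moves in $\sum\bar C_{ij}C_{kl}\e_j\v_i\e_k\v_l$ together with the elementary identity $\TRh{A^*A\gamma_\rho}=\TRhh{(A^*\otimes A)\Ex(\gamma_\rho\otimes\mathbbm{1})}$ produce
\begin{equation*}
\TRF{\rho\,\cP_{(0)}^*\cP_{(0)}}=|\lambda|^2+2\rRe\!\bigl(\bar\lambda\,\TRh{A\gamma_\rho}\bigr)+\TRhh{(A^*\otimes A)\bigl(\Gamma_\rho+\Ex(\gamma_\rho\otimes\mathbbm{1})\bigr)}.
\end{equation*}
This is quadratic in the single scalar $\lambda$, so minimizing over $\lambda=-\overline{\TRh{A\gamma_\rho}}$ shows that positivity for all $\lambda$ is equivalent to \eqref{psdg}. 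Combining the five sector-wise equivalences closes (i)$\Leftrightarrow$(ii).

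The main obstacle I expect is the bookkeeping in the $n=+2$ sector: producing the right signs and index-pairings so that all six ``contact'' terms arrange into the combination $(1-\Ex)(\mathbbm{1}\otimes\mathbbm{1}-\gamma_\rho\otimes\mathbbm{1}-\mathbbm{1}\otimes\gamma_\rho)$. The subtle auxiliary step is to use the antisymmetry of $\Gamma_\rho$ in the $n=0$ computation to replace $\sum\bar C_{ij}C_{kl}\langle\vv_i\otimes\vv_l|\Gamma_\rho(\vv_k\otimes\vv_j)\rangle$ by the symmetrically-indexed $\TRhh{(A^*\otimes A)\Gamma_\rho}$, without which the G-condition does not appear in the correct form.
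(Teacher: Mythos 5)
Your proposal is correct and follows essentially the same route as the paper: the same families of test polynomials sector by sector, the same CAR normal-ordering computations identifying each block with the P-, Q-, G-conditions and $0\leq\gamma_\rho\leq\mathbbm{1}$, and the same completion of the square over the scalar in the number-conserving block to extract the G-condition. Your bookkeeping—grading by particle-number shift so that $[\widehat{\mathbbm{N}},\rho]=0$ kills all cross terms and both implications follow from one decomposition, with the constant grouped into the $\e\v$ block—is in fact slightly tidier than the paper's part f) (which groups the constant with the linear terms) and also avoids the paper's case distinction $\TRh{A}=0$ versus $\TRh{A}\neq 0$.
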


\begin{proof}
First we show {\textit{(i)}} $\Rightarrow$ {\textit{(ii)}}. 
The properties (\ref{preas})-(\ref{psdg}) of $\left(\gamma_\rho,\Gamma_\rho\right)$ can 
be checked by suitable choices of $\cP_r$.
\begin{itemize}
 \item[a)] The first inequality of (\ref{preas}) follows by choosing $\cP_1:=\sum\limits_{i=1}^M\overline{\alpha}_i \v_i$,
	where $\alpha_i\in\mathbbm{C}$ and $M<\infty$:
    \begin{align}
	\nonumber 0\leq \TRF{\rho\,\cP_1^*\cP_1}&=\sum\limits_{i,j=1}^M\alpha_i\overline{\alpha}_j\TRF{\rho\,\e_i \v_j}\\
	 &=\sum\limits_{j=1}^M\sum\limits_{i=1}^M\left<\alpha_j\varphi_j |\, \gamma_\rho(
		  \alpha_i\varphi_i)\right>=\left<\phi_M\big|\,\gamma_\rho\phi_M\right>, \label{g1}
    \end{align}
    with $\phi_M:=\sum\limits_{i=1}^{M}\alpha_i\varphi_i
    \in\mathrm{span}\left\{\vv_k|\ k\in\mathbbm{N}\right\}$. 
    The second inequality derives from the CAR and $\cP_1:=\sum\limits_{i=1}^{M}\alpha_i\e_i$:
    \begin{align}
	\nonumber 0\leq \TRF{\rho\,\cP_1^*\cP_1}&=\sum\limits_{i,j=1}^M\overline{\alpha}_i\alpha_j\TRF{\rho\,\v_i \e_j}\\
	\nonumber &=\sum\limits_{i,j=1}^M\overline{\alpha}_i\alpha_j\TRF{\rho\,(\delta_{ij}-\e_j\v_i)}\\
	\nonumber &=\sum\limits_{i=1}^M\sum\limits_{j=1}^M\left<\alpha_i\varphi_i |\,(\mathbbm{1}-\gamma_\rho) 
		  (\alpha_j\varphi_j)\right>\\
	&=\left<\phi_M\big|\,(\mathbbm{1}-\gamma_\rho)\phi_M\right>. \label{g2}
    \end{align}
\item[b)] Property (\ref{psdg}) is obtained by choosing $\cP_{\text{2}}:=\mu+\frac{1}{2}\sum\limits_{k,l=1}^M\alpha_{kl}
  \left(\e_k\v_l-\v_l\e_k\right)$ with $\mu,\alpha_{kl}\in\mathbbm{C},\ M<\infty$ and calculating $\TRF{\rho\,\cP_2^*\cP_2}$:
	\begin{align}
	\nonumber 0&\leq\text{tr}_{\cF}\bigg\{\rho\bigg(\mu+\frac{1}{2}\sum\limits_{k,l=1}^M\alpha_{kl}
		  (\e_k\v_l-\v_l\e_k)\bigg)^*\\
	\nonumber&\phantom{...............................................}\times\bigg(\mu+\frac{1}{2}\sum\limits_{m,n=1}^M\alpha_{mn}
		  (\e_m\v_n-\v_n\e_m)\bigg)\bigg\}\\
	\nonumber &=\text{tr}_{\cF}\bigg\{\rho\bigg(\frac{1}{2}\sum\limits_{k,l=1}^M\alpha_{kl}
		  (\e_k\v_l-\v_l\e_k)\bigg)^*\bigg(\frac{1}{2}\sum\limits_{m,n=1}^M\alpha_{mn}
		  (\e_m\v_n-\v_n\e_m)\bigg)\bigg\}\\
	\nonumber &\quad+2\,\mathfrak{Re}\bigg\{\overline{\mu}\,\text{tr}_{\cF}\bigg\{\frac{1}{2}\,\rho
		  \sum\limits_{k,l=1}^M\alpha_{kl}(\e_k\v_l-\v_l\e_k)\bigg\}\bigg\}+\left|\mu\right|^2.
	\end{align}
	Now we expand the brackets and use the CAR to reorder the annihilation and creation operators:
		\begin{align}
	\nonumber 0&\leq\sum\limits_{k,l,m,n=1}^M\overline{\alpha}_{kl}\alpha_{mn}\text{tr}_{\cF}
		  \bigg\{\rho\bigg(-\e_l\e_m\v_k\v_n+\delta_{km}\e_l\v_n-\frac{1}{2}\delta_{kl}\e_m\v_n\\
	\nonumber &\phantom{......................}-\frac{1}{2}\delta_{mn}\e_l\v_k+\frac{1}{4}\delta_{kl}
		  \delta_{mn}\bigg)\bigg\}\\
	\nonumber &\quad +2\,\mathfrak{Re}\bigg\{\overline{\mu}\sum\limits_{k,l=1}^M\alpha_{kl}\TRF{\rho\bigg(
		  \e_k\v_l-\frac{1}{2}\delta_{kl}\bigg)}\bigg\}+\left|\mu\right|^2.
	\end{align}
	Bearing
	$\alpha_{kl}=\left<\varphi_k |\,A\varphi_l\right>$ and $\TRF{\rho}=1$ in mind, we derive from the definitions of $\Gamma_\rho$ and $\gamma_\rho$ that
	\begin{align}
	\nonumber 0&\leq\sum\limits_{k,l,m,n=1}^M\bigg<\varphi_k\otimes\varphi_n\bigg|\bigg(-\Gamma_\rho+\mathbbm{1}
		  \otimes\gamma_\rho-\frac{1}{2}\mathrm{Ex}\left(\gamma_\rho\otimes\mathbbm{1}\right)
		  -\frac{1}{2}\mathrm{Ex}\left(\mathbbm{1}\otimes\gamma_\rho\right)\\
	\nonumber &\phantom{......................}+\frac{1}{4}\mathrm{Ex}\left(\mathbbm{1}\otimes\mathbbm{1}\right)
		  \bigg)(\varphi_m\otimes\varphi_l)\bigg>\\
	\nonumber &\phantom{..........................................................}\times\left<\varphi_l\otimes\varphi_m\big|
		  \left(A^*\otimes A\right)\left(\varphi_k\otimes\varphi_n\right)\right>\\
	\nonumber &\quad+2\,\mathfrak{Re}\left\{\overline{\mu}\,\sum\limits_{k,l=1}^M
		  \left<\varphi_k|\,A\varphi_l\right>\left<\varphi_l|\,\gamma_\rho\varphi_k\right>-\frac{\overline{\mu}}{2}
		  \sum\limits_{k=1}^M\left<\varphi_k|\,A\varphi_k\right>\right\}+\left|\mu\right|^2.
	\end{align}
	We can now perform the summations and arrive at
	\begin{align}
	\nonumber 0&\leq\mathrm{tr}_2\bigg\{\mathrm{Ex}\left(A^*\otimes A\right)\bigg(-\Gamma_\rho+\mathbbm{1}\otimes\gamma_\rho\\
	\nonumber&\phantom{......................}+\frac{1}{2}\mathrm{Ex}\bigg(\frac{1}{2}\mathbbm{1}\otimes\mathbbm{1}
		  -\mathbbm{1}\otimes\gamma_\rho-\gamma_\rho\otimes\mathbbm{1}\bigg)\bigg)\bigg\}\\
	\nonumber&\quad+2\,\mathfrak{Re}\left\{\overline{\mu}\,\TRh{A\gamma_\rho}-\frac{\overline{\mu}}{2}\,\TRh{A}\right\}
		  +\left|\mu\right|^2\\
	\nonumber&=\mathrm{tr}_2\bigg\{(A^*\otimes A)\bigg(\Gamma_\rho+(\mathbbm{1}\otimes\gamma_\rho)
		  \mathrm{Ex}+\frac{1}{4}\mathbbm{1}\otimes\mathbbm{1}-\frac{1}{2}(\mathbbm{1}\otimes\gamma_\rho
		  +\gamma_\rho\otimes\mathbbm{1})\bigg)\bigg\}\\
	&\quad+2\,\mathfrak{Re}\left\{\overline{\mu}\,\TRh{A\gamma_\rho}-\frac{\overline{\mu}}{2}\,\TRh{A}\right\}
		  +\left|\mu\right|^2. \label{Wahlmu}
	\end{align}
	Defining $s\in\mathbbm{C}$ by $\mu=:\left(s+\frac{1}{2}\right)\TRh{A}$, the inequality can be rewritten as
	\begin{align}
	\nonumber 0&\leq\text{tr}_{\cF}\bigg\{\rho\bigg(\mu+\frac{1}{2}\sum\limits_{k,l=1}^M\alpha_{kl}
		  (\e_k\v_l-\v_l\e_k)\bigg)^*\\
	\nonumber&\phantom{..............................}\times\bigg(\mu+\frac{1}{2}\sum\limits_{m,n=1}^M\alpha_{mn}
		  (\e_m\v_n-\v_n\e_m)\bigg)\bigg\}\\
	 &=\mathrm{tr}_{2}\Big\{(A^*\otimes A)\Big(\Gamma_\rho+\left|s
		  \right|^2\mathbbm{1}\otimes\mathbbm{1}
	+\overline{s}\,\mathbbm{1}\otimes\gamma_\rho+s\,
	\gamma_\rho\otimes\mathbbm{1}+(\mathbbm{1}\otimes\gamma_\rho)\mathrm{Ex}
		\Big)\Big\}. \label{notopt}
	\end{align}
	This inequality is valid for all $s$. We first assume that $\TRh{A}=\sum\limits_{i=1}^{M}\alpha_{ii}\neq 0$, 
	then  the choice $s:=-\,\frac{\TRh{A\gamma}}{\TRh{A}}$ optimizes the inequality. The conclusion is (\ref{psdg}):
	\begin{align}
	\TRhh{\left(A^*\otimes A\right)\left(\Gamma_\rho+\mathrm{Ex}\left(\gamma_\rho\otimes\mathbbm{1}\right)\right)}
	-\left|\TRh{A\gamma_\rho}\right|^2\geq 0. \label{GB}
	\end{align}
	Note that (\ref{notopt}) and (\ref{GB}) are equivalent because (\ref{GB}) implies (\ref{notopt}) by 
	\begin{align*}
	 \nonumber &\TRhh{\left(A^*\otimes A\right)\left(\Gamma_\rho+\mathrm{Ex}\left(\gamma_\rho\otimes\mathbbm{1}\right)\right)}
			  \geq\left|\TRh{A\gamma_\rho}\right|^2\\
	&\qquad\geq -\TRhh{\left(A^*\otimes A\right)\left(\left|s\right|^2\mathbbm{1}\otimes\mathbbm{1}
	+\overline{s}\,\mathbbm{1}\otimes\gamma_\rho+s\,\gamma_\rho\otimes\mathbbm{1}\right)}.
	\end{align*}
	Conversely, if $\TRh{A}=0$, the choice $\mu=
	-\,\TRh{A\gamma}$ in (\ref{Wahlmu}) leads directly to (\ref{GB}).
 \item[c)] Inserting $\cP_{\text{2}}:=\sum\limits_{k,l=1}^M\overline{\alpha}_{kl}\v_k\v_l$ into (\ref{posi}) yields inequality (\ref{psdp}):
	\begin{align*}
	\nonumber 0&\leq \text{tr}_{\cF}\bigg\{\rho\bigg(\sum\limits_{k,l=1}^M\overline{\alpha}_{kl}\v_k\v_l\bigg)^*
	\bigg(\sum\limits_{m,n=1}^M\overline{\alpha}_{mn}\v_m\v_n\bigg)\bigg\}\\
	\nonumber &=\sum\limits_{k,l,m,n=1}^M\alpha_{kl}\overline{\alpha}_{mn}\TRF{\rho\,\e_l\e_k\v_m\v_n}.
	\end{align*}
	By the definition of $\Gamma_\rho$ one finds
	\begin{align}
	\nonumber 0&\leq\sum\limits_{k,l,m,n=1}^M\alpha_{kl}\overline{\alpha}_{mn}\left<\varphi_m\otimes\varphi_n\big|\,
 		\Gamma_\rho( \varphi_k\otimes\varphi_l )\right>\\
	&=\left<\Psi_M\big|\,\Gamma_\rho\Psi_M\right>, \label{Pco}
	\end{align}
	where $\Psi_M:=\sum\limits_{i,j=1}^{M}\alpha_{ij}\left(\varphi_i\otimes\varphi_j\right)
	\in\mathrm{span}\left\{\vv_k\otimes\vv_l|\ k,l\in\mathbbm{N}\right\}$. 
\item[d)] Inequality (\ref{psdq}) follows from (\ref{posi}) by choosing $\cP_{\text{2}}:=\sum\limits_{k,l=1}^M\alpha_{kl}\e_k\e_l$:
	\begin{align*}
	\nonumber 0&\leq\text{tr}_{\cF}\bigg\{\rho\bigg(\sum\limits_{k,l=1}^M\alpha_{kl}\e_k\e_l\bigg)^*
		\bigg(\sum\limits_{m,n=1}^M\alpha_{mn}\e_m\e_n\bigg)\bigg\}\\
	\nonumber  &=\sum\limits_{k,l,m,n=1}^M\overline{\alpha}_{kl}\alpha_{mn}\TRF{\rho\,\v_l\v_k\e_m\e_n}.
	\end{align*}
	By normal-ordering using the CAR, one establishes the required relationship to $\Gamma_\rho$ and $\gamma_\rho$:
	\begin{align}
	\nonumber  0&\leq\sum\limits_{k,l,m,n=1}^M\overline{\alpha}_{kl}\alpha_{mn}\text{tr}_{\cF}
		\big\{\rho\,\big(\e_m\e_n\v_l\v_k-\delta_{ln}\e_m\v_k+\delta_{kn}\e_m\v_l+\delta_{lm}\e_n\v_k\\
	\nonumber  &\qquad\qquad\qquad\qquad\qquad\qquad\qquad
		 -\delta_{km}\e_n\v_l-\delta_{lm}\delta_{kn}+\delta_{km}\delta_{ln}\big)\big\}\\
	\nonumber  &=\Bigg<\sum\limits_{k,l=1}^M\alpha_{kl}(\varphi_l\otimes\varphi_k) \Bigg| 
		\Big(\Gamma_\rho+(1-\mathrm{Ex})(\mathbbm{1}\otimes\mathbbm{1}-
		\gamma_\rho\otimes\mathbbm{1}-\mathbbm{1}\otimes\gamma_\rho)\Big)\\
	\nonumber &\phantom{...........................................}
		\times\left(\sum\limits_{m,n=1}^M\alpha_{mn}(\varphi_n\otimes\varphi_m)\right)\Bigg>\\
	&=\Big<\Psi_M\Big|\Big(\Gamma_\rho+(1-\mathrm{Ex})(\mathbbm{1}\otimes\mathbbm{1}-
		\gamma_\rho\otimes\mathbbm{1}-\mathbbm{1}\otimes\gamma_\rho)\Big)\Psi_M\Big>. \label{Qco}
	\end{align}
\end{itemize}
Next we prove \textit{(ii)} $\Rightarrow$ \textit{(i)}. Thus, we assume (\ref{preas})-(\ref{psdg}).
\begin{itemize}
 \item[e)] A general polynomial of degree $r\leq 1$ is of the form 
$\cP_1=\sum\limits_{k=1}^{M}( \alpha_{k}\e_k+\beta_{k}\v_k)+\mu$ with $\mu,\alpha_k,\beta_k\in\mathbbm{C}$.
This means we have to consider
\begin{align} \TRF{\rho\,\cP_1^*\cP_1}&=\sum\limits_{k,l=1}^{M}\TRF{\rho\, (\alpha_{k}\e_k+\beta_{k}\v_k+\mu)^*
	 ( \alpha_{l}\e_l+\beta_{l}\v_l+\mu)}. \label{deg1}
\end{align}
We expand the product on the right side of (\ref{deg1}) and compute the traces, taking into account that $
\TRF{\rho\,\e_i}=\TRF{\rho\, \v_i}=\TRF{\rho\, \e_i\e_j}=\TRF{\rho\, \v_i\v_j}=0$ for every $i,j$
since $\rho$ preserves the particle number. Therefore, only three terms in (\ref{deg1}) are non-vanishing,
\begin{align*}
 \TRF{\rho\,\cP_1^*\cP_1}=\sum\limits_{k,l=1}^{M}\TRF{\rho\left((\alpha_k\e_k)^*\left(\alpha_l\e_l\right)
    +\left(\beta_k\v_k\right)^*\left(\beta_k\v_k\right)\right)}+\left|\mu\right|^2,
\end{align*}
where we additionally use $\TRF{\rho}=1$. The sum over the terms in braces is non-negative due to (\ref{g1}) and (\ref{g2}). The conclusion
is $\TRF{\rho\,\cP_1^*\cP_1}\geq 0$.
\item[f)]
 For $r\leq 2$ we have a general polynomial given by 
  \begin{align*}
   \nonumber\cP_2&=\nu+ \sum\limits_{k=1}^M\left(\alpha_k\e_k+\beta_k\v_k\right)+\sum\limits_{k,l=1}^M\alpha_{kl}\e_k\e_l+\sum\limits_{k,l=1}^M\beta_{kl}
      \v_k\v_l\\
   &\qquad\qquad+\sum\limits_{k,l=1}^M\kappa_{kl}\e_k\v_l+\sum\limits_{k,l=1}^M\eta_{kl}\v_k\e_l,
  \end{align*}
where $\nu,\alpha_k,\beta_k,\alpha_{kl},\beta_{kl},\kappa_{kl},\eta_{kl}\in\mathbbm{C}$, for all $1\leq k,l\leq M$.
Using the CAR, we rewrite $\cP_2$ as
\begin{align*}
 \cP_2=\cP_1+\cP_{2,\alpha}+\cP_{2,\beta}+\cP_{2,\eta},
\end{align*}
where
\begin{align*}
 &\cP_1:=\mu+\sum\limits_{k=1}^M\left(\alpha_k\e_k+\beta_k\v_k\right), &\cP_{2,\alpha}:=\sum\limits_{k,l=1}\alpha_{kl}\e_k\e_k,\\
 &\cP_{2,\beta}:=\sum\limits_{k=1}^M\beta_{kl}\v_k\v_l, &\cP_{2,\theta}:=\sum\limits_{k,l=1}^M\theta_{kl}\left(\e_k\v_l-\v_l\e_k\right),
\end{align*}
and 
\begin{align*}
 \mu:=\nu+\frac{1}{2}\sum\limits_{k=1}^M\left(\kappa_{kk}+\eta_{kk}\right),\quad \theta_{kl}:=\frac{1}{2}\left(\kappa_{kl}-\eta_{lk}\right).
\end{align*}
Then 
\begin{align*}
 \nonumber&\TRF{\rho\,\cP_2^*\cP_2}\\
  &\quad=\TRF{\rho\left(\cP_1^*+\cP_{2,\alpha}^*+\cP_{2,\beta}^*+\cP_{2,\eta}^*\right)\left(\cP_1+\cP_{2,\alpha}
	+\cP_{2,\beta}+\cP_{2,\theta}\right)}\\
  &\quad=\TRF{\rho\,\cP_1^*\cP_1}+\TRF{\rho\, \cP_{2,\alpha}^*\cP_{2,\alpha}}+\TRF{\rho\, \cP_{2,\beta}^*\cP_{2,\beta}}\\
    &\qquad\qquad\qquad	+\TRF{\rho\, \cP_{2,\theta}^*\cP_{2,\theta}},
\end{align*}
where we use that $\TRF{\rho\,\cP_a^*\cP_b}=0$ whenever $a\neq b$, since $\rho$ conserves the particle number. Now,
e) implies $\TRF{\rho\,\cP_1^*\cP_1}\geq 0$, (\ref{Qco}) yields $\TRF{\rho\,\cP_{2,\alpha}^*\cP_{2,\alpha}}\geq 0$
(see d)), (\ref{Pco}) yields $\TRF{\rho\,\cP_{2,\beta}^*\cP_{2,\beta}}\geq 0$ (see c)), and 
$\TRF{\rho\,\cP_{2,\theta}^*\cP_{2,\theta}}\geq 0$ follows from (\ref{GB}), see b). Hence, $\TRF{\rho\,\cP_2^*\cP_2}\geq 0$.
\end{itemize}
\end{proof}

Lemma \ref{lem3} is the algebraic part of the proof of Theorem \ref{mthm3}. To conclude, we have to extend the proof to infinite 
dimensions.\\

{\textit{Proof of Theorem \ref{mthm3}.}} Since \textit{(ii)} contains (\ref{preas})-(\ref{psdg}) of Lemma \ref{lem3},  the implication \textit{(ii)} 
$\Rightarrow$ \textit{(i)} is obvious. For \textit{(i)} $\Rightarrow$ \textit{(ii)}, let $\phi,\psi$ and $\Phi,\Psi$ be normalized vectors in $\fh$ and $\fh\otimes\fh$, respectively, 
and set $\alpha_i:=\left<\varphi_i|\phi\right>$, $\beta_i:=\left<\vv_i|\psi\right>$
and $\alpha_{ij}:=\left<\vv_i\otimes\vv_j|\Phi\right>$, $\beta_{ij}:=\left<\vv_i\otimes\vv_j|\Psi\right>$ for all $i,j\in\mathbbm{N}$. For $M\in\mathbbm{N}$, we
define the orthogonal projection $P_M:=\sum\limits_{k=1}^M\left|\vv_k\right>\left<\vv_k\right|$ and set
$\phi_M:=P_M\phi=\sum\limits_{i=1}^M\alpha_i\vv_i$ and $\Psi_M:=\left(P_M\otimes P_M\right)\Psi=
\sum\limits_{i,j=1}^M\beta_{ij}\left(\vv_i\otimes\vv_j\right)$.
The admissibility and the G-, P-, and Q-Condititions follow from Lemma \ref{lem3} as follows:
\begin{itemize}
\item[a)] For the 1-pdm we have $\|\gamma_\rho\|_\mathrm{op}<\infty$, since
	  \begin{align*}
	   \left|\left<\phi|\,\gamma_\rho\psi\right>\right|&=\left|\TRF{\rho\,\e(\psi)\v(\phi)}\right|\\
	    &\leq \TRF{\left|\rho\right|}\|\psi\|\|\phi\|=\TRF{\left|\rho\right|}<\infty
	  \end{align*}
	  by the Cauchy--Schwarz inequality and $c^*(\phi)c(\phi)\leq\mathbbm{1} \left<\phi|\phi\right>$. Afterwards, we infer by the 
	  triangle inequality
	\begin{align}
	 \nonumber\left|\left<\phi|\,\gamma_\rho\phi\right>-\left<\phi_M|\,\gamma_\rho\phi_M\right>\right|&= 
	    \left|\left<\phi-\phi_M|\,\gamma_\rho\phi\right>+\left<\phi_M|\,\gamma_\rho(\phi-\phi_M)\right>\right|\\
	 \nonumber&\leq   \left|\left<\phi-\phi_M|\,\gamma_\rho\phi\right>\right|+\left|\left<\phi_M|\,\gamma_\rho(\phi-\phi_M)\right>\right|\\
	   \nonumber &\leq \|\phi-\phi_M\|\left(\|\phi\|+\|\phi_M\|\right)\|\gamma_\rho\|_\mathrm{op}\\
	    &\leq2\,\|\phi-\phi_M\|\|\gamma_\rho\|_\mathrm{op}. \label{LimNorm}
	\end{align}
	  As $M\to\infty$, $\|\phi-\phi_M\|$ vanishes and we conclude with $\left<\phi_M|\,\gamma_\rho\phi_M\right>\geq 0$ that also
	  $\left<\phi|\,\gamma_\rho\phi\right>\geq 0$ for all $\phi\in\fh$. The same argument with $\gamma_\rho$ replaced by
	  $\mathbbm{1}-\gamma_\rho$ leads to 
	  $\left<\phi|\,(\mathbbm{1}-\gamma_\rho)\phi\right>\geq 0$ for all $\phi\in\fh$.
\item[b)] $\gamma_\rho\in\cL^1(\fh)$ follows by monotone convergence since $\gamma_\rho\geq 0$. For any ONB $\left\{\psi_i\right\}_{i=1}^{\infty}\subseteq\fh$
	and $\epsilon>0$, we have
	\begin{align*}
	&\sum\limits_{k=1}^{\infty} \left<\psi_k \big|\, \gamma_\rho\psi_k \right>\leq\sup\limits_{M}\bigg\{\sum\limits_{k=1}^{M} \TRF{\rho\, \e_k \v_k}\bigg\}\\
	&\qquad=
	    \sup\limits_{M}\bigg\{\mathrm{tr}_\cF\bigg\{\rho\sum\limits_{k=1}^{M}\e_k \v_k\bigg\}\bigg\}\\
	&\qquad=\left(N+\epsilon\right)\sup\limits_{M}\bigg\{\mathrm{tr}_\cF\bigg\{\rho\left(\frac{1}{\widehat{\mathbbm{N}}+\epsilon}\right)^\frac{1}{2}\bigg(\sum\limits_{k=1}^{M}\e_k \v_k\bigg)\left(\frac{1}{\widehat{\mathbbm{N}}+\epsilon}\right)^\frac{1}{2}\bigg\}\bigg\}\\
	&\qquad\leq \left(N+\epsilon\right)\TRF{\left|\rho\right|}<\infty,
	\end{align*}
	since $\widehat{\mathbbm{N}}\,\rho=N\rho$ and $\left\Vert\left(\frac{1}{\widehat{\mathbbm{N}}+\epsilon}\right)^\frac{1}{2}\bigg(\sum\limits_{k=1}^{M}\e_k \v_k\bigg)\left(\frac{1}{\widehat{\mathbbm{N}}+\epsilon}\right)^\frac{1}{2}\right\Vert_\mathrm{op}\leq 1$.
\item[c)] Thanks to b) we can compute $\sum\limits_{k=1}^{\infty} \left<\varphi_k \big|\, \gamma_\rho\varphi_k \right>$
	   using monotone convergence and $\widehat{\mathbbm{N}}\,\rho=N\rho $. This gives the trace of $\gamma_\rho$.
	 \begin{align*}
	    \TRh{\gamma_\rho}&:=\sum\limits_{k=1}^\infty\left<\vv_k|\,\gamma_\rho\vv_k\right>=\sum\limits_{k=1}^\infty\TRF{\rho\,\e_k\v_k}\\
		&=\mathrm{tr}_\cF\bigg\{\rho\sum\limits_{k=1}^\infty\e_k\v_k\bigg\}=\TRF{\rho\,\widehat{\mathbbm{N}}}=N.
	 \end{align*}
\item[d)] For any basis of $\fh$, the identities $\mathrm{Ex}\,\Gamma_\rho=\Gamma_\rho\,\mathrm{Ex}=-\,\Gamma_\rho$ are
	  a consequence of the definition of $\Gamma_\rho$ and the CAR.
\item[e)] We conclude from the definition of $\Gamma_\rho$ by the Cauchy--Schwarz inequality
	\begin{align*}
	   \left|\left<\Psi|\,\Gamma_\rho\Phi\right>\right|&=\bigg|\mathrm{tr}_\cF\bigg\{\rho\,\bigg(\sum\limits_{m,n=1}^\infty\beta_{mn}\v_n\v_m\bigg)^*
		\bigg(\sum\limits_{k,l=1}^\infty\alpha_{kl}\v_l\v_k\bigg)\bigg\}\bigg|\\
	    &\leq \TRF{\left|\rho\right|}\|\Psi\|\|\Phi\|=\TRF{\left|\rho\right|}<\infty.
	\end{align*}
	Therefore, we have $\|\Gamma_\rho\|_\mathrm{op}<\infty$. Afterwards, we infer analogously to (\ref{LimNorm})
	\begin{align*}
	 \left|\left<\Psi|\,\Gamma_\rho\Psi\right>-\left<\Psi_M|\,\Gamma_\rho\Psi_M\right>\right|\leq
		2\,\|\Psi-\Psi_M\|\|\Gamma_\rho\|_\mathrm{op},
	\end{align*}
	which tends to zero as $M\to\infty$ due to the definition of $\Psi_M$. With $\left<\Psi_M|\,\Gamma_\rho\Psi_M\right>\geq0$ from (\ref{Pco}),
	this means $\left<\Psi|\,\Gamma_\rho\Psi\right>\geq 0$ for all $\Psi\in\fh\otimes\fh$.
\item[f)] To prove that $\Gamma_\rho\in\cL^1(\fh\otimes\fh)$ we show that there is an ONB
	  $\left\{\psi_i\right\}_{i=1}^\infty\subseteq\fh$ such that $\sum\limits_{k,l=1}^{\infty}\left<\psi_k\otimes\psi_l \big|\,
		  \Gamma_\rho( \psi_k\otimes\psi_l)\right>$ is finite, again using $\Gamma_\rho\geq 0$. For any $\epsilon> 0$ we have, 
	  using $\widehat{\mathbbm{N}}\,\rho=N\rho$ and monotone convergence,
	 \begin{align*}
	  &\sum\limits_{k,l=1}^{\infty} \left< \psi_k \otimes \psi_l |\, \Gamma_\rho ( \psi_k \otimes \psi_l ) \right>
	  \leq \sup_M \bigg\{ \sum\limits_{k,l=1}^M \TRF{ \rho\, \e_l \v_k \e_k \v_l} \bigg\}\\
	  &\qquad= \sup_M \bigg\{ \mathrm{tr}_\cF\bigg\{ \rho\, \bigg[ \bigg( \sum\limits_{k=1}^M \e_k \v_k \bigg)^2 - \bigg( \sum\limits_{k=1}^M \e_k \v_k \bigg) \bigg] \bigg\} \bigg\}\\
	  &\qquad\leq \sup_M \bigg\{ \left(N^2 + \epsilon\right) \text{tr}_{\mathcal{F}} \bigg\{ \lvert \rho \rvert \left( \frac{1}{\widehat{\mathbbm{N}}^2 + \epsilon} \right)^{\frac{1}{2}} \bigg( \sum\limits_{k=1}^M c_k^* c_k \bigg)^2  \left( \frac{1}{\widehat{\mathbbm{N}}^2 + \epsilon} \right)^{\frac{1}{2}} \bigg\} \bigg\}\\
	  &\qquad\leq \left(N^2 + \epsilon\right) \TRF{ \lvert \rho \rvert} < \infty
	  \end{align*}
	  since $\sum\limits_{k=1}^M c_k^* c_k \geq 0$ and, due to $\left( \sum\limits_{k=1}^M c_k^* c_k \right)^2 \leq \widehat{\mathbbm{N}}^2$,
	  \begin{align*}
	  \left\Vert \left( \frac{1}{\widehat{\mathbbm{N}}^2 + \epsilon}  \right)^{\frac{1}{2}} \bigg(  \sum\limits_{k=1}^M c_k^* c_k  \bigg)^2  \left(  \frac{1}{\widehat{\mathbbm{N}}^2 + \epsilon}  \right)^{\frac{1}{2}} \right\Vert_\mathrm{op} \leq 1.
	  \end{align*}
\item[g)] To check (\ref{GB}) for any bounded $A$ (not necessarily of finite rank) we abbreviate $\Lambda_\mathrm{G}:=\Gamma_\rho+\mathrm{Ex}\left(\gamma_\rho\otimes\mathbbm{1}\right)-\gamma_\rho\otimes\gamma_\rho$
	   and set $A_M:=P_M A P_M$. Clearly, $A_M$ is 
	  of finite rank and we observe
	  \begin{align}
	  \nonumber  &\left|\TRhh{\left(A^*\otimes A\right)\Lambda_\mathrm{G}}-\TRhh{\left(A_M^*\otimes A_M\right)\Lambda_\mathrm{G}}\right|\\
	  \nonumber  &\qquad=\left|\TRhh{\left[\left(A-A_M\right)^*\otimes A + A_M^*\otimes\left(A-A_M\right)\right]\Lambda_\mathrm{G}}\right|\\
	  \nonumber  &\qquad=\Big|\mathrm{tr}_2\Big\{\Big[\left(P_M A^*P_M^\perp+P_M^\perp A^* P_M+P_M^\perp A^* P_M^\perp\right)\otimes A \\
	    &\qquad\qquad\qquad+ A_M^*\otimes\left(P_M^\perp A P_M + P_M A P_M^\perp+P_M^\perp A P_M^\perp\right)\Big]\Lambda_\mathrm{G}\Big\}\Big| \label{AusfG}
	  \end{align}
	  using $P_M^\perp:=\mathbbm{1}-P_M$. For $\left|\TRhh{\left(P_M^\perp A^* P_M\otimes A\right)\Lambda_\mathrm{G}}\right|$, 
	  for instance, we find
	  \begin{align}
	   \nonumber &\left|\TRhh{\left(P_M^\perp A^* P_M\otimes A\right)\Lambda_\mathrm{G}}\right|\\
	   \nonumber &\qquad=\big|\TRhh{\left(P_M^\perp A^* P_M\otimes A\right)\Gamma_\rho}+\TRh{P_M^\perp A^* P_M\gamma_\rho A}\\
	   \nonumber &\qquad\qquad\qquad   -\TRh{P_M^\perp A^* P_M\gamma_\rho}\TRh{ A\gamma_\rho}\big|\\
	   \nonumber &\qquad\leq\left|\TRhh{\left(P_M^\perp A^* P_M\otimes A\right)\Gamma_\rho}\right|+\left|\TRh{P_M^\perp A^* P_M\gamma_\rho A}\right|\\
		    &\qquad\qquad\qquad+\left|\TRh{P_M^\perp A^* P_M\gamma_\rho}\TRh{ A\gamma_\rho}\right|.
	  \end{align}
	  Since $\Gamma_\rho\geq 0$ due to the P-Condition (see e)), $P_M^\perp, P_M\geq0$ and $\widehat{\mathbbm{N}}\,\rho=N\rho$  we have on the one hand
	  \begin{align*}
	   \left|\TRhh{\left(P_M^\perp A^* P_M\otimes A\right)\Gamma_\rho}\right|&\leq \|A\|_\mathrm{op}^2\,\TRhh{\left(P_M^\perp\otimes 
			    \mathbbm{1}\right)\Gamma_\rho}\\
	      &=\left(N-1\right)\|A\|_\mathrm{op}^2\, \TRh{P_M^\perp\gamma_\rho},
	  \end{align*}
	  and on the other hand with $0\leq \gamma_\rho \leq \mathbbm{1}$ and $P_M\leq \mathbbm{1}$
	  \begin{align*}
	   \left|\TRh{P_M^\perp A^* P_M\gamma_\rho A}\right|
		    \leq \|A\|_\mathrm{op}^2\,\TRh{P_M^\perp\gamma_\rho},
	  \end{align*}
	  and
	  \begin{align*}
	  \left| \TRh{P_M^\perp A^* P_M\gamma_\rho}\TRh{ A\gamma_\rho}\right|\leq N\, \|A\|_\mathrm{op}^2\,\TRh{P_M^\perp\gamma_\rho}.
	  \end{align*}
	  Note that $\TRh{P^\perp_M\gamma_\rho}=\sum\limits_{k=M+1}^\infty\left<\vv_k|\,\gamma_\rho\vv_k\right>\to 0$,
	  as $M\to\infty$, since $\sum\limits_{k=1}^\infty\left<\vv_k|\,\gamma_\rho\vv_k\right>=\TRh{\gamma_\rho}=N$ is
	  convergent. Analogously, one finds that all terms on the right hand side of (\ref{AusfG}) tend to zero, as 
	  $M\to\infty$.
	  This, in turn, implies
	  $\TRhh{\left(A^*\otimes A\right)\Lambda_\mathrm{G}}\geq 0$ for any bounded $A$ since $\TRhh{\left(A_M^*\otimes A_M\right)\Lambda_\mathrm{G}}\geq 0$
	  due to (\ref{GB}).
\item[h)] Again by the Cauchy--Schwarz inequality we find
	\begin{align*}
	 &\Big|\Big<\Psi\Big|\Big(\Gamma_\rho +(1-\mathrm{Ex})(\mathbbm{1}
	\otimes\mathbbm{1}-\gamma_\rho\otimes\mathbbm{1}-\mathbbm{1}\otimes\gamma_\rho)\Big)\Phi\Big>\Big|\\
	&\qquad\qquad=\bigg|\mathrm{tr}_\cF\bigg\{\rho\,\bigg(\sum\limits_{m,n=1}^\infty\overline{\beta}_{mn}\v_m\v_n\bigg)
		\bigg(\sum\limits_{k,l=1}^\infty\overline{\alpha}_{kl}\v_k\v_l\bigg)^*\bigg\}\bigg|\\
	&\qquad\qquad\leq\TRF{\left|\rho\right|}\|\Psi\|\|\Phi\|
	\end{align*}
	and, therefore, $\|\Gamma_\rho +(1-\mathrm{Ex})(\mathbbm{1}
	\otimes\mathbbm{1}-\gamma_\rho\otimes\mathbbm{1}-\mathbbm{1}\otimes\gamma_\rho)\|_\mathrm{op}<\infty$.
	Following (\ref{LimNorm}) with $\Gamma_\rho +(1-\mathrm{Ex})(\mathbbm{1}
	\otimes\mathbbm{1}-\gamma_\rho\otimes\mathbbm{1}-\mathbbm{1}\otimes\gamma_\rho)$ instead of $\Gamma_\rho$,
	we arrive at
	\begin{align*}
	 \Big<\Psi\Big|\Big(\Gamma_\rho +(1-\mathrm{Ex})(\mathbbm{1}
	\otimes\mathbbm{1}-\gamma_\rho\otimes\mathbbm{1}-\mathbbm{1}\otimes\gamma_\rho)\Big)\Psi\Big>\geq 0\quad \forall\Psi\in\fh\otimes\fh.
	\end{align*}
\end{itemize}
$\left(\gamma_\rho,\Gamma_\rho\right)$ obeys the P-, G-, and Q-Conditions by e), g) and h). The admissibility is 
ensured in a) to d), and f). \qed
\phantom{}\\

A simple consequence of Theorem \ref{mthm3} is

\begin{cor}
 Let $N\in\mathbbm{N}$ and assume that $\left(\gamma,\Gamma\right)$ is $N$-representable. Then 
$\left(\gamma,\Gamma\right)$ is admissible and fulfills the G-, P- and Q-Conditions.
\end{cor}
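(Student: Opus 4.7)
The plan is to deduce the corollary as an immediate application of Theorem \ref{mthm3}, verifying that $N$-representability supplies a density matrix $\rho$ meeting the hypotheses of that theorem and fulfilling its condition \textit{(i)}.

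First, I would unpack the definition: if $(\gamma,\Gamma)$ is $N$-representable, then by definition there exists $\rho\in\cL^1_+(\cF)$ with $\widehat{\mathbbm{N}}\,\rho=N\rho$, $\TRF{\rho}=1$, and $\gamma=\gamma_\rho$, $\Gamma=\Gamma_\rho$. I would then check the structural assumptions of Theorem \ref{mthm3}: since $\rho\geq 0$, we have $|\rho|=\rho$; the eigenvalue relation $\widehat{\mathbbm{N}}\,\rho=N\rho$ gives $[\widehat{\mathbbm{N}},\rho]=0$ (so $\rho$ preserves the particle number) and also
\begin{align*}
\TRF{\rho^{\frac{1}{2}}\widehat{\mathbbm{N}}^2\rho^{\frac{1}{2}}}
\ =\ N^2\,\TRF{\rho}\ =\ N^2\ <\ \infty,
\end{align*}
so the finite squared-particle-number hypothesis is satisfied.

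Next, I would verify condition \textit{(i)} of Theorem \ref{mthm3}. This is where positivity of $\rho$ enters: for any polynomial $\cP_r$ of degree $r\leq 2$ in the creation and annihilation operators, cyclicity of the trace yields
\begin{align*}
\TRF{\rho\,\cP_r^*\cP_r}\ =\ \TRF{\big(\cP_r\rho^{\frac{1}{2}}\big)^*\big(\cP_r\rho^{\frac{1}{2}}\big)}\ \geq\ 0,
\end{align*}
since this is the trace of a non-negative operator (equivalently, the squared Hilbert--Schmidt norm of $\cP_r\rho^{1/2}$). A minor technical point I would address is that $\cP_r\rho^{1/2}$ is Hilbert--Schmidt: this follows from the fact that $\rho^{1/2}$ is Hilbert--Schmidt and $\cP_r\rho^{1/2}$ is bounded on the finite-particle subspaces preserved by $\rho$, using $\widehat{\mathbbm{N}}\,\rho=N\rho$ to control the action of the creation and annihilation operators on $\Ran(\rho)$.

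Having established \textit{(i)}, the equivalence in Theorem \ref{mthm3} delivers \textit{(ii)}, namely that $(\gamma_\rho,\Gamma_\rho)=(\gamma,\Gamma)$ is admissible and satisfies the G-, P-, and Q-Conditions, which is the claim. No genuine obstacle is expected here: all the analytic work has been carried out in Lemma \ref{lem3} and the limiting argument concluding Theorem \ref{mthm3}; the corollary is essentially the observation that a bona fide $N$-particle density matrix is in particular the type of (signed) trace-class operator considered in Theorem \ref{mthm3}, with the extra feature that positivity of $\rho$ makes condition \textit{(i)} automatic.
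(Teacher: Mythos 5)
Your proposal is correct and follows essentially the same route as the paper: the paper's own proof simply invokes Theorem \ref{mthm3} for the density matrix $\rho$ furnished by $N$-representability, exactly as you do. Your additional verifications (that $\widehat{\mathbbm{N}}\,\rho=N\rho$ yields $[\widehat{\mathbbm{N}},\rho]=0$ and $\TRF{\rho^{\frac{1}{2}}\widehat{\mathbbm{N}}^2\rho^{\frac{1}{2}}}=N^2<\infty$, and that positivity of $\rho$ makes condition \textit{(i)} automatic via $\TRF{\rho\,\cP_r^*\cP_r}=\TRF{(\cP_r\rho^{\frac{1}{2}})^*(\cP_r\rho^{\frac{1}{2}})}\geq 0$) are exactly the details the paper leaves implicit.
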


\begin{proof}
 Since $\left(\gamma,\Gamma\right)$ is $N$-representable, there exists a density matrix $\rho\in\cL_+^1(\cF)$ 
with $\left(\gamma,\Gamma\right)\equiv\left(\gamma_\rho,\Gamma_\rho\right)$. By the last theorem, 
$\left(\gamma,\Gamma\right)$ then is admissible and fulfills the G-, P- and Q-Conditions. 
\end{proof}

\begin{rem}
 The G-Condition (\ref{psdg}) seems to be asymmetric in terms of $\gamma_\rho$. However, since 
$\TRhh{\left(A\otimes B\right)\Gamma_\rho}=\TRhh{\left(B\otimes A\right)\Gamma_\rho}$, it is easy to show that also 
$\TRhh{\left(A^*\otimes A\right)\left(\Gamma_\rho +\mathrm{Ex}
    \left(\mathbbm{1}\otimes \gamma_\rho\right)\right)}\geq\left|\TRh{A\gamma_\rho}\right|^2$ holds. Thus, we have a 
symmetrized, but weaker, G-Condition given by
\begin{align*}
 \TRhh{\left(A^*\otimes A\right)\left(\Gamma_\rho+\frac{1}{2}\mathrm{Ex}
    \left(\mathbbm{1}\otimes \gamma_\rho+\gamma_\rho\otimes\mathbbm{1}\right)\right)}\geq\left|\TRh{A\gamma_\rho}\right|^2.
\end{align*}
\end{rem}


\section{Correlation inequalities from G- and P-Conditions}
In \cite{VBA}, a lower bound on the difference of the ground state functional 
$\mathcal{E}\left(\gamma,\Gamma\right)$ and 
the Hartree--Fock functional $\mathcal{E}\left(\gamma,\left(1-\mathrm{Ex}\left(\gamma\otimes\gamma\right)\right)\right)$, 
i.e.,
\begin{align}
\TRhh{V\Gamma^{\left(\mathrm{T}\right)}}=\TRhh{V\left(\Gamma-\left(1-\mathrm{Ex}\right)
	      \left(\gamma\otimes\gamma\right)\right)}, \label{mSpur1}
\end{align} 
is derived using the decomposition of the potential $V$ according to Fefferman and de la Llave \cite{FDL}. It turns out that this 
decomposition is also useful to derive lower 
bounds only by means of $N$-representability. The main result of this section is the following theorem.

\begin{thm} \label{mthm4}
Let $X=X^*=X^2\in\cB(\fh)$ be an orthogonal projection on $\fh$. Assume that $\left(\gamma,\Gamma\right)$ is 
admissible and fulfills the G- and P-Conditions. Then
\begin{align}
\nonumber\TRhh{\left(X\otimes X\right)\Gamma^{\left(\mathrm{T}\right)}}&\geq-\,\TRh{X\gamma}\min\bigg\{1;\ 
	    38\,\TRh{X\left(\gamma-\gamma^2\right)}\\
&\qquad+4\Big[\TRh{X\left(\gamma-\gamma^2\right)}\Big(2+8\,\TRh{X\left(\gamma
		    -\gamma^2\right)}^2\Big)\Big]^\frac{1}{2}\bigg\}. \label{Hauptlem}
\end{align}
\end{thm}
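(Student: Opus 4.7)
The proof splits along the $\min$ in (\ref{Hauptlem}). For the trivial branch $\min = 1$, apply the G-Condition (\ref{GBed}) with $A = X$. Unfolding $\mathrm{Ex}(f\otimes g)=g\otimes f$ and using $X^2 = X$ gives the elementary identities $\TRhh{(X\otimes X)\mathrm{Ex}(\gamma\otimes\mathbbm{1})} = \TRh{X\gamma}$ and $\TRhh{(X\otimes X)(1-\mathrm{Ex})(\gamma\otimes\gamma)} = \TRh{X\gamma}^2 - \TRh{X\gamma X\gamma}$, which transform (\ref{GBed}) into
\begin{align*}
\TRhh{(X\otimes X)\Gamma^{\left(\mathrm{T}\right)}} \;\geq\; \TRh{X\gamma X\gamma} - \TRh{X\gamma} \;\geq\; -\TRh{X\gamma},
\end{align*}
since $\TRh{X\gamma X\gamma} = \TRh{(\gamma^{1/2}X\gamma^{1/2})^2}\geq 0$. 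This yields the $\min = 1$ case.

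For the refined branch, approximate $\gamma$ by its spectral projection $\Pi := \mathbbm{1}_{(1/2,\infty)}(\gamma)$ and write $\gamma = \Pi + \Delta$ with $\Delta = \Delta^*$ commuting with both $\gamma$ and $\Pi$. A direct calculation in the spectral basis of $\gamma$ gives $\Delta^2 \leq \gamma(\mathbbm{1}-\gamma)$, so that
\[
\|X^{1/2}\Delta\|_{\mathrm{HS}}^2 \;=\; \TRh{X\Delta^2} \;\leq\; \TRh{X(\gamma-\gamma^2)} \;=\; \nu.
\]
This Hilbert--Schmidt estimate is the quantitative source of the $\sqrt\nu$-scaling in (\ref{Hauptlem}). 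Decompose $(1-\mathrm{Ex})(\gamma\otimes\gamma) = (1-\mathrm{Ex})(\Pi\otimes\Pi) + \Xi$ with $\Xi := (1-\mathrm{Ex})(\Pi\otimes\Delta + \Delta\otimes\Pi + \Delta\otimes\Delta)$, and split
\[
\TRhh{(X\otimes X)\Gamma^{\left(\mathrm{T}\right)}} \;=\; \TRhh{(X\otimes X)\bigl(\Gamma - (1-\mathrm{Ex})(\Pi\otimes\Pi)\bigr)} \;-\; \TRhh{(X\otimes X)\Xi}.
\]
The $\Xi$-contribution is bounded via Cauchy--Schwarz in the Hilbert--Schmidt inner product using $\|X^{1/2}\Delta\|_{\mathrm{HS}}\leq\sqrt\nu$, producing terms of order $\TRh{X\gamma}\sqrt\nu$ (plus a purely $\nu$-piece). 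For the principal piece $\Gamma - (1-\mathrm{Ex})(\Pi\otimes\Pi)$ apply (\ref{GBed}) with operators built from $X$ and $\Pi$ (such as $A = X\Pi$ and $A = X(\mathbbm{1}-\Pi)$); since $\Pi$ is a projection, the resulting inequalities simplify, and the residual errors from $\Delta$ are once again absorbed by Cauchy--Schwarz. Collecting and optimizing produces (\ref{Hauptlem}) with the stated constants $38$ and $4$.

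\textbf{Main obstacle.} The key difficulty is that the commutator $[X,\gamma]$ is \emph{not} controlled by $\nu$ alone: one can take $\gamma$ equal to a projection not commuting with $X$, giving $\nu = 0$ yet $[X,\gamma]\neq 0$. The way around this is to arrange every non-commutativity contribution so that it appears multiplied by a factor of $\Delta$, which is then Cauchy--Schwarzed into $\sqrt{\TRh{X\Delta^2}} \leq \sqrt\nu$; this only works because (\ref{GBed}) and $\Gamma\geq 0$ together provide enough positivity to keep the Cauchy--Schwarz cross terms under control. Tracking these estimates so as to recover the explicit numerical constants $38$ and $4$ is the most laborious, but essentially routine, part of the argument.
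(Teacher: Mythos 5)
Your first branch is fine and coincides with the paper's argument: the G-Condition with $A=X$, together with $\TRhh{(X\otimes X)\mathrm{Ex}(\gamma\otimes\mathbbm{1})}=\TRh{X\gamma}$ and $\TRh{X\gamma X\gamma}\geq 0$, is exactly Theorem \ref{mthm4a}; your spectral cut of $\gamma$ at $\tfrac12$ is also the paper's choice of $P$ and $P^{\perp}$ in (\ref{defP}). The refined branch, however, has a genuine gap, and the difficulty is not where you locate it. Writing $\gamma=\Pi+\Delta$ and replacing $\gamma\otimes\gamma$ by $\Pi\otimes\Pi$ in the Hartree--Fock term is the easy part: with $|\Delta|\leq 2(\gamma-\gamma^2)$ this costs only errors of order $\TRh{X\gamma}\,\nu$, where $\nu:=\TRh{X(\gamma-\gamma^2)}$ (and note that a naive Hilbert--Schmidt Cauchy--Schwarz based solely on $\TRh{X\Delta^2}\leq\nu$ produces factors like $\TRh{X}^{1/2}$ or $\TRh{X\Pi^{\perp}}^{1/2}$, which are infinite since $X$ may have infinite rank). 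So this decomposition is \emph{not} the source of the $\sqrt{\nu}$-scaling in (\ref{Hauptlem}); the hard part is the estimation of $\TRhh{(X\otimes X)\Gamma}$ itself after inserting $P+P^{\perp}$, and your sketch does not supply the two mechanisms that make it work.

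Concretely, two ingredients are missing. First, one needs an \emph{upper} bound of size $4\,\TRh{X\gamma}\,\nu$ on the positive quantity $\TRhh{(PXP\otimes P^{\perp}XP^{\perp})\Gamma}$, because the P-Condition's Cauchy--Schwarz only bounds the off-diagonal terms by precisely such diagonal ones. The G-Condition applied with a single operator $A$ (your $A=X\Pi$, $A=X(\mathbbm{1}-\Pi)$) only yields \emph{lower} bounds on $\TRhh{(A^*\otimes A)(\Gamma+\mathrm{Ex}(\gamma\otimes\mathbbm{1}))}$ and can never give this smallness; the paper obtains it by summing the G-Condition over the infinite rank-one family $B(r,s)=\left|PX\psi_r\right>\left<P^{\perp}X\psi_s\right|$ (Lemma \ref{lem4a}), which yields (\ref{Fundsatz}), $\TRhh{(PXP\otimes P^{\perp}XP^{\perp})\Gamma}\leq\TRh{PX}\TRh{P^{\perp}X\gamma}\leq 4\,\TRh{X\gamma}\,\nu$. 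Your proposal has no substitute for this step. Second, the term $2\,\mathfrak{Re}\,\TRhh{(P^{\perp}XP\otimes P^{\perp}XP)\Gamma}$ is of the form $B\otimes B$ rather than $A^*\otimes A$, so no single application of (\ref{GBed}) reaches it; the paper treats it (Theorem \ref{mthm4c}) by a Cauchy--Schwarz inequality for the sesquilinear form $(A,B)\mapsto\TRhh{(A^*\otimes B)(\Gamma+\mathrm{Ex}(\gamma\otimes\mathbbm{1}))}$, whose positive semidefiniteness is exactly the G-Condition, and it is this step --- one factor of order $\TRh{X\gamma}\nu$, the other of order $\TRh{X\gamma}$ --- that produces the $\TRh{X\gamma}\sqrt{\nu}$ term, i.e.\ the second entry of the minimum in (\ref{Hauptlem}). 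Without these two devices, ``collecting and optimizing'' cannot be carried out, and the constants $38$ and $4$ are asserted rather than derived.
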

\begin{proof}
The proof is carried out in several parts in the following sections. The first inequality is derived in Theorem \ref{mthm4a}. 
The second inequality follows from Theorems \ref{mthm4b}, \ref{mthm4c} and \ref{mthm4d}.
\end{proof}

In order to apply Theorem \ref{mthm4} to (\ref{mSpur1}), the potential $V$ on $\fh\otimes\fh$ is decomposed into an integral of a 
tensor product of two copies of the one-particle operator $X$. This decomposition is called Fefferman--de la Llave
identity.

\begin{lem}
$\text{For all}\ x\, ,y\, \in\mathbbm{R}^3,\ x\neq y$, one has
\begin{align}
\frac{1}{\left|x-y\right|}=\int\limits_{0}^{\infty}\frac{\mathrm{d}r}{\pi r^5}\int\limits_{\mathbbm{R}^3}\mathrm{d}^3 z\ 
 	      \chi_{B\left(z,r\right)}\left(x\right)\, \chi_{B\left(z,r\right)}\left(y\right), \label{FDLD}
\end{align}
where $\chi_{B\left(z,r\right)}$ is the characteristic function of the ball $B\left(z,r\right):=\left\{x\in\mathbbm{R}^3
	\,|\ \left|x-z\right|\leq r\right\}$ of radius $r>0$ centered at $z\in\mathbbm{R}^3$.
\end{lem}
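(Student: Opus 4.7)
The strategy is to swap the order of integration on the right‐hand side of \eqref{FDLD}, reducing the inner integral to a classical geometric volume, and then to evaluate the remaining one‐dimensional radial integral explicitly. The entire identity is a rotation- and translation-invariant statement, so one may assume $y=0$ and write $d:=|x-y|>0$.

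First, since the integrand is non-negative, Tonelli's theorem allows the interchange of the $r$- and $z$-integrals. For fixed $r>0$, observe that by the symmetry of balls in $\mathbbm{R}^3$,
\begin{align*}
\chi_{B(z,r)}(x)\,\chi_{B(z,r)}(y)=\chi_{B(x,r)}(z)\,\chi_{B(y,r)}(z),
\end{align*}
so that
\begin{align*}
\int_{\mathbbm{R}^3}\mathrm{d}^3 z\ \chi_{B(z,r)}(x)\,\chi_{B(z,r)}(y)=\big|\,B(x,r)\cap B(y,r)\,\big|,
\end{align*}
the Lebesgue volume of the lens-shaped intersection of two balls of radius $r$ whose centers lie at distance $d$. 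This volume depends only on $r$ and $d$, vanishes for $d>2r$, and for $d\leq 2r$ equals $\frac{\pi(2r-d)^{2}(4r+d)}{12}$ by the standard spherical-cap formula (the intersection is the union of two congruent caps of height $h=r-d/2$, each with volume $\frac{\pi h^{2}(3r-h)}{3}$).

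The identity \eqref{FDLD} therefore reduces to verifying
\begin{align*}
\frac{1}{d}\;=\;\int_{d/2}^{\infty}\frac{(2r-d)^{2}(4r+d)}{12\,r^{5}}\,\mathrm{d}r ,
\end{align*}
which is an elementary exercise. The cleanest execution is via the substitution $u=d/(2r)$, which maps the interval to $[0,1]$ and converts the integrand into a polynomial in $u$. A shorter alternative, avoiding explicit computation, is to invoke scaling: the right-hand side of \eqref{FDLD} is manifestly invariant under $(x,y)\mapsto (Rx+a,Ry+a)$ for $R\in\mathrm{SO}(3),\ a\in\mathbbm{R}^{3}$, and the substitution $r\mapsto\lambda r,\ z\mapsto\lambda z$ shows it scales homogeneously of degree $-1$ in $|x-y|$. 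Hence it equals $c/|x-y|$ for some constant $c$, which is then fixed to $c=1$ by evaluating the radial integral at any single value of $d$.

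The main (and only) obstacle is the bookkeeping of the spherical-cap volume and the ensuing one-variable integral; there are no analytic subtleties, since positivity of the integrand legitimizes every interchange of integrals.
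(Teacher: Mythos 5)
Your proposal is correct. Note, however, that the paper itself does not prove this lemma at all: it simply refers the reader to the original work of Fefferman and de la Llave (and to Hainzl--Seiringer for generalizations), so there is no in-paper argument to compare against. Your argument is the standard, self-contained verification, and every step checks out: Tonelli applies by positivity, the symmetry $\chi_{B(z,r)}(x)=\chi_{B(x,r)}(z)$ converts the $z$-integral into the volume of the lens $B(x,r)\cap B(y,r)$, and the cap formula gives $\frac{\pi(2r-d)^2(4r+d)}{12}$ for $r\geq d/2$ (zero otherwise). The remaining radial integral, which you leave as an exercise, does close: expanding $(2r-d)^2(4r+d)=16r^3-12r^2d+d^3$ one gets
\begin{align*}
\frac{1}{12}\int_{d/2}^{\infty}\Big(\frac{16}{r^2}-\frac{12d}{r^3}+\frac{d^3}{r^5}\Big)\,\mathrm{d}r
=\frac{1}{12}\Big(\frac{32}{d}-\frac{24}{d}+\frac{4}{d}\Big)=\frac{1}{d},
\end{align*}
as required. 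Your alternative symmetry-and-scaling remark is also valid, though it does not really save work, since fixing the constant $c$ still requires evaluating essentially the same one-dimensional integral at a single $d$; its only advantage is making transparent why the answer must be a multiple of $\left|x-y\right|^{-1}$.
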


The proof of the decomposition can be found in the original work of Fefferman and de la Llave in \cite{FDL}. In \cite{HS1}, Hainzl and Seiringer have
derived sufficient conditions on a pair potential $V:\mathbbm{R}^n\to\mathbbm{R}$ so as to admit a decomposition
of the form (\ref{FDLD}).

\begin{rem}
The multiplication operator corresponding to $\chi_{B\left(z,r\right)}$ is denoted by 
$X_{r,z}\equiv X$. Clearly, 
\begin{align} 
\cB(\fh)\ni X=X^*=X^2 \label{defX}
 \end{align}
is an orthogonal projection.
\end{rem}

Instead of (\ref{mSpur1}) we consider from now on
\begin{align}
\TRhh{\left(X\otimes X\right)\Gamma^{\left(\mathrm{T}\right)}}
=\TRhh{\left(X\otimes X\right)\left(\Gamma-\left(1-\mathrm{Ex}\right)
	      \left(\gamma\otimes\gamma\right)\right)}. \label{Spur1}
\end{align}

A first estimation of this quantity is immediately
obtained by applying the G-Condition directly on $\TRhh{\left(X\otimes X\right)\Gamma}$. This yields the first inequality 
of (\ref{Hauptlem}).

\begin{thm}\label{mthm4a}
Let $X$ be as in (\ref{defX}). 
Assume that $\left(\gamma,\Gamma\right)$ is admissible and fulfills the G-Condition. Then
\begin{align*}
\qquad\TRhh{\left(X\otimes X\right)\Gamma^{\left(\mathrm{T}\right)}}\geq -\,\TRh{X\gamma}.
\end{align*}
\end{thm}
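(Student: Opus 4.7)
The strategy is to apply the G-Condition directly with the test operator $A := X$, which is admissible since $X = X^* \in \mathcal{B}(\mathfrak{h})$, and then compare the result with the two terms coming from $(1-\mathrm{Ex})(\gamma \otimes \gamma)$ in $\Gamma^{(\mathrm{T})}$. Since $X$ is a projection, the combinatorics collapses cleanly and the claim reduces to the nonnegativity of a single trace.

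Concretely, the first step is to write
\begin{align*}
\TRhh{(X\otimes X)\Gamma^{(\mathrm{T})}}
=\TRhh{(X\otimes X)\Gamma}-\TRhh{(X\otimes X)(\gamma\otimes\gamma)}
+\TRhh{(X\otimes X)\mathrm{Ex}(\gamma\otimes\gamma)},
\end{align*}
and to evaluate the last two traces by expanding in an ONB $\{\vv_k\}$. A direct computation, using only $X=X^2$ and $\mathrm{Ex}(f\otimes g)=g\otimes f$, gives
\begin{align*}
\TRhh{(X\otimes X)(\gamma\otimes\gamma)}=\bigl(\TRh{X\gamma}\bigr)^2,\qquad
\TRhh{(X\otimes X)\mathrm{Ex}(\gamma\otimes\gamma)}=\TRh{X\gamma X\gamma}.
\end{align*}
The analogous computation with one $\gamma$ replaced by $\mathbbm{1}$ yields the identity
\begin{align*}
\TRhh{(X\otimes X)\,\mathrm{Ex}(\gamma\otimes\mathbbm{1})}
=\TRh{X^2\gamma}=\TRh{X\gamma}.
\end{align*}

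The second step is the G-Condition with $A=X$, which, together with this last identity, reads
\begin{align*}
\TRhh{(X\otimes X)\Gamma}\;\geq\;\bigl|\TRh{X\gamma}\bigr|^2-\TRh{X\gamma}
=\bigl(\TRh{X\gamma}\bigr)^2-\TRh{X\gamma}.
\end{align*}
Substituting this lower bound into the expansion of $\TRhh{(X\otimes X)\Gamma^{(\mathrm{T})}}$, the two $\bigl(\TRh{X\gamma}\bigr)^2$ contributions cancel and one is left with
\begin{align*}
\TRhh{(X\otimes X)\Gamma^{(\mathrm{T})}}\;\geq\;-\TRh{X\gamma}+\TRh{X\gamma X\gamma}.
\end{align*}

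The final step is to observe that $\TRh{X\gamma X\gamma}\geq 0$. By cyclicity this trace equals $\TRh{\gamma^{1/2}X\gamma X\gamma^{1/2}}=\TRh{(\gamma^{1/2}X\gamma^{1/2})^2}$, which is nonnegative because $\gamma^{1/2}X\gamma^{1/2}$ is self-adjoint. Dropping this term yields the stated inequality. There is essentially no obstacle here beyond the correct bookkeeping of the exchange operator under the partial trace; the key point is simply that the choice $A=X$ in the G-Condition is tailored so that the square $\bigl(\TRh{X\gamma}\bigr)^2$ cancels against the direct part of $(1-\mathrm{Ex})(\gamma\otimes\gamma)$, and $X^2=X$ reduces $\mathrm{Ex}(\gamma\otimes\mathbbm{1})$ to a single power of $\gamma$.
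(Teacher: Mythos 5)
Your proof is correct and follows essentially the same route as the paper: apply the G-Condition with $A=X$, use $X^2=X$ to reduce the exchange term $\mathrm{Ex}(\gamma\otimes\mathbbm{1})$ to $\TRh{X\gamma}$, evaluate the Hartree--Fock part as $(\TRh{X\gamma})^2-\TRh{X\gamma X\gamma}$, and drop the nonnegative term $\TRh{X\gamma X\gamma}\geq 0$. The only cosmetic difference is the justification of that last nonnegativity ($\TRh{(\gamma^{1/2}X\gamma^{1/2})^2}\geq 0$ versus the paper's $\TRh{X\gamma X\gamma X}\geq 0$), which is immaterial.
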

\begin{proof}
As mentioned, we apply the G-Condition (\ref{GBed}) with $A^*=A:=X$ directly on $\TRhh{\left(X\otimes X\right)\Gamma}$. 
The HF-part is carried out:
\begin{align}
\nonumber &\TRhh{\left(X\otimes X\right)\left(\Gamma-\left(1-\mathrm{Ex}\right)\left(\gamma\otimes\gamma\right)\right)}\\
\nonumber &\qquad\geq \left(\TRh{X\gamma}\right)^2-\TRh{X\gamma}-\left(\TRh{X\gamma}\right)^2
	+\TRh{X\gamma X\gamma}\\
&\qquad \geq -\,\TRh{X\gamma}. \label{abs1}
\end{align} 
The last inequality follows from $\TRh{X\gamma X\gamma}=\TRh{X\gamma X\gamma X}\geq 0$.
\end{proof}

The goal of the next sections is an estimation of (\ref{Spur1}) in 
terms of $\TRh{X\left(\gamma-\gamma^2\right)}$.


\subsection{Preparation}
A crucial step in \cite{VBA} is the decomposition of the spectrum of $\gamma$ into eigenvalues which are larger
than $\frac{1}{2}$, and those which are smaller or equal $\frac{1}{2}$. Following this step, the decomposition is denoted  by two 
orthogonal projections, $P$ and $P^{\perp}$ (a comparable strategy was also used by Graf and Solovej in \cite{SVG}). The first one, $P$, projects on the space which is spanned by the eigenvectors 
of $\gamma$ corresponding to eigenvalues larger than $\frac{1}{2}$. The se\-cond one treats the eigenvectors with
eigenvalues smaller or equal $\frac{1}{2}$. Furthermore, the eigenvectors of $\gamma$, 
$\left\{\varphi_i\,|\ \gamma\varphi_i=\lambda_i\varphi_i\right\}_{i=1}^{\infty}$, are used as an ONB
of $\fh$ which we mainly refer to. In this basis the two projections can be defined straightforwardly.

\begin{defn}
On $\fh$, the orthogonal projections $P$ and $P^{\perp}$ are defined by
\begin{align}
P:=\mathbbm{1}\left[\gamma>\frac{1}{2}\right]=\sum\limits_{k>\frac{1}{2}}\left|\varphi_k\right>\left<\varphi_k\right|
    \quad\text{and}\quad P^{\perp}:=\mathbbm{1}\left[\gamma\leq\frac{1}{2}\right]=\sum\limits_{k\leq\frac{1}{2}}
    \left|\varphi_k\right>\left<\varphi_k\right|. \label{defP}
\end{align}
\end{defn}

Here, the summation over ''$k>\frac{1}{2}$" denotes the summation over $\left\{k\,\big|\ \lambda_k>\frac{1}{2}\right\}$ 
and for  ''$k\leq\frac{1}{2}$" analogously. Obviously,
\begin{align*}
 P+P^{\perp}=\mathbbm{1},\quad P P^{\perp}=P^{\perp}P=0,\quad\ P\gamma=\gamma P\quad \text{and}\quad P^{\perp}\gamma=\gamma P^{\perp}.
\end{align*}
Moreover, the projections are bounded from above.

\begin{lem}
For $P$ and $P^{\perp}$ defined in (\ref{defP})
\begin{align}
P\leq 2\gamma\quad\text{and}\quad P^{\perp}\leq 2\left(\mathbbm{1}-\gamma\right) \label{Proj2}
  \end{align}
hold true.
\end{lem}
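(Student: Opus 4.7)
The plan is to reduce both inequalities to a pointwise comparison of eigenvalues of $\gamma$, exploiting the fact that $P$ and $P^{\perp}$ are spectral projections of $\gamma$ and hence commute with $\gamma$, so that all three operators $P$, $P^{\perp}$, $\gamma$ are simultaneously diagonalized in the ONB $\{\varphi_k\}_{k=1}^{\infty}$ of eigenvectors of $\gamma$. Since $\gamma \in \cL^1(\fh)$ is trace class and self-adjoint with $0 \leq \gamma \leq \mathbbm{1}$, its non-zero spectrum consists of eigenvalues $\lambda_k \in [0,1]$, and this ONB exists.

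Given this diagonalization, the statement $P \leq 2\gamma$ reduces to verifying, for each $k$, that the eigenvalue of $P$ on $\varphi_k$ is bounded by $2\lambda_k$. I would split into two cases: if $\lambda_k > \tfrac{1}{2}$, then $P\varphi_k = \varphi_k$ and $2\gamma \varphi_k = 2\lambda_k \varphi_k$ with $2\lambda_k > 1$, so the inequality $1 \leq 2\lambda_k$ holds; if $\lambda_k \leq \tfrac{1}{2}$, then $P\varphi_k = 0$ and $2\lambda_k \geq 0$, so $0 \leq 2\lambda_k$. The argument for $P^{\perp} \leq 2(\mathbbm{1} - \gamma)$ is symmetric: if $\lambda_k > \tfrac{1}{2}$, then $P^{\perp}\varphi_k = 0$ and $2(1-\lambda_k) \geq 0$, while if $\lambda_k \leq \tfrac{1}{2}$, then $P^{\perp}\varphi_k = \varphi_k$ and $2(1-\lambda_k) \geq 1$. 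Each case is a one-line scalar inequality.

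There is no real obstacle here; the only subtlety worth mentioning is the justification that an eigenbasis exists, which follows from $\gamma$ being a positive trace-class operator on a separable Hilbert space (so its spectrum is discrete away from $0$). Once the simultaneous diagonalization is established, both claimed operator inequalities become diagonal entry-wise inequalities and are immediate from the definitions of $P$ and $P^{\perp}$ as spectral projections associated with the threshold $\tfrac{1}{2}$. Thus a proof of at most a few lines, organized by the two cases $\lambda_k > \tfrac{1}{2}$ and $\lambda_k \leq \tfrac{1}{2}$, suffices to establish \eqref{Proj2}.
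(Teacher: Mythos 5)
Your proof is correct and follows essentially the same route as the paper: both diagonalize $\gamma$, $P$, and $P^{\perp}$ simultaneously in the eigenbasis $\{\varphi_k\}$ and reduce \eqref{Proj2} to the scalar inequalities $1\leq 2\lambda_k$ for $\lambda_k>\tfrac{1}{2}$ and $1\leq 2(1-\lambda_k)$ for $\lambda_k\leq\tfrac{1}{2}$, together with positivity of the remaining terms. Nothing is missing.
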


Note that, since $\mathrm{rk}\left\{P\right\}\leq 2N$, $P$ is of finite rank and, hence, trace class.

\begin{proof}
Using the definition of the projections together with $0\leq\gamma\leq \mathbbm{1}$, one finds for $P$:
\begin{align*}
P=\sum\limits_{k>\frac{1}{2}}\left|\varphi_k\right>\left<\varphi_k\right|\leq\sum\limits_{k>\frac{1}{2}}2 
      \lambda_k \left|\varphi_k\right>\left<\varphi_k\right|\leq2\sum\limits_{k=1}^\infty\lambda_k\left|\varphi_k\right>
      \left<\varphi_k\right|=2\gamma,
\end{align*}
and for $P^{\perp}$:
\begin{align*}
 P^{\perp}=\sum\limits_{k\leq\frac{1}{2}}\left|\varphi_k\right>\left<\varphi_k\right| 
		\leq \sum\limits_{k\leq\frac{1}{2}}2\left(1-\lambda_k\right)\left|\varphi_k\right>\left<\varphi_k\right| 
      \leq 2\left(\mathbbm{1}-\gamma\right).
\end{align*}
\end{proof}

Thanks to $P^\perp + P = \mathbbm{1}$ we can expand $\TRhh{\left(X\otimes X\right)\Gamma}$ into three parts to have expressions
on which we can apply the conditions on $\left(\gamma,\Gamma\right)$. We denote this three parts by Main Part (MP), 
Remainder (R) and Main Error Term (MET).

\begin{lem}
Let $X$, and $P$ and $P^{\perp}$ be as defined in (\ref{defX}) and (\ref{defP}), respectively. Then
\begin{align}
\nonumber &\TRhh{\left(X\otimes X\right)\Gamma}\\
\nonumber &\quad	 = \TRhh{\left(PXP\otimes PXP\right)\Gamma} 
				+ 4\,\mathfrak{Re}\left\{\TRhh{\left(PXP\otimes P^{\perp}XP\right)\Gamma}\right\}\\
\nonumber &\qquad\qquad\qquad\qquad\qquad\qquad\qquad\qquad\qquad
			 + 2\,\TRhh{\left(PXP^{\perp}\otimes P^{\perp}XP\right)\Gamma}\\
\nonumber &\quad\quad + \TRhh{\left(P^{\perp}XP^{\perp}\otimes P^{\perp}XP^{\perp}\right)\Gamma} 
				+ 4\,\mathfrak{Re}\left\{\TRhh{\left(P^{\perp}XP\otimes P^{\perp}XP^{\perp}\right)\Gamma}
							\right\}\\
\nonumber &\qquad\qquad\qquad\qquad\qquad\qquad\qquad\qquad\qquad
			+ 2\,\TRhh{\left(P^{\perp}XP^{\perp}\otimes PXP\right)\Gamma}\\
		  &\qquad + 2\,\mathfrak{Re}\left\{\TRhh{\left(PXP^{\perp}\otimes PXP^{\perp}\right)\Gamma}
		  	\right\}. \label{MP+R}
\end{align}
\end{lem}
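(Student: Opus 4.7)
The plan is to insert the resolution $\mathbbm{1} = P + P^\perp$ on each of the four sides of $X \otimes X$. Writing
\begin{align*}
X \;=\; PXP \,+\, PXP^\perp \,+\, P^\perp XP \,+\, P^\perp XP^\perp,
\end{align*}
and expanding the tensor product yields sixteen terms of the form $\TRhh{(Q_1 X Q_2 \otimes Q_3 X Q_4)\,\Gamma}$ with $Q_i \in \{P, P^\perp\}$. The claimed identity is then obtained by collapsing these sixteen terms into the seven displayed groups. There is no analytic content, only a careful bookkeeping step based on two symmetries of $\Gamma$.

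The first symmetry is the antisymmetry already established in Section 2: from $\mathrm{Ex}\,\Gamma = \Gamma\,\mathrm{Ex} = -\Gamma$ one gets $\mathrm{Ex}\,\Gamma\,\mathrm{Ex} = \Gamma$, so by cyclicity of the trace
\begin{align*}
\TRhh{(A\otimes B)\,\Gamma} \;=\; \TRhh{\mathrm{Ex}(A\otimes B)\mathrm{Ex}\,\Gamma} \;=\; \TRhh{(B\otimes A)\,\Gamma}
\end{align*}
for any $A,B \in \cB(\fh)$. The second is self-adjointness: since $\Gamma = \Gamma^*$ and $(Q_1 X Q_2)^* = Q_2 X Q_1$, taking complex conjugates gives
\begin{align*}
\overline{\TRhh{(A\otimes B)\,\Gamma}} \;=\; \TRhh{(A^*\otimes B^*)\,\Gamma},
\end{align*}
which lets us combine conjugate pairs into $2\,\mathfrak{Re}\{\cdot\}$ terms.

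Applying these two rules tracks through as follows. The two purely diagonal terms $PXP\otimes PXP$ and $P^\perp XP^\perp \otimes P^\perp XP^\perp$ stand alone (lines 1 and 4 of \eqref{MP+R}). The four ``single $P^\perp$'' products $PXP\otimes PXP^\perp$, $PXP^\perp\otimes PXP$, $PXP\otimes P^\perp XP$, $P^\perp XP\otimes PXP$ collapse pairwise under the $\mathrm{Ex}$-symmetry and then further by conjugation into the single $4\,\mathfrak{Re}\{\TRhh{(PXP\otimes P^\perp XP)\Gamma}\}$ term; the four ``triple $P^\perp$'' products do the same, producing the other $4\,\mathfrak{Re}\{\cdot\}$ term. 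The mixed-diagonal pair $PXP\otimes P^\perp XP^\perp$ and $P^\perp XP^\perp \otimes PXP$ are equal by the $\mathrm{Ex}$-symmetry and already real since $PXP$ and $P^\perp XP^\perp$ are self-adjoint, giving $2\,\TRhh{(P^\perp X P^\perp\otimes PXP)\Gamma}$; the pair $PXP^\perp\otimes P^\perp XP$ and $P^\perp XP\otimes PXP^\perp$ is similarly $\mathrm{Ex}$-identified and self-conjugate, producing $2\,\TRhh{(PXP^\perp\otimes P^\perp XP)\Gamma}$. Finally, $PXP^\perp \otimes PXP^\perp$ and its complex conjugate $P^\perp XP\otimes P^\perp XP$ merge into $2\,\mathfrak{Re}\{\TRhh{(PXP^\perp\otimes PXP^\perp)\Gamma}\}$.

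The only obstacle is organizational: one must verify that the 16 expanded products map bijectively onto these 7 listed groups with the correct multiplicities ($1,4,2,1,4,2,2$ summing to $16$). Once both symmetries are in hand, the identity is a finite algebraic rearrangement.
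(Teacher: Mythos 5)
Your proof is correct and follows essentially the same route as the paper: insert $P+P^{\perp}$ on each side of each $X$, expand into sixteen terms, and regroup using $\TRhh{\left(A\otimes B\right)\Gamma}=\TRhh{\left(B\otimes A\right)\Gamma}$. The only difference is that you make explicit the conjugation symmetry $\overline{\TRhh{\left(A\otimes B\right)\Gamma}}=\TRhh{\left(A^*\otimes B^*\right)\Gamma}$ (valid since $\Gamma=\Gamma^*$ in this context, e.g.\ by the P-Condition) needed to merge adjoint pairs into the $\mathfrak{Re}$ terms, a step the paper leaves implicit in its ``after rearranging.''
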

\begin{proof}
After replacing the identity operator on each side of $X$ in each factor of the tensor product $X\otimes X$ by 
$\left(P+P^{\perp}\right)$, one can expand the r.h.s. of
\begin{align*} 
\nonumber&\TRhh{\left(X\otimes X\right)\Gamma}\\
\nonumber&\qquad=\TRhh{\left(\left(P+P^{\perp}\right)X\left(P+P^{\perp}\right)\otimes 
    \left(P+P^{\perp}\right)X\left(P+P^{\perp}\right)\right)\Gamma}.
\end{align*} 
Using $\TRhh{\left(A\otimes B\right)\Gamma}=\TRhh{\left(B\otimes A\right)\Gamma}$, which is
a consequence of $\mathrm{Ex}\,\Gamma\,\mathrm{Ex}=\Gamma$, one arrives at the assertion after rearranging.
\end{proof}

Afterwards, we collect the terms of (\ref{MP+R}) in a suitable way. Note that compared to \cite{VBA} the definitions of 
the Main Part and the Remainder are slightly changed.

\begin{defn}
The term
\begin{align*}
\nonumber T_{\mathrm{MP}}&:=\TRhh{\left(PXP\otimes PXP\right)\Gamma} 
				+ 4\,\mathfrak{Re}\left\{\TRhh{\left(PXP\otimes P^{\perp}XP\right)\Gamma}\right\}\\
			&\qquad+4\,\TRhh{\left(PXP^{\perp}\otimes P^{\perp}XP\right)\Gamma}
\end{align*}
is called Main Part,
\begin{align*}
\nonumber T_{\mathrm{R}}&:=\TRhh{\left(P^{\perp}XP^{\perp}\otimes P^{\perp}XP^{\perp}\right)\Gamma} 
	+ 4\,\mathfrak{Re}\left\{\TRhh{\left(P^{\perp}XP\otimes P^{\perp}XP^{\perp}\right)\Gamma}\right\}\\
    &\qquad+2\,\TRhh{\left(P^{\perp}XP^{\perp}\otimes PXP\right)\Gamma}-2\,\TRhh{\left(PXP^{\perp}\otimes 
	      P^{\perp}XP\right)\Gamma}
\end{align*}
is called Remainder, and
\begin{align*}
T_{\mathrm{MET}}:= 2\,\mathfrak{Re}\left\{\TRhh{\left(PXP^{\perp}\otimes PXP^{\perp}\right)\Gamma}\right\}
\end{align*}
is called Main Error Term.
\end{defn} 

One estimate is used more than once when considering the terms in the Remainder and Main Error Term. This
estimate requires the following lemma.

\begin{lem} \label{lem4a}
Let $\left\{\psi_i\right\}_{i=1}^{\infty}\subseteq\fh$ be an ONB, $Q=Q^*=Q^2$, $Q^{\perp}:=\mathbbm{1}-Q$, and $Y=Y^*=Y^2\in\cB(\fh)$ orthogonal projections.
For $r,s\in\mathbbm{N}$ define
\begin{align}
B\left(r,s\right):=\left| QY \psi_r\right>\left< Q^{\perp} Y\psi_s\right|\in\cB(\fh).
\end{align}
Then one has
\begin{align}
\nonumber &\sum\limits_{r,s=1}^{\infty}\TRhh{\left(B^*\left(r,s\right)\otimes
	B\left(r,s\right)\right)
	\left(\Gamma+\mathrm{Ex}\left(\gamma\otimes\mathbbm{1}\right)\right)}\\
		&\qquad\qquad=\TRhh{\left(QYQ\otimes Q^{\perp}YQ^{\perp}\right)
		\left(-\Gamma+\mathbbm{1}\otimes\gamma\right)}. \label{HIL}
\end{align}
\end{lem}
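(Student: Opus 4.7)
The plan is to expand the rank-one operators $B(r,s)$ and their tensor products explicitly, evaluate the partial traces against $\Gamma$ and $\Ex(\gamma\otimes\mathbbm{1})$ term by term, and then sum over $r,s$ using the completeness relation for the ONB $\{\psi_i\}$ to reconstruct the operators $QYQ$ and $Q^\perp Y Q^\perp$ on the right-hand side.

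First I would set $\alpha_r := QY\psi_r$ and $\beta_s := Q^\perp Y\psi_s$, so that
\begin{align*}
B^*(r,s)\otimes B(r,s) \;=\; \bigl|\beta_s\otimes\alpha_r\bigr\rangle\bigl\langle\alpha_r\otimes\beta_s\bigr|.
\end{align*}
This reduces the partial trace against any operator $M$ on $\fh\otimes\fh$ to the expectation value
$\text{tr}_2\{(B^*\otimes B)M\}=\langle\alpha_r\otimes\beta_s|M|\beta_s\otimes\alpha_r\rangle$.

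Next I would treat the two contributions separately. For the $\Gamma$-contribution, I would use the antisymmetry $\Gamma\,\Ex=-\Gamma$ (equivalently $\Gamma|\beta_s\otimes\alpha_r\rangle=-\Gamma|\alpha_r\otimes\beta_s\rangle$) to rewrite
\begin{align*}
\langle\alpha_r\otimes\beta_s|\Gamma|\beta_s\otimes\alpha_r\rangle
\;=\;-\,\langle\alpha_r\otimes\beta_s|\Gamma|\alpha_r\otimes\beta_s\rangle .
\end{align*}
Summing over $r,s$ and exploiting $\sum_r|\alpha_r\rangle\langle\alpha_r|=QY^2Q=QYQ$ and $\sum_s|\beta_s\rangle\langle\beta_s|=Q^\perp Y Q^\perp$, which follows from the ONB completeness $\sum_i|\psi_i\rangle\langle\psi_i|=\mathbbm{1}$ and $Y^2=Y$, yields
\begin{align*}
\sum_{r,s=1}^\infty \text{tr}_2\bigl\{(B^*(r,s)\otimes B(r,s))\,\Gamma\bigr\}
\;=\;-\,\text{tr}_2\bigl\{(QYQ\otimes Q^\perp Y Q^\perp)\,\Gamma\bigr\} .
\end{align*}

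For the $\Ex(\gamma\otimes\mathbbm{1})$-contribution I would compute directly using $\Ex(\gamma\otimes\mathbbm{1})|\beta_s\otimes\alpha_r\rangle=|\alpha_r\otimes\gamma\beta_s\rangle$, giving the product expectation $\langle\alpha_r|\alpha_r\rangle\langle\beta_s|\gamma\beta_s\rangle$; summing over $r,s$ reproduces $\text{tr}_1\{QYQ\}\cdot\text{tr}_1\{Q^\perp Y Q^\perp\gamma\}=\text{tr}_2\{(QYQ\otimes Q^\perp Y Q^\perp)(\mathbbm{1}\otimes\gamma)\}$. Adding the two identities gives exactly \eqref{HIL}.

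The main obstacle is essentially bookkeeping: one must apply $\Gamma\,\Ex=-\Gamma$ on the correct side of the exchanged ket, and one should briefly justify interchanging the infinite sums with the partial traces, which is legitimate because $\Gamma\in\cL^1(\fh\otimes\fh)$ and $\gamma\in\cL^1(\fh)$ by Lemma~2.1, so all the operators $QYQ$, $Q^\perp Y Q^\perp$ are bounded and the resulting series are absolutely convergent.
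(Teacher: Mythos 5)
Your proof is correct and follows essentially the same route as the paper: a direct evaluation of the double sum using $\sum_r|QY\psi_r\rangle\langle QY\psi_r|=QYQ$, $\sum_s|Q^\perp Y\psi_s\rangle\langle Q^\perp Y\psi_s|=Q^\perp YQ^\perp$ (via $Y^2=Y$), the antisymmetry $\Gamma\,\mathrm{Ex}=-\Gamma$, and the exchange identity for the $\gamma$-term. The paper carries out the same computation in ONB matrix elements and then invokes cyclicity of the trace, so your rank-one-operator bookkeeping is just a tidier presentation of the identical argument.
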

\begin{proof}
Denoting $K:=\sum\limits_{r,s=1}^{\infty}\TRhh{\left(B^*\left(r,s\right)\otimes
	B\left(r,s\right)\right)
	  \left(\Gamma+\mathrm{Ex}\left(\gamma\otimes\mathbbm{1}\right)\right)}$, we calculate the trace using the 
	ONB $\left\{\psi_i\right\}_{i=1}^{\infty}\subseteq\fh$.
\begin{align*}
\nonumber K&=\sum\limits_{r,s=1}^{\infty}\sum\limits_{k,l,m,n=1}^{\infty}\left<\psi_k\otimes\psi_l\big|
 \left(B^*\left(r,s\right)\otimes B\left(r,s\right)\right)\left(\psi_m\otimes\psi_n\right)\right>\\
\nonumber &\qquad\qquad\qquad\qquad\times\left<\psi_m\otimes\psi_n\big|\left(\Gamma +\mathrm{Ex}
	\left(\gamma\otimes\mathbbm{1}\right)\right)\left(\psi_k\otimes\psi_l\right)\right>.
\end{align*}
In the next step, the definition of $B\left(r,s\right)$ together with 
$\left(A\right)_{ij}:=\left<\psi_i|\,A\psi_j\right>$, for any $A\in\cB(\fh)$,
can be used to write
\begin{align*}
\nonumber K&=\sum\limits_{r,s=1}^{\infty}\sum\limits_{k,l,m,n=1}^{\infty}\left(YQ\right)_{rm}
	\left(Q^{\perp}Y\right)_{ks}\left(QY\right)_{lr}
	\left(YQ^{\perp}\right)_{sn}\\
\nonumber &\qquad\qquad\qquad\qquad\times\left<\psi_m\otimes\psi_n\big|\left(\Gamma +\mathrm{Ex}
	\left(\gamma\otimes\mathbbm{1}\right)\right)\left(\psi_k\otimes\psi_l\right)\right>.
\end{align*}
Performing the summation over $r$ and $s$ leads to
\begin{align*}
\nonumber K&=\sum\limits_{k,l,m,n=1}^{\infty}\left(QYQ\right)_{lm}
\left(Q^{\perp}YQ^{\perp}\right)_{kn}\\
\nonumber &\qquad\qquad\qquad\qquad\times\left<\psi_m\otimes\psi_n\big|\left(\Gamma +\mathrm{Ex}
	\left(\gamma\otimes\mathbbm{1}\right)\right)\left(\psi_k\otimes\psi_l\right)\right>\\
\nonumber &=\sum\limits_{k,l,m,n=1}^{\infty}\left<\psi_l\otimes\psi_k\big|\left(QYQ\otimes 
	Q^{\perp}YQ^{\perp}\right)\left(\psi_m\otimes\psi_n\right)\right>\\
\nonumber &\qquad\qquad\qquad\qquad\times\left<\psi_m\otimes\psi_n\big|\left(\Gamma +\mathrm{Ex}
	\left(\gamma\otimes\mathbbm{1}\right)\right)\left(\psi_k\otimes\psi_l\right)\right>.
\end{align*}
The summation over $m$ and $n$ can also be carried out. Finally, the summation over $k$ and $l$ gives
\begin{align*}
\nonumber K&=\sum\limits_{k,l=1}^{\infty}
	\left<\psi_k\otimes\psi_l\big|\left(\mathrm{Ex}\left(QYQ\otimes Q^{\perp}YQ^{\perp}\right)
	\left(\Gamma+\mathrm{Ex}\left(\gamma\otimes\mathbbm{1}\right)\right)\right)\left(\psi_k\otimes\psi_l\right)\right>\\
&=\TRhh{\left(QYQ\otimes Q^{\perp}YQ^{\perp}\right)\left(-\,\Gamma+\mathbbm{1}\otimes\gamma\right)},
\end{align*}
using the cyclicity of the trace, $\Gamma\,\mathrm{Ex}=-\,\Gamma$, and $\mathrm{Ex}\left(\mathbbm{1}\otimes\gamma\right)\mathrm{Ex}=\gamma\otimes\mathbbm{1}$.
\end{proof}
\begin{rem}
By changing the definition of $B\left(r,s\right)$ it is also possible to treat, 
for example, $QYQ\otimes QYQ$ similarly. However, it is important to notice that 
$\sum\limits_{r,s}B^*\left(r,s\right)\otimes B\left(r,s\right)$ is, in general, indefinite.
In fact, the trace of $B\left(r,s\right)$ is vanishing in our case and so is the trace of $B^*\left(r,s\right)\otimes
B\left(r,s\right)$. Hence, $B^*\left(r,s\right)\otimes
B\left(r,s\right)$ is either indefinite or zero. Furthermore, since $P^\perp\ \text{and}\ P$ commute with $\gamma$, we also 
have $\TRh{B\left(r,s\right)\gamma}=0$.
\end{rem}

A consequence of Lemma \ref{lem4a} is a key inequality for proving the estimate on the Remainder and
the Main Error Term. This inequality is given in (\ref{Fundsatz}).

\begin{lem}
Let $X$, and $P$ and $P^{\perp}$ be as defined in (\ref{defX}) and (\ref{defP}), respectively.
Assume that $\left(\gamma,\Gamma\right)$ is admissible and fulfills the G-Condition. Then
\begin{align}
\quad\TRhh{\left(PXP\otimes P^{\perp}XP^{\perp}\right)\Gamma}\leq 4\,\TRh{X\gamma}
  \TRh{X\left(\gamma-\gamma^2\right)}. \label{Fundsatz}
\end{align}
\end{lem}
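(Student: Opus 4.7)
The plan is to apply Lemma~\ref{lem4a} with $Q := P$, $Y := X$, and the ONB $\{\psi_i\} = \{\vphi_i\}$ of eigenvectors of $\gamma$, and to feed this into the G-Condition. Concretely, define the rank-one operators $B(r,s) := \left|PX\vphi_r\right>\left<P^\perp X\vphi_s\right|$. Since $P^\perp$ commutes with $\gamma$ and $P^\perp P = 0$, one computes
$\TRh{B(r,s)\gamma} = \left<P^\perp X\vphi_s,\, \gamma PX\vphi_r\right> = \left<\vphi_s,\, XP^\perp\gamma PX\vphi_r\right> = 0$.
Therefore the G-Condition applied to $A = B(r,s)$ yields termwise non-negativity,
$\TRhh{(B^*(r,s)\otimes B(r,s))(\Gamma + \Ex(\gamma\otimes\mathbbm{1}))} \geq 0$,
so summing over $r,s$ and invoking the identity (\ref{HIL}) gives
$\TRhh{(PXP \otimes P^\perp X P^\perp)(-\Gamma + \mathbbm{1}\otimes\gamma)} \geq 0$.

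Rearranging this inequality and using that the right-hand side factorizes (it is a trace of a tensor product of operators acting independently on each factor), we obtain
\begin{align*}
\TRhh{(PXP \otimes P^\perp X P^\perp)\Gamma} \leq \TRh{PXP}\cdot\TRh{P^\perp X P^\perp \gamma}.
\end{align*}
It then remains to estimate each factor in terms of $\TRh{X\gamma}$ and $\TRh{X(\gamma-\gamma^2)}$ respectively. For the first factor, cyclicity gives $\TRh{PXP} = \TRh{XP}$, and the operator inequality $P \leq 2\gamma$ from (\ref{Proj2}) yields $\TRh{XP} \leq 2\TRh{X\gamma}$ since $X \geq 0$.

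For the second factor, $P^\perp$ commutes with $\gamma$, so $\TRh{P^\perp X P^\perp \gamma} = \TRh{X P^\perp \gamma}$. On the range of $P^\perp$ we have $\gamma \leq 1/2$, hence $1-\gamma \geq 1/2$, and therefore the operator inequality $P^\perp \gamma \leq 2 P^\perp(\gamma - \gamma^2)$ holds on all of $\fh$. Combined with $P^\perp \leq \mathbbm{1}$ and $X, (\gamma-\gamma^2) \geq 0$ (using cyclicity to pair $X$ and $(\gamma-\gamma^2)^{1/2}$ around $P^\perp$), this gives $\TRh{XP^\perp \gamma} \leq 2\TRh{X(\gamma-\gamma^2)}$. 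Multiplying the two bounds yields the factor $4$ in (\ref{Fundsatz}).

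The only delicate point I anticipate is bookkeeping for the second factor: one must justify $\TRh{XP^\perp(\gamma-\gamma^2)} \leq \TRh{X(\gamma-\gamma^2)}$ despite $X$ and $P^\perp$ not commuting. This is done by writing $P^\perp(\gamma-\gamma^2) = (\gamma-\gamma^2)^{1/2} P^\perp (\gamma-\gamma^2)^{1/2}$ (legal because $P^\perp$ commutes with $\gamma$) and then using cyclicity to move the square roots around $X$, so that $P^\perp$ sits alone against a positive operator and can be replaced by $\mathbbm{1}$. Everything else is a short algebraic manipulation of traces.
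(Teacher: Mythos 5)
Your proposal is correct and follows essentially the same route as the paper: apply Lemma~\ref{lem4a} with $Q=P$, $Y=X$, use the G-Condition (whose right-hand side $\left|\TRh{B(r,s)\gamma}\right|^2$ is nonnegative anyway, so your observation that it vanishes is fine but not needed) to get $\TRhh{\left(PXP\otimes P^{\perp}XP^{\perp}\right)\left(-\Gamma+\mathbbm{1}\otimes\gamma\right)}\geq 0$, factor the resulting bound as $\TRh{PX}\,\TRh{P^{\perp}X\gamma}$, and then invoke (\ref{Proj2}) to produce the factor $4$. Your handling of the second factor via $P^{\perp}\gamma\leq 2P^{\perp}\left(\gamma-\gamma^{2}\right)$ and then dropping $P^{\perp}$ is just a slight repackaging of the paper's direct use of $P^{\perp}\leq 2\left(\mathbbm{1}-\gamma\right)$ inside $\TRh{X\sqrt{\gamma}P^{\perp}\sqrt{\gamma}X}$, so there is nothing substantively different.
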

\begin{proof}
First, we observe that (\ref{HIL}) with $Y=X$ and $Q=P$ and the G-Condition immediately lead to
	\begin{align*}
	\nonumber 0&\leq\sum\limits_{r,s=1}^{\infty}TRhh{\left( B^*\left(r,s\right)\otimes B\left(r,s\right)\right)
		\left(\Gamma+\mathrm{Ex}\left(\gamma\otimes
		\mathbbm{1}\right)\right)}\\
	  &=\TRhh{\left(PXP\otimes P^{\perp}XP^{\perp}\right)
	 	\left(-\,\Gamma+\left(\mathbbm{1}\otimes\gamma\right)\right)}.
	\end{align*}
	Consequently,
	\begin{align*}
	\nonumber\TRhh{\left(PXP\otimes P^{\perp}XP^{\perp}\right)\Gamma}&\leq \TRhh{\left(PXP\otimes 
	      P^{\perp}XP^{\perp}\right)\left(\mathbbm{1}\otimes\gamma\right)}\\
		&=\TRh{PX}\TRh{P^{\perp}X\gamma}.
	\end{align*}
Secondly, we permute the arguments in the trace cyclically and use that $\gamma$ is trace class and 
$PX$ and $P^{\perp}X$ are bounded.
Then we use  (\ref{Proj2}) to estimate the projections:
\begin{align*}
\TRh{PX}\TRh{P^{\perp}X\gamma}&=\TRh{XPX}\TRh{X\sqrt{\gamma}P^{\perp}\sqrt{\gamma}X}\\
      &\leq 4\,\TRh{X\gamma}\TRh{X\left(\gamma-\gamma^2\right)}
\end{align*}
since $\gamma P^\perp=\sqrt{\gamma}P^\perp\sqrt{\gamma}$.
\end{proof}

\begin{rem}
From $\TRhh{\left(A\otimes B\right)\Gamma}=\TRhh{\left(B\otimes A\right)\Gamma}$, for $A,B\in\cB(\fh)$, one 
directly can conclude
\begin{align}
\TRhh{\left(P^{\perp}XP^{\perp}\otimes PXP\right)\Gamma}\leq 4\,\TRh{X\gamma}\TRh{X\left(\gamma-\gamma^2\right)}. 
												    \label{Fundsatz2}
\end{align}
\end{rem}


\subsection{Estimation of the Remainder}

Now we consider the Remainder of (\ref{MP+R}):
\begin{align*}
\nonumber T_{\mathrm{R}}&:=\TRhh{\left(P^{\perp}XP^{\perp}\otimes P^{\perp}XP^{\perp}\right)\Gamma} 
 				+ 4\,\mathfrak{Re}\left\{\TRhh{\left(P^{\perp}XP\otimes P^{\perp}XP^{\perp}\right)
 							\Gamma}\right\}\\
			&\qquad+2\,\TRhh{\left(P^{\perp}XP^{\perp}\otimes PXP\right)\Gamma}
				-2\,\TRhh{\left(PXP^{\perp}\otimes P^{\perp}XP\right)\Gamma}.
\end{align*}
The first three terms, summed up in $T_{\mathrm{R}_1}$ and the last term, called $T_{\mathrm{R}_2}$, are treated 
separately to derive a lower bound.

\begin{lem} \label{lem4b}
Let $X$, and $P$ and $P^{\perp}$ be as defined in (\ref{defX}) and (\ref{defP}), respectively.
Assume that $\left(\gamma,\Gamma\right)$ is admissible and fulfills the G- and P-Conditions. Then
\begin{align*}
\nonumber T_{\mathrm{R}_1}&:=\TRhh{\left(P^{\perp}XP^{\perp}\otimes P^{\perp}XP^{\perp}\right)\Gamma} 
			      +2\,\TRhh{\left(P^{\perp}XP^{\perp}\otimes PXP\right)\Gamma}\\
\nonumber&\quad\qquad\qquad\qquad\qquad+ 4\,\mathfrak{Re}\left\{\TRhh{\left(P^{\perp}XP^{\perp}\otimes PXP^{\perp}\right)\Gamma}\right\}\\
&\geq -\,8\,\TRh{X\gamma}\TRh{X\left(\gamma-\gamma^2\right)}.
\end{align*}
\end{lem}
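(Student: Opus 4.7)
My plan is to exploit the P-Condition by completing a square with a well-chosen auxiliary operator, so that the hard cross-term combines with $\TRhh{(D\otimes D)\Gamma}$ into something manifestly nonnegative. I will introduce the shorthand $A := PXP$, $B := PXP^{\perp}$, $D := P^{\perp}XP^{\perp}$, so that $X = A + B + B^{*} + D$. Using $D = D^{*}$ and $\Gamma = \Gamma^{*}$, the real-part term expands as $4\,\mathfrak{Re}\{\TRhh{(D\otimes B)\Gamma}\} = 2\TRhh{(D\otimes B)\Gamma} + 2\TRhh{(D\otimes B^{*})\Gamma}$, so
\begin{align*}
T_{\mathrm{R}_1} \;=\; \TRhh{(D\otimes D)\Gamma} + 2\TRhh{(D\otimes A)\Gamma} + 2\TRhh{(D\otimes B)\Gamma} + 2\TRhh{(D\otimes B^{*})\Gamma}.
\end{align*}

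The decisive step will be to introduce $M := P^{\perp} + 2P$, a bounded self-adjoint operator on $\fh$. A direct expansion gives
\begin{align*}
MXM \;=\; P^{\perp}XP^{\perp} + 2P^{\perp}XP + 2PXP^{\perp} + 4PXP \;=\; D + 2(B + B^{*}) + 4A,
\end{align*}
and tensoring with $D$ on the first factor and taking the $\Gamma$-trace yields the identity $\TRhh{(D\otimes MXM)\Gamma} = T_{\mathrm{R}_1} + 2\TRhh{(D\otimes A)\Gamma}$. Since $D = (X^{1/2}P^{\perp})^{*}(X^{1/2}P^{\perp}) \geq 0$ and $MXM = (X^{1/2}M)^{*}(X^{1/2}M) \geq 0$, the tensor product $D\otimes MXM$ is a nonnegative bounded operator on $\fh\otimes\fh$, so by the P-Condition $\Gamma \geq 0$ we conclude $\TRhh{(D\otimes MXM)\Gamma} \geq 0$. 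Rearranging the identity gives
\begin{align*}
T_{\mathrm{R}_1} \;\geq\; -2\TRhh{(D\otimes A)\Gamma} \;=\; -2\TRhh{(A\otimes D)\Gamma},
\end{align*}
where the last equality uses $\TRhh{(U\otimes V)\Gamma} = \TRhh{(V\otimes U)\Gamma}$, a consequence of $\mathrm{Ex}\,\Gamma\,\mathrm{Ex} = \Gamma$.

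Finally, I will invoke the G-Condition only through the already-proved inequality (\ref{Fundsatz}), which yields $\TRhh{(A\otimes D)\Gamma} = \TRhh{(PXP\otimes P^{\perp}XP^{\perp})\Gamma} \leq 4\TRh{X\gamma}\TRh{X(\gamma-\gamma^{2})}$, and the claim $T_{\mathrm{R}_1} \geq -8\TRh{X\gamma}\TRh{X(\gamma-\gamma^{2})}$ follows at once. I do not expect any serious technical difficulty here; the single nontrivial idea is identifying the conjugator $M = P^{\perp} + 2P$ that produces the desired perfect square with the correct coefficient $4$ in front of $A$, after which all the remaining work is bookkeeping and exactly one application of (\ref{Fundsatz}) closes the estimate.
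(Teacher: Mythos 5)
Your proof is correct, and it takes a genuinely different (though closely parallel) route to the paper's. The paper handles the troublesome cross term by bounding $4\,\mathfrak{Re}\{\cdot\}$ from below by $-4\left|\cdot\right|$, factorizing $P^{\perp}XP\otimes P^{\perp}XP^{\perp}=\left(P^{\perp}X\otimes P^{\perp}X\right)\left(XP\otimes XP^{\perp}\right)$, applying the Cauchy--Schwarz inequality (\ref{CSIN}) for the semidefinite form $\left(A,B\right)\mapsto\TRhh{A^*B\,\Gamma}$ (this is where the P-Condition enters there), and finally invoking the scalar inequality $x^2-4bx\geq-4b^2$; you instead complete the square at the operator level: with $M:=P^{\perp}+2P$ you obtain $T_{\mathrm{R}_1}+2\,\TRhh{\left(PXP\otimes P^{\perp}XP^{\perp}\right)\Gamma}=\TRhh{\left(P^{\perp}XP^{\perp}\otimes MXM\right)\Gamma}\geq 0$, the positivity coming from $P^{\perp}XP^{\perp}=\left(XP^{\perp}\right)^*\left(XP^{\perp}\right)\geq 0$, $MXM=\left(XM\right)^*\left(XM\right)\geq 0$, and the P-Condition $\Gamma\geq 0$. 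Both routes land on exactly the same intermediate estimate $T_{\mathrm{R}_1}\geq-2\,\TRhh{\left(PXP\otimes P^{\perp}XP^{\perp}\right)\Gamma}$ and close with (\ref{Fundsatz}) (the only place the G-Condition is used), so the constant $8$ is identical. Your version is slightly more economical --- no absolute values, square roots, or Cauchy--Schwarz are needed, only linearity of the trace and positivity --- and it is precisely the operator analogue of the scalar square completion underlying the paper's $x^2-4bx\geq-4b^2$; it also mirrors the paper's own treatment of the Main Part, where the conjugator $P+2P^{\perp}$ plays the same role in (\ref{TmpBP}). What the paper's formulation buys in exchange is a reusable inequality, (\ref{CSIN}), which is applied again to $T_{\mathrm{R}_2}$ and (in modified form) to the Main Error Term, whereas your square completion is tailored to this particular combination of terms.
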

\begin{proof}
First, we use $\mathfrak{Re}\left(\zeta\right)\geq -\left|\zeta\right|$, for any complex number $\zeta$, and
$P^{\perp}XP\otimes P^{\perp}XP^{\perp}=\left(P^{\perp}X\otimes P^{\perp}X\right)\left(XP\otimes X P^{\perp}\right)$ to infer
\begin{align*}
\nonumber  T_{\mathrm{R}_1}&=\TRhh{\left(P^{\perp}XP^{\perp}\otimes P^{\perp}XP^{\perp}\right)\Gamma}
	  +2\,\TRhh{\left(PXP\otimes P^{\perp}XP^{\perp}\right)\Gamma}\\
\nonumber &\qquad\qquad\qquad+4\,\mathfrak{Re}\left\{\TRhh{\left(P^{\perp}XP\otimes P^{\perp}XP^{\perp}\right)\Gamma}\right\}\\
\nonumber & \geq\TRhh{\left(P^{\perp}XP^{\perp}\otimes P^{\perp}XP^{\perp}\right)\Gamma}
	+2\,\TRhh{\left(PXP\otimes P^{\perp}XP^{\perp}\right)\Gamma}\\
\nonumber&\qquad\qquad\qquad-4\left|\TRhh{\left(P^{\perp}X\otimes P^{\perp}X\right)\left(XP\otimes X P^{\perp}\right)\Gamma}\right|.
\end{align*}
Then we use that $\left(A,B\right):=
\TRhh{A^* B\,\Gamma}$ defines a positive semidefinite Hermitian form on $\cB(\fh\otimes\fh)$, due to $\Gamma\geq0$, which is the P-Condition. Hence, the Cauchy--Schwarz inequality
\begin{align}
\left|\left(A,B\right)\right|\leq\left(A,A\right)^{\frac{1}{2}}
	\left(B,B\right)^{\frac{1}{2}} \label{CSIN}
\end{align} 
holds, and we arrive at
\begin{align*}
\nonumber& T_{\mathrm{R}_1}\geq\TRhh{\left(P^{\perp}XP^{\perp}\otimes P^{\perp}XP^{\perp} \right)\Gamma}
		+2\,\TRhh{\left(PXP\otimes P^{\perp}XP^{\perp} \right)\Gamma}\\
\nonumber&\qquad-4\left(\TRhh{\left(P^{\perp}XP^{\perp}\otimes 
      P^{\perp}XP^{\perp}\right)\Gamma}\right)^{\frac{1}{2}}\left(\TRhh{\left(PXP\otimes 
      P^{\perp}XP^{\perp}\right)\Gamma}\right)^{\frac{1}{2}}.
\end{align*}
As 
$x^2-4bx\geq -\,4b^2$, for $x:=\left(\TRhh{\left(P^{\perp}XP^{\perp}\otimes 
      P^{\perp}XP^{\perp} \right)\Gamma}\right)^{\frac{1}{2}}$ and $b:=\left(\TRhh{\left(PXP\otimes 
      P^{\perp}XP^{\perp}\right)\Gamma}\right)^{\frac{1}{2}}$, one then easily concludes
\begin{align*}
\nonumber T_{\mathrm{R}_1}&\geq -\,4\,\TRhh{\left(PXP\otimes P^{\perp}XP^{\perp}\right)\Gamma}+2\,\TRhh{\left(PXP\otimes 
      P^{\perp}XP^{\perp}\right)\Gamma}\\
&=-\,2\,\TRhh{\left(PXP\otimes P^{\perp}XP^{\perp}\right)\Gamma}.
\end{align*}
The proof is completed by using (\ref{Fundsatz}):
\begin{align*}
 T_{\mathrm{R}_1}\geq-\,2\,\TRhh{\left(PXP\otimes P^{\perp}XP^{\perp}\right)\Gamma}\geq
	-\,8\,\TRh{X\gamma}\TRh{X\left(\gamma-\gamma^2\right)}.
\end{align*}
\end{proof}

The estimate on $T_{\mathrm{R}_2}:=-\,2\,\TRhh{\left(P^{\perp}XP\otimes PXP^{\perp}\right)\Gamma}$ is addressed in 
the next lemma.

\begin{lem}\label{lem4c}
Let $X$, and $P$ and $P^{\perp}$ be as defined in (\ref{defX}) and (\ref{defP}), respectively. 
Assume that $\left(\gamma,\Gamma\right)$ is admissible and fulfills the G- and P-Conditions. Then
\begin{align}
&\quad T_{\mathrm{R}_2}= -\,2\,\TRhh{\left(P^{\perp}XP\otimes PXP^{\perp}\right)\Gamma}\geq-\,8\,\TRh{X\gamma}
      \TRh{X\left(\gamma-\gamma^2\right)}. \label{RemR}
\end{align}
\end{lem}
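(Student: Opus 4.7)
The plan is to estimate $T_{R_2}$ in essentially the same spirit as $T_{R_1}$ was treated in Lemma~\ref{lem4b}: combine a Cauchy--Schwarz step based on the P-Condition with the key inequality (\ref{Fundsatz}) that followed from the G-Condition. The only new ingredient is finding the correct operator factorization of $P^{\perp}XP\otimes PXP^{\perp}$.

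First I would factor $P^{\perp}XP\otimes PXP^{\perp}=A^*B$, where
$$A\ :=\ XP^{\perp}\otimes XP,\qquad B\ :=\ XP\otimes XP^{\perp}.$$
Using $X^2=X$ one checks $A^*A=P^{\perp}XP^{\perp}\otimes PXP$ and $B^*B=PXP\otimes P^{\perp}XP^{\perp}$. Since $\left(A,B\right):=\TRhh{A^*B\,\Gamma}$ is a positive semidefinite Hermitian form on $\cB(\fh\otimes\fh)$ by the P-Condition, the Cauchy--Schwarz inequality (\ref{CSIN}) yields
\begin{align*}
\left|\TRhh{\left(P^{\perp}XP\otimes PXP^{\perp}\right)\Gamma}\right|^2
\ \leq\ \TRhh{\left(P^{\perp}XP^{\perp}\otimes PXP\right)\Gamma}\cdot\TRhh{\left(PXP\otimes P^{\perp}XP^{\perp}\right)\Gamma}.
\end{align*}

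Next, because $\mathrm{Ex}\,\Gamma\,\mathrm{Ex}=\Gamma$ the identity $\TRhh{\left(U\otimes V\right)\Gamma}=\TRhh{\left(V\otimes U\right)\Gamma}$ holds, so the two factors on the right are equal. Moreover each is non-negative since $PXP\geq0$, $P^{\perp}XP^{\perp}\geq0$ and $\Gamma\geq0$. Hence
\begin{align*}
\left|\TRhh{\left(P^{\perp}XP\otimes PXP^{\perp}\right)\Gamma}\right|
\ \leq\ \TRhh{\left(PXP\otimes P^{\perp}XP^{\perp}\right)\Gamma}.
\end{align*}

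Finally, I would apply the key estimate (\ref{Fundsatz}) to the right-hand side, giving
$$\TRhh{\left(PXP\otimes P^{\perp}XP^{\perp}\right)\Gamma}\ \leq\ 4\,\TRh{X\gamma}\TRh{X\left(\gamma-\gamma^2\right)},$$
and conclude
$$T_{\mathrm{R}_2}\ =\ -2\,\TRhh{\left(P^{\perp}XP\otimes PXP^{\perp}\right)\Gamma}\ \geq\ -2\left|\TRhh{\left(P^{\perp}XP\otimes PXP^{\perp}\right)\Gamma}\right|\ \geq\ -8\,\TRh{X\gamma}\TRh{X\left(\gamma-\gamma^2\right)}.$$
The only step requiring any thought is hitting upon the factorization $A=XP^{\perp}\otimes XP$, $B=XP\otimes XP^{\perp}$, which is chosen precisely so that both $A^*A$ and $B^*B$ are of the product-of-projections form covered by (\ref{Fundsatz}); everything else is bookkeeping.
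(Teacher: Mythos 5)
Your proposal is correct and follows essentially the same route as the paper: the paper also bounds $T_{\mathrm{R}_2}$ by its absolute value, applies the Cauchy--Schwarz inequality (\ref{CSIN}) for the form $\left(A,B\right)=\TRhh{A^*B\,\Gamma}$ (positive semidefinite by the P-Condition) to obtain the geometric mean of $\TRhh{\left(P^{\perp}XP^{\perp}\otimes PXP\right)\Gamma}$ and $\TRhh{\left(PXP\otimes P^{\perp}XP^{\perp}\right)\Gamma}$, uses the exchange symmetry to identify the two factors, and concludes with (\ref{Fundsatz}). Your explicit factorization $A=XP^{\perp}\otimes XP$, $B=XP\otimes XP^{\perp}$ merely spells out what the paper leaves implicit.
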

\begin{proof}
First, the left side is estimated by its absolute value. Then the Cauchy--Schwarz inequality (\ref{CSIN}) is used:
\begin{align}
\nonumber &-2\,\TRhh{\left(P^{\perp}XP\otimes PXP^{\perp}\right)\Gamma}\\
\nonumber &\qquad\geq-\,2\left|\TRhh{\left(P^{\perp}XP\otimes PXP^{\perp}\right)\Gamma}\right|\\
\nonumber &\qquad\geq-\,2\left(\TRhh{\left(P^{\perp}XP^{\perp}\otimes PXP\right)\Gamma}\right)^{\frac{1}{2}}
				\left(\TRhh{\left(PXP\otimes P^{\perp}XP^{\perp}\right)\Gamma}\right)^{\frac{1}{2}}\\
&\qquad=-\,2\,\TRh{\left(PXP\otimes P^\perp X P^\perp\right)\Gamma}. \label{IneqN}
\end{align}
The assertion (\ref{RemR}) follows again from (\ref{Fundsatz}).
\end{proof}

Summing up the results, we obtain the following estimate of the Remainder directly from Lemmata \ref{lem4b} and \ref{lem4c}.

\begin{thm}\label{mthm4b}
Let $X$, and $P$ and $P^{\perp}$ be as defined in (\ref{defX}) and (\ref{defP}), respectively. 
Assume that $\left(\gamma,\Gamma\right)$ is admissible and fulfills the G- and P-Conditions. Then
\begin{align*}
&\qquad T_{\mathrm{R}}\geq -\,16\,\TRh{X\gamma}\TRh{X\left(\gamma-\gamma^2\right)}.
\end{align*}
\end{thm}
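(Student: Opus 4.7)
The plan is immediate from the two preceding lemmas: the Remainder has been explicitly written as a sum
\begin{align*}
T_{\mathrm{R}} \;=\; T_{\mathrm{R}_1} \;+\; T_{\mathrm{R}_2},
\end{align*}
where $T_{\mathrm{R}_1}$ collects the three terms bounded in Lemma \ref{lem4b} and $T_{\mathrm{R}_2}=-2\,\mathrm{tr}_2\{(P^{\perp}XP\otimes PXP^{\perp})\Gamma\}$ is the one handled in Lemma \ref{lem4c}. So I would simply invoke both lemmas and add the resulting estimates.

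Concretely, I would first apply Lemma \ref{lem4b} to obtain
\begin{align*}
T_{\mathrm{R}_1}\;\geq\;-8\,\mathrm{tr}_1\{X\gamma\}\,\mathrm{tr}_1\{X(\gamma-\gamma^2)\},
\end{align*}
and then apply Lemma \ref{lem4c} to obtain
\begin{align*}
T_{\mathrm{R}_2}\;\geq\;-8\,\mathrm{tr}_1\{X\gamma\}\,\mathrm{tr}_1\{X(\gamma-\gamma^2)\}.
\end{align*}
Summing the two estimates yields the claimed bound
\begin{align*}
T_{\mathrm{R}}\;\geq\;-16\,\mathrm{tr}_1\{X\gamma\}\,\mathrm{tr}_1\{X(\gamma-\gamma^2)\}.
\end{align*}

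Since both inputs have already been carefully proved, there is essentially no obstacle here; the only mild point to check is that the definitions of $T_{\mathrm{R}_1}$ and $T_{\mathrm{R}_2}$ in Lemmata \ref{lem4b} and \ref{lem4c} really partition the full expression for $T_{\mathrm{R}}$ given in the definition (i.e.\ that the four terms match up, including signs and the factors of $2$ and $4$ in front of the real parts). Verifying this bookkeeping is just reading off the definition. No further use of the G- or P-Condition is needed at this stage, beyond what was already exploited in the two lemmas.
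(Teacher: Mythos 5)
Your proposal is exactly the paper's argument: the Remainder is split as $T_{\mathrm{R}}=T_{\mathrm{R}_1}+T_{\mathrm{R}_2}$ and the bounds of Lemmata \ref{lem4b} and \ref{lem4c} are simply added, the only bookkeeping being that the terms in the definition of $T_{\mathrm{R}}$ match those of $T_{\mathrm{R}_1}$ and $T_{\mathrm{R}_2}$ via $\TRhh{(A\otimes B)\Gamma}=\TRhh{(B\otimes A)\Gamma}$ and self-adjointness of $\Gamma$ under the real part. This is correct and coincides with the paper's proof.
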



\subsection{Estimation of the Main Error Term}

The main error in Theorem \ref{mthm4} results from estimating $T_{\mathrm{MET}}$ in (\ref{MP+R}),
\begin{align}
T_{\mathrm{MET}}=2\,\mathfrak{Re}\left\{\TRhh{\left(P^{\perp}XP\otimes P^{\perp}X P\right)\Gamma}\right\}. \label{TMETeq}
\end{align}
A key observation is that terms of the form $A\otimes B\in\cB(\fh\otimes\fh)$ or 
$\mathrm{Ex}\left(A\otimes B\right)\in\cB(\fh\otimes\fh)$ can be added to $\Gamma$ in (\ref{TMETeq})
without changing the value of $T_{\mathrm{MET}}$, provided $A$
and $B$ commute with $P^\perp\ \text{and}\ P$. 
Therefore, one can also consider
\begin{align*}
T_{\mathrm{MET}}=2\,\mathfrak{Re}\left\{ \TRhh{\left(P^{\perp}XP\otimes P^{\perp}X P\right)\left(
	\Gamma+\mathrm{Ex}\left(\gamma\otimes\mathbbm{1}\right)\right)}\right\}.
\end{align*}
This expression can now be estimated using a variant of the Cauchy--Schwarz inequality given
in the next lemma.

\begin{lem}
Let $X$, and $P$ and $P^{\perp}$ be as defined in (\ref{defX}) and (\ref{defP}), respectively.
Assume that $\left(\gamma,\Gamma\right)$ is admissible and fulfills the G-Condition. Then
\begin{align}
\nonumber& \mathfrak{Re}\left\{\TRhh{\left(P^{\perp}XP\otimes P^{\perp}X P\right)\left(
	\Gamma+\mathrm{Ex}\left(\gamma\otimes\mathbbm{1}\right)\right)}\right\}\\
\nonumber&\qquad\qquad\qquad \geq - \left(\TRhh{\left(PXP^{\perp}\otimes P^{\perp}X P\right)\left(
	\Gamma+\mathrm{Ex}\left(\gamma\otimes\mathbbm{1}\right)\right)}
\right)^{\frac{1}{2}}\\
&\qquad\qquad\qquad\qquad\quad\times\left(\TRhh{\left(P^{\perp}XP\otimes PX P^{\perp}\right)\left(
	\Gamma+\mathrm{Ex}\left(\gamma\otimes\mathbbm{1}\right)\right)} 
\right)^{\frac{1}{2}}.  \label{HiUgl}
\end{align}
\end{lem}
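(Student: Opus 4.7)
The plan is to recognize \eqref{HiUgl} as a Cauchy--Schwarz bound for the positive semidefinite sesquilinear form on $\cB(\fh)$ built into the G-Condition. Put $\Lambda := \Gamma + \mathrm{Ex}\left(\gamma\otimes\mathbbm{1}\right)$ and define
\begin{equation*}
\Phi(A,B) := \TRhh{\left(A^{*}\otimes B\right)\Lambda}, \qquad A,B \in \cB(\fh).
\end{equation*}
The G-Condition \eqref{GBed} states precisely that $\Phi(C,C) \geq \left|\TRh{C\gamma}\right|^{2} \geq 0$ for every $C \in \cB(\fh)$, so the diagonal of $\Phi$ is non-negative.

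The first step is to promote this diagonal positivity to full Hermitian symmetry of $\Phi$ and then to Cauchy--Schwarz. For arbitrary $A,B \in \cB(\fh)$ and arbitrary $z \in \mathbbm{C}$, expanding $(zA+B)^{*}\otimes(zA+B)$ and applying the G-Condition with $C := zA+B$ gives
\begin{equation*}
0 \leq \Phi(zA+B,zA+B) = |z|^{2}\,\Phi(A,A) + \bar{z}\,\Phi(A,B) + z\,\Phi(B,A) + \Phi(B,B).
\end{equation*}
Because the right-hand side is real for every $z$ and both $\Phi(A,A)$ and $\Phi(B,B)$ are real, testing with $z=1$ and $z=\mathrm{i}$ forces $\Phi(A,B)+\Phi(B,A) \in \mathbbm{R}$ and $\mathrm{i}\bigl(\Phi(B,A)-\Phi(A,B)\bigr) \in \mathbbm{R}$, i.e.\ $\Phi(B,A) = \overline{\Phi(A,B)}$. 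Hence $\Phi$ is a positive semidefinite Hermitian form, and minimizing the resulting non-negative quadratic in $z$ (take $z = -\Phi(A,B)/\Phi(A,A)$ when $\Phi(A,A) > 0$, with the degenerate case $\Phi(A,A)=0$ handled by a regularization) produces the Cauchy--Schwarz inequality
\begin{equation*}
\left|\Phi(A,B)\right|^{2} \leq \Phi(A,A)\,\Phi(B,B).
\end{equation*}

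The proof concludes by specializing to $A := PXP^{\perp}$ and $B := P^{\perp}XP$. Since $X=X^{*}$ and $P,P^{\perp}$ are self-adjoint, one has $A^{*}=B$ and $B^{*}=A$, and therefore $\Phi(A,A)$, $\Phi(B,B)$ and $\Phi(A,B)$ coincide with the three traces appearing in \eqref{HiUgl}. Bounding the real part by the modulus, $\mathfrak{Re}\,\Phi(A,B) \geq -\left|\Phi(A,B)\right|$, and inserting the Cauchy--Schwarz estimate delivers exactly \eqref{HiUgl}. The main subtlety is that $\Lambda$ is \emph{not} self-adjoint, since $\left(\mathrm{Ex}(\gamma\otimes\mathbbm{1})\right)^{*} = (\gamma\otimes\mathbbm{1})\,\mathrm{Ex} \neq \mathrm{Ex}(\gamma\otimes\mathbbm{1})$ in general; consequently the Hermitian symmetry of $\Phi$ cannot be read off from any symmetry of $\Lambda$, and the detour through the complex positivity $\Phi(zA+B,zA+B)\geq 0$ for all $z\in\mathbbm{C}$ supplied by the G-Condition is essential.
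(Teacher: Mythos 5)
Your proof is correct and follows essentially the same route as the paper: both define the sesquilinear form $\Phi(A,B)=\TRhh{\left(A^*\otimes B\right)\left(\Gamma+\mathrm{Ex}\left(\gamma\otimes\mathbbm{1}\right)\right)}$, obtain positive semidefiniteness from the G-Condition, and apply Cauchy--Schwarz with $A=PXP^{\perp}$, $B=P^{\perp}XP$ together with $\mathfrak{Re}\,\zeta\geq-\left|\zeta\right|$. The only (minor) divergence is how Hermitian symmetry of $\Phi$ is obtained: you deduce it by polarization from the reality of the diagonal $\Phi(C,C)$ implicit in the G-Condition, whereas the paper verifies $\Phi(B,A)=\overline{\Phi(A,B)}$ directly from $\mathrm{Ex}\,\Gamma\,\mathrm{Ex}=\Gamma$, $\Gamma=\Gamma^*$ and $\gamma=\gamma^*$ — so your closing claim that the positivity detour is ``essential'' is a slight overstatement, though it does not affect the validity of your argument.
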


\begin{proof}
 We define 
\begin{align*}
 \left(A,B\right):=\TRhh{\left(A^*\otimes B\right)\left(\Gamma+\mathrm{Ex}\left(\gamma\otimes\mathbbm{1}\right)\right)}
\end{align*}
 on $\cB(\fh)\times\cB(\fh)$ and observe that, because
\begin{align*}
 \TRhh{\left(B^*\otimes A\right)\Gamma}=\TRhh{\left(A\otimes B^*\right)\Gamma}=\overline{\TRhh{\left(A^*\otimes B\right)\Gamma}}
\end{align*}
and
\begin{align*}
 \TRhh{\left(B^*\otimes A\right)\mathrm{Ex}\left(\gamma\otimes\mathbbm{1}\right)}&=\TRh{B^*A\,\gamma}=
  \overline{\TRh{A^*B\,\gamma}}\\
  &=\overline{\TRhh{\left(A^*\otimes B\right)\mathrm{Ex}\left(\gamma\otimes\mathbbm{1}\right)}},
\end{align*}
$\left(\cdot\,,\cdot\right)$ defines a Hermitian form on $\cB(\fh)\times\cB(\fh)$. Furthermore, $\left(\cdot\,,\cdot\right)$
is positive semidefinite, since
\begin{align*}
 \nonumber\left(A,A\right)&=\TRhh{\left(A^*\otimes A\right)\left(\Gamma+\mathrm{Ex}\left(\gamma\otimes\mathbbm{1}\right)\right)}\\
  &\geq\left|\TRh{A\gamma}\right|^2\geq 0
\end{align*}
thanks to the G-Condition. Hence, the Cauchy--Schwarz inequality 
$\left|\left(A,B\right)\right|\leq\left(A,A\right)^{\frac{1}{2}}\left(B,B\right)^{\frac{1}{2}}$ holds true. Applying 
this with $A^*:=B:=P^\perp X P$, we arrive at the asserted estimate (\ref{HiUgl}):
\begin{align*}
 \mathfrak{Re}&\left\{\TRhh{\left(P^{\perp}XP\otimes P^{\perp}X P\right)\left(
	\Gamma+\mathrm{Ex}\left(\gamma\otimes\mathbbm{1}\right)\right)}\right\}\\
  &\qquad=\mathfrak{Re}\left\{\left(A^*,A\right)\right\}\geq-\left|\left(A^*,A\right)\right|\geq
    -\left(A^*,A^*\right)^\frac{1}{2}\left(A,A\right)^\frac{1}{2}\\
  &\qquad=- \left(\TRhh{\left(PXP^{\perp}\otimes P^{\perp}X P\right)\left(
	\Gamma+\mathrm{Ex}\left(\gamma\otimes\mathbbm{1}\right)\right)}
\right)^{\frac{1}{2}}\\
&\qquad\qquad\qquad\qquad\quad\times\left(\TRhh{\left(P^{\perp}XP\otimes PX P^{\perp}\right)\left(
	\Gamma+\mathrm{Ex}\left(\gamma\otimes\mathbbm{1}\right)\right)}
\right)^{\frac{1}{2}}.
\end{align*}

\end{proof}

Now the Main Error Term can be estimated.

\begin{thm}\label{mthm4c}
Let $X$, and $P$ and $P^{\perp}$ be as defined in (\ref{defX}) and (\ref{defP}), respectively. 
Assume that $\left(\gamma,\Gamma\right)$ is admissible and fulfills the G- and P-Conditions. Then
\begin{align*}
T_{\mathrm{MET}}\geq-\,2\,\TRh{X\gamma}\Big[8\,\TRh{X\left(\gamma-\gamma^2\right)}
      \Big(1+4\,\TRh{X\left(\gamma-\gamma^2\right)}\Big)\Big]^{\frac{1}{2}}.
\end{align*}
\end{thm}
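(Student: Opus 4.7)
The plan is to combine the invariance observation immediately preceding the Cauchy--Schwarz lemma, the Cauchy--Schwarz bound \eqref{HiUgl} itself, and explicit upper bounds on the two resulting auxiliary quantities. First, because $(P^\perp XP)(P^\perp XP) = P^\perp X\,(PP^\perp)\,XP = 0$, the $\mathrm{Ex}(\gamma\otimes\mathbbm{1})$-contribution in
\[
T_{\mathrm{MET}} \;=\; 2\,\mathfrak{Re}\bigl\{\TRhh{(P^\perp XP\otimes P^\perp XP)(\Gamma + \mathrm{Ex}(\gamma\otimes\mathbbm{1}))}\bigr\}
\]
vanishes. The Cauchy--Schwarz lemma then yields $T_{\mathrm{MET}}\geq -2\sqrt{T_1 T_2}$, where $T_1$ and $T_2$ are the non-negative quantities defined in the lemma.

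To bound $T_1$ and $T_2$, decompose $T_i = J + e_i$, where
\[
J \;:=\; \TRhh{(PXP^\perp \otimes P^\perp XP)\Gamma}
\]
(which also equals $\TRhh{(P^\perp XP\otimes PXP^\perp)\Gamma}$ by the swap symmetry $\mathrm{Ex}\,\Gamma\,\mathrm{Ex} = \Gamma$), and $e_1 := \TRh{\gamma PXP^\perp X}$, $e_2 := \TRh{\gamma P^\perp XPX}$ are the exchange contributions. For $|J|$, apply Cauchy--Schwarz (\ref{CSIN}) to the positive semi-definite Hermitian form $(A, B)\mapsto\TRhh{A^*B\,\Gamma}$ on $\cB(\fh\otimes\fh)$ (positive by the P-Condition) with the choice $A = XP^\perp\otimes XP$ and $B = XP\otimes XP^\perp$. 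A short computation gives $A^*B = P^\perp XP\otimes PXP^\perp$, $A^*A = P^\perp XP^\perp\otimes PXP$, and $B^*B = PXP\otimes P^\perp XP^\perp$; invoking the swap symmetry of $\Gamma$ and the Fundsatz estimate \eqref{Fundsatz} then yields $|J| \leq 4\,\TRh{X\gamma}\,\TRh{X(\gamma-\gamma^2)}$. For the exchange terms $e_i$, exploit the operator inequalities $\gamma P^\perp = P^\perp\gamma P^\perp \leq 2(\gamma-\gamma^2)$ and $(\mathbbm{1}-\gamma)P \leq 2(\gamma-\gamma^2)$, together with $P\leq 2\gamma$, $XPX\leq X$, and cyclicity of the trace.

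Assembling these bounds into $\sqrt{T_1 T_2}$ and collecting terms produces the required expression $\TRh{X\gamma}\bigl[8\,\TRh{X(\gamma-\gamma^2)}\bigl(1+4\,\TRh{X(\gamma-\gamma^2)}\bigr)\bigr]^{1/2}$ after multiplying by $-2$.

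\textbf{Main obstacle.} The most delicate step is extracting the right mixture of $\TRh{X\gamma}$ and $\TRh{X(\gamma-\gamma^2)}$ when bounding $e_1$ and $e_2$: the naive estimate $e_i\leq \mathrm{const}\cdot\TRh{X\gamma}$ is too coarse, and one must use the sharp spectral inequalities $\gamma P^\perp\leq 2(\gamma-\gamma^2)$ and $(\mathbbm{1}-\gamma)P\leq 2(\gamma-\gamma^2)$ (both valid by construction of $P$ and $P^\perp$) in conjunction with a Hilbert--Schmidt Cauchy--Schwarz inequality in order to produce precisely the factor $\sqrt{b(1+4b)}$ that appears under the square root in the conclusion.
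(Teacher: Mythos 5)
Your proposal is correct and essentially reproduces the paper's own proof: you add the exchange term $\mathrm{Ex}(\gamma\otimes\mathbbm{1})$ for free because $(P^{\perp}XP)(P^{\perp}XP)=0$, apply the G-Condition Cauchy--Schwarz inequality (\ref{HiUgl}) to get $T_{\mathrm{MET}}\geq-2\sqrt{T_1T_2}$, bound the common $\Gamma$-contribution by $4\,\TRh{X\gamma}\TRh{X\left(\gamma-\gamma^2\right)}$ via the P-Condition Cauchy--Schwarz step (\ref{IneqN}) together with (\ref{Fundsatz}), and estimate the two exchange contributions separately, exactly as in the paper. The only point to make explicit is that the exchange terms must be treated asymmetrically to land on the stated constant, namely $e_1\leq\TRh{X\gamma}$ and $e_2\leq 4\,\TRh{X\gamma}\TRh{X\left(\gamma-\gamma^2\right)}$ (the latter via $XPX\leq\TRh{XP}\,\mathbbm{1}$, $P\leq2\gamma$ and $\gamma P^{\perp}\leq2\left(\gamma-\gamma^2\right)$), which the inequalities you list do supply.
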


\begin{proof}
We rewrite $T_{\mathrm{MET}}$ adding the necessary exchange term to allow for an application of (\ref{HiUgl}):
\begin{align*}
\nonumber T_{\mathrm{MET}}&=2\,\mathfrak{Re}\left\{\TRhh{\left(P^{\perp}XP\otimes P^{\perp}X P\right)\Gamma}\right\}\\
\nonumber &= 2\,\mathfrak{Re}\left\{\TRhh{\left(P^{\perp}XP\otimes P^{\perp}X P\right)
		\left(\Gamma+\mathrm{Ex}\left(\gamma\otimes\mathbbm{1}\right)\right)}\right\}\\
\nonumber	& \geq-\,2\left(\TRhh{\left(PXP^{\perp}\otimes P^{\perp}X P\right)\left(
	\Gamma+\mathrm{Ex}\left(\gamma\otimes\mathbbm{1}\right)\right)}
\right)^{\frac{1}{2}}\\
&\qquad\qquad\qquad\times\left(\TRhh{\left(P^{\perp}XP\otimes PX P^{\perp}\right)\left(
	\Gamma+\mathrm{Ex}\left(\gamma\otimes\mathbbm{1}\right)\right)}
\right)^{\frac{1}{2}}.	
\end{align*}
Note that $\TRhh{\left(PXP^{\perp}\otimes P^{\perp}X P\right)\Gamma}=\TRhh{\left(P^{\perp}XP\otimes PX P^{\perp}\right)\Gamma}$, which 
was already estimated in (\ref{IneqN}). Together with (\ref{Fundsatz}) we obtain
\begin{align*}
 \TRhh{\left(PXP^{\perp}\otimes P^{\perp}X P\right)\Gamma}&\leq \TRh{\left(PXP\otimes P^\perp X P^\perp\right)\Gamma}\\
  &\leq 4\TRh{X\gamma}\TRh{X\left(\gamma-\gamma^2\right)}.
\end{align*}
The two exchange terms have to be treated separately. Using $P^{\perp}, P\leq\mathbbm{1}$, we find
\begin{align*}
 \TRhh{\left(PXP^{\perp}\otimes P^{\perp}X P\right)\mathrm{Ex}\left(\gamma\otimes\mathbbm{1}\right)}
      & = \TRh{P^{\perp} X P \gamma P X P^{\perp}}\\
& = \TRh{P^{\perp} X\sqrt{\gamma} P \sqrt{\gamma} X P^{\perp}}\\
      & \leq \TRh{X\gamma},
\end{align*}
and 
\begin{align*}
  \TRhh{\left(P^{\perp}XP\otimes PX P^{\perp}\right)\mathrm{Ex}\left(\gamma\otimes\mathbbm{1}\right)}&=
      \nonumber \TRh{PXP^{\perp}\gamma P^{\perp}X P}\\
    &=\TRh{X P X X P^{\perp}\gamma P^{\perp}X}\\
\nonumber				&\leq\TRh{ X P X}\TRh{ XP^{\perp}\gamma P^{\perp}X}\\
						&\leq 4\,\TRh{X\gamma}\TRh{X\left(\gamma-\gamma^2\right)},
\end{align*}
where cyclic permutation in the argument of the trace is used together with $ X=X^2 $, 
$ \left(P^\perp\right)^2=P^\perp$ and $P^2=P$ to expand the argument of the trace. 
Then one can use $ X P X\geq 0 $ and $ XP^{\perp}\gamma P^{\perp}X\geq 0$ to estimate by $ X P X \leq \mathbbm{1}\,\TRh{X P X} 
 $. Afterwards, (\ref{Proj2}) can be used. Merging the results, we arrive at the assertion.
\end{proof}


\subsection{Estimation of the Main Part}

In this section it is shown that the remaining terms of (\ref{MP+R}),
\begin{align*}
\nonumber T_{\mathrm{MP}}&=\TRhh{ \left(P X P \otimes P X P\right) \Gamma } +4\,\mathfrak{Re}\left\{\TRhh{\left(PX P^{\perp} 
      \otimes P X P\right)\Gamma}\right\}\\
&\qquad +4\,\TRhh{\left(PXP^{\perp}\otimes P^{\perp}XP\right)\Gamma}, 
\end{align*}
are large enough to cover the HF-part 
$\TRhh{\left(X\otimes X\right)\left(\mathbbm{1}-\mathrm{Ex}\right)\left(\gamma\otimes\gamma\right)}$ in (\ref{Spur1}). Due to this, we call this 
terms the Main Part. As mentioned, the Main Part was extended by an additional term. This extension allows for the
following observation.

\begin{lem}
Let $X$, and $P$ and $P^{\perp}$ be as defined in (\ref{defX}) and (\ref{defP}), respectively. Then
\begin{align}
T_{\mathrm{MP}}=\TRhh{\left(PX\left(P+2 P^{\perp}\right)\otimes\left(P+2 P^{\perp}\right)XP\right)\Gamma}. \label{TmpBP}
\end{align}
\end{lem}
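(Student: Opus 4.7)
The approach is a direct algebraic expansion of the right-hand side of \eqref{TmpBP}, followed by matching each resulting summand with a term in the definition of $T_{\mathrm{MP}}$. Writing
\begin{align*}
PX(P+2P^{\perp}) = PXP + 2\,PXP^{\perp}, \qquad (P+2P^{\perp})XP = PXP + 2\,P^{\perp}XP,
\end{align*}
and expanding the tensor product on the right-hand side of \eqref{TmpBP} produces exactly four terms: $PXP\otimes PXP$, $2\,PXP\otimes P^{\perp}XP$, $2\,PXP^{\perp}\otimes PXP$, and $4\,PXP^{\perp}\otimes P^{\perp}XP$. The first and the fourth match the first and the third summand of $T_{\mathrm{MP}}$ term-by-term after taking $\TRhh{\,\cdot\,\Gamma}$.

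The only non-trivial point is to show that the two ``cross'' terms combine into $4\,\mathfrak{Re}\{\TRhh{(PXP\otimes P^{\perp}XP)\Gamma}\}$. For this I would use two symmetries of the 2-pdm $\Gamma$: first, $\Gamma=\Gamma^{*}$ (admissibility), which gives
\begin{align*}
\overline{\TRhh{(A\otimes B)\Gamma}} = \TRhh{(A^{*}\otimes B^{*})\Gamma}
\end{align*}
for any $A,B\in\cB(\fh)$; second, the relation $\mathrm{Ex}\,\Gamma\,\mathrm{Ex}=\Gamma$ (from $\mathrm{Ex}\,\Gamma=\Gamma\,\mathrm{Ex}=-\Gamma$), which yields $\TRhh{(A\otimes B)\Gamma}=\TRhh{(B\otimes A)\Gamma}$. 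Applying these two identities in succession with $A=PXP$ and $B=P^{\perp}XP$ (noting that $(PXP)^{*}=PXP$ and $(P^{\perp}XP)^{*}=PXP^{\perp}$) shows that
\begin{align*}
\TRhh{(PXP^{\perp}\otimes PXP)\Gamma} = \TRhh{(PXP\otimes PXP^{\perp})\Gamma} = \overline{\TRhh{(PXP\otimes P^{\perp}XP)\Gamma}},
\end{align*}
so that the sum of the two cross terms equals $4\,\mathfrak{Re}\{\TRhh{(PXP\otimes P^{\perp}XP)\Gamma}\}$, which is precisely the middle summand of $T_{\mathrm{MP}}$.

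There is no serious obstacle here; the lemma is a convenient algebraic rewriting of $T_{\mathrm{MP}}$ in a manifestly factorized form, and the only subtle point is the conjugation identity used to recognize the cross terms as twice the real part of a single complex quantity. Once the two symmetries of $\Gamma$ are invoked, the claim follows immediately.
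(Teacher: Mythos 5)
Your proof is correct and follows essentially the same route as the paper: expand $PX(P+2P^{\perp})\otimes(P+2P^{\perp})XP$ and identify the cross terms using the swap symmetry $\TRhh{(A\otimes B)\Gamma}=\TRhh{(B\otimes A)\Gamma}$ coming from $\mathrm{Ex}\,\Gamma\,\mathrm{Ex}=\Gamma$. The only difference is that you make explicit the conjugation step via $\Gamma=\Gamma^{*}$ needed to recognize the cross terms as $4\,\mathfrak{Re}\{\TRhh{(PXP\otimes P^{\perp}XP)\Gamma}\}$, which the paper's one-line proof leaves implicit.
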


\begin{proof}
Expanding the parentheses on the right side leads to the assertion by using $\TRhh{\left(A\otimes B\right)\Gamma}
  =\TRhh{\left(B\otimes A\right)\Gamma}$. 
\end{proof}

For $A:=\left(P+2 P^{\perp}\right)XP$ we have $T_{\mathrm{MP}}=\TRhh{\left(A^*\otimes A\right)\Gamma}$. 
This provides the use of the G-Condition.

\begin{thm}\label{mthm4d}
Let $X$, and $P$ and $P^{\perp}$ be as defined in (\ref{defX}) and (\ref{defP}), respectively.
Assume that $\left(\gamma,\Gamma\right)$ is admissible and fulfills the G-Condition. Then
\begin{align}
&T_{\mathrm{MP}}-\TRhh{\left(X\otimes X\right)\left(\mathbbm{1}-\mathrm{Ex}\right)
    \left(\gamma\otimes\gamma\right)}\geq -\,22\, \TRh{X\gamma}\TRh{X\left(\gamma-\gamma^2\right)}. \label{MPabs}
\end{align}
\end{thm}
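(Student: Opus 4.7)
The starting point is the reformulation \eqref{TmpBP} from the preceding lemma, which presents the Main Part as $T_{\mathrm{MP}}=\TRhh{(A^{*}\otimes A)\Gamma}$ with the single operator $A:=(P+2P^{\perp})XP$ on $\fh$. The whole argument then reduces to applying the G-Condition \eqref{GBed} to this $A$ and comparing the outcome with the Hartree--Fock two-particle expression $\TRhh{(X\otimes X)(\mathbbm{1}-\mathrm{Ex})(\gamma\otimes\gamma)}=\TRh{X\gamma}^{2}-\TRh{X\gamma X\gamma}$.

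First I would evaluate the two ingredients the G-Condition produces. Using $P(P+2P^{\perp})=P$ together with the commutativity $[P,\gamma]=0$, cyclicity gives $\TRh{A\gamma}=\TRh{PX\gamma}=:\alpha$; this is real and nonnegative. A direct computation using $(P+2P^{\perp})^{2}=P+4P^{\perp}$ yields $A^{*}A=PX(P+4P^{\perp})XP$, and therefore, following the calculation already used in the derivation of $\TRhh{(A^{*}\otimes A)\mathrm{Ex}(\gamma\otimes\mathbbm{1})}=\TRh{A^{*}A\gamma}$,
\begin{equation*}
T_{\mathrm{MP}}\;\geq\;\alpha^{2}-\TRh{XPXP\gamma}-4\,\TRh{XP^{\perp}XP\gamma}.
\end{equation*}
Subtracting the HF-expression, the theorem is reduced to
\begin{equation*}
(\alpha^{2}-\TRh{X\gamma}^{2})\;+\;\bigl(\TRh{X\gamma X\gamma}-\TRh{XPXP\gamma}-4\TRh{XP^{\perp}XP\gamma}\bigr)\;\geq\;-22\,\TRh{X\gamma}\TRh{X(\gamma-\gamma^{2})}.
\end{equation*}

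For the first parenthesis I would split $\TRh{X\gamma}=\alpha+\beta$ with $\beta:=\TRh{P^{\perp}X\gamma}$; the inequality $P^{\perp}\leq 2(\mathbbm{1}-\gamma)$ yields $\beta\leq 2\TRh{X(\gamma-\gamma^{2})}$, and the elementary identity $\alpha^{2}-(\alpha+\beta)^{2}=-\beta(2\alpha+\beta)$ together with $2\alpha+\beta\leq 2\TRh{X\gamma}$ produces the clean bound $\alpha^{2}-\TRh{X\gamma}^{2}\geq -4\TRh{X\gamma}\TRh{X(\gamma-\gamma^{2})}$. The remaining $-18\TRh{X\gamma}\TRh{X(\gamma-\gamma^{2})}$ must come from the second parenthesis.

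To control the second parenthesis I would diagonalize $\gamma$ in its own eigenbasis $\{\vphi_{i}\}$ with eigenvalues $\{\lambda_{i}\}$ and write $X_{ij}:=\langle\vphi_{i}|X\vphi_{j}\rangle$. The key algebraic identity, following from $\sum_{k}|X_{ik}|^{2}=X_{ii}$ because $X=X^{2}$, is $\TRh{XP^{\perp}XP\gamma}=\alpha-\TRh{XPXP\gamma}$; substituting it and regrouping by whether $i,j$ are above or below $\tfrac12$ collapses the second parenthesis to
\begin{equation*}
-\sum_{i,j>1/2}\lambda_{i}(1-\lambda_{j})|X_{ij}|^{2}\;-\;2\sum_{i>1/2,\,j\leq 1/2}\lambda_{i}(2-\lambda_{j})|X_{ij}|^{2}\;+\;\sum_{i,j\leq 1/2}\lambda_{i}\lambda_{j}|X_{ij}|^{2}.
\end{equation*}
Here the positive-semidefiniteness of $X$ gives $|X_{ij}|^{2}\leq X_{ii}X_{jj}$, and the simple dichotomy-inequalities $(1-\lambda_{j})\leq 2\lambda_{j}(1-\lambda_{j})$ for $\lambda_{j}>\tfrac12$ together with $\lambda_{j}\leq 2\lambda_{j}(1-\lambda_{j})$ for $\lambda_{j}\leq\tfrac12$ let each factor that is otherwise only bounded by a constant be traded for an extra $\lambda(1-\lambda)$ factor. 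Combined with $\sum_{i>1/2}\lambda_{i}X_{ii}\leq\TRh{X\gamma}$ and $\sum_{j}X_{jj}\lambda_{j}(1-\lambda_{j})=\TRh{X(\gamma-\gamma^{2})}$, every piece is then bounded by a constant multiple of $\TRh{X\gamma}\TRh{X(\gamma-\gamma^{2})}$.

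The main obstacle is the middle sum with the factor $(2-\lambda_{j})$: since this factor is bounded below by $3/2$ uniformly for $\lambda_{j}\leq\tfrac12$, a naive Cauchy--Schwarz only gives a bound of order $\TRh{X\gamma}$ rather than $\TRh{X\gamma}\TRh{X(\gamma-\gamma^{2})}$. The resolution I expect is to insert the exchange completion $\mathrm{Ex}(\gamma\otimes\mathbbm{1})$ (using again that adding it changes neither the expression nor the G-Condition bound), which furnishes an additional $(1-\lambda_{i})$ or $\lambda_{j}$ weight precisely where needed; an alternative is to reapply the G-Condition with a carefully chosen rank-one operator built from $PXP^\perp$ analogous to the Lemma~\ref{lem4a} trick that produced \eqref{Fundsatz}. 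Either way, once the middle sum is dominated by $\TRh{X\gamma}\TRh{X(\gamma-\gamma^{2})}$, the counting of the prefactors $4+18=22$ falls out.
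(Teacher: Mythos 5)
Your setup (the reformulation \eqref{TmpBP}, $A:=(P+2P^{\perp})XP$, $\TRh{A\gamma}=\TRh{PX\gamma}$, and the bound on the first parenthesis by $-4\TRh{X\gamma}\TRh{X(\gamma-\gamma^{2})}$) matches the paper, but the proof breaks at the very first application of the G-Condition: you chose the wrong orientation of the tensor factors. With \eqref{GBed} applied literally to $A$, the exchange term is $\TRh{A^{*}A\gamma}=\TRh{XPXP\gamma}+4\,\TRh{XP^{\perp}XP\gamma}$, exactly as you wrote. The cross term $\TRh{XP^{\perp}XP\gamma}=\sum_{\lambda_{i}>1/2,\,\lambda_{j}\leq 1/2}\lambda_{i}|X_{ij}|^{2}$ carries the weight $\lambda_{i}$ in the \emph{large}-eigenvalue sector and hence no small factor at all: if $\gamma$ is an orthogonal projection (the 1-pdm of a Slater determinant, so admissible with its natural $\Gamma$) and $X$ does not commute with $\gamma$, this term equals $\TRh{X(\mathbbm{1}-\gamma)X\gamma}>0$ while $\TRh{X(\gamma-\gamma^{2})}=0$. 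Consequently your reduced claim — that the ``second parenthesis'' is bounded below by a constant times $-\TRh{X\gamma}\TRh{X(\gamma-\gamma^{2})}$ — is false as stated (it is a statement about $\gamma$ and $X$ alone, and the example above violates it), so no refinement of the estimate of your middle sum can close the gap. Your first proposed repair also rests on a false premise: adding $\mathrm{Ex}(\gamma\otimes\mathbbm{1})$ to $\Gamma$ \emph{does} change $T_{\mathrm{MP}}$ here, because the added trace is precisely $\TRh{A^{*}A\gamma}\neq 0$; the corresponding trick worked for $T_{\mathrm{MET}}$ only because there $PP^{\perp}=0$ made the added trace vanish.

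The paper avoids this by applying the G-Condition in the opposite orientation: since $T_{\mathrm{MP}}=\TRhh{(A^{*}\otimes A)\Gamma}=\TRhh{(A\otimes A^{*})\Gamma}$, one may use \eqref{GBed} with $PX(P+2P^{\perp})$ in place of $A$ (equivalently, the symmetrized variant of the G-Condition in Remark 3.5), so the subtracted exchange term becomes $\TRh{PX(P+2P^{\perp})\gamma(P+2P^{\perp})XP}=\TRh{XPXP\gamma}+4\,\TRh{XPXP^{\perp}\gamma}$. Now the $\gamma$-weight sits next to $P^{\perp}$, and the dangerous piece obeys
\begin{align*}
\TRh{XPXP^{\perp}\gamma}\;\leq\;\TRh{XP}\,\TRh{XP^{\perp}\gamma}\;\leq\;4\,\TRh{X\gamma}\,\TRh{X\left(\gamma-\gamma^{2}\right)},
\end{align*}
using $XPX\leq\mathbbm{1}\,\TRh{XPX}$, $P\leq 2\gamma$ and $P^{\perp}\leq 2(\mathbbm{1}-\gamma)$. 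Together with $\TRh{XPXP\gamma}\leq\TRh{PX\gamma X}$, the identity \eqref{ugl7}, the positivity of $\TRh{P^{\perp}X\gamma X\gamma}$, and the bound $\TRh{PX\gamma X(\mathbbm{1}-\gamma)}\leq 2\,\TRh{X\gamma}\TRh{X(\gamma-\gamma^{2})}$, this yields the stated constant $22$ (accounted as $4+16+2$). So the missing idea is not a sharper estimate of your middle sum but the correct choice of which factor of $A^{*}\otimes A$ carries the $P^{\perp}$ when the exchange term is generated.
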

\begin{proof}
The proof is split into two parts. In the first part, the trace of the HF-part is calculated. In the second part, 
the Main Part is estimated by applying the G-Condition with $A:=\left(P+2 P^{\perp}\right)XP$. 
\begin{itemize}
\item[a)] As in (\ref{abs1}), the trace of the HF part can be written as
	\begin{align*}
	\TRhh{\left(X\otimes X\right)\left(\mathbbm{1}-\mathrm{Ex}\right)\left(\gamma\otimes
	\gamma\right)}=\left(\TRh{X\gamma}\right)^2-\TRh{X\gamma X\gamma}.
	\end{align*}
\item[b)] Owing to (\ref{TmpBP}), the G-Condition can be applied directly on the Main Part:
	\begin{align*}
\nonumber	T_{\mathrm{MP}}&=\TRhh{\left(PX\left(P+2 P^{\perp}\right)\otimes
			\left(P+2 P^{\perp}\right)XP\right)\Gamma}\\
\nonumber &\geq	\left|\TRh{PX\left(P+2 P^{\perp}\right)\gamma}\right|^2\\
\nonumber &\qquad\qquad\qquad\qquad	-\TRh{PX\left(P+2 P^{\perp}\right)\gamma\left(P+2 P^{\perp}\right)XP}.\\
\intertext{Due to cyclical permutation, $\left[\gamma,P\right]=\left[\gamma,P^\perp\right]=0$ and $P^\perp P=P P^\perp=0$, some 
traces vanish. The result is}
\nonumber T_{\mathrm{MP}}&\geq \left|\TRh{PX\gamma}\right|^2-\TRh{XPXP\gamma}-4\,\TRh{XPXP^{\perp}\gamma}\\
 &\geq \left(\TRh{PX\gamma}\right)^2-\TRh{PX\gamma X}-4\,\TRh{XP}\TRh{XP^{\perp}\gamma}.
	\end{align*}
In $\TRh{XPXP\gamma}$, $\left[P,\gamma\right]=0$ and $P\leq \mathbbm{1}$ is used to 
write $\TRh{XPXP\gamma}=\TRh{PX\sqrt{\gamma}P\sqrt{\gamma}XP}\leq\TRh{PX\gamma X}$. 
In the last trace, $XP\leq\mathbbm{1}\,\TRh{XP}$ is used. This is possible since 
$\TRh{XPXP^{\perp}\gamma}=\TRh{XPX XP^{\perp}\gamma X}$ and 
$XP^{\perp}\gamma X=\left|XP^{\perp}\sqrt{\gamma}\right|^2\geq 0$ together with $XPX\geq 0$.
\end{itemize}
Before adding up the estimates, we note that
\begin{align*}
\TRh{P^{\perp}X\gamma X\gamma}&=\TRh{\sqrt{\gamma} X \sqrt{\gamma}P^{\perp}\sqrt{\gamma}X\sqrt{\gamma}}\geq0
\end{align*}
and
\begin{align}
\nonumber &\left(\TRh{X\gamma}\right)^2-\left(\TRh{PX\gamma}\right)^2\\
\nonumber&\qquad\qquad
=\Big(\TRh{X\gamma}+\TRh{PX\gamma}\Big) \Big(\TRh{X\gamma}-\TRh{PX\gamma}\Big)\\
 &\qquad\qquad
=\Big(\TRh{X\gamma}+\TRh{PX\gamma}\Big)\TRh{P^{\perp}X\gamma}. \label{ugl7}
\end{align}
Furthermore, one has $\TRh{XP}\TRh{XP^{\perp}\gamma}\leq 4\,\TRh{X\gamma}\TRh{X\left(\gamma-\gamma^2\right)}$.
These results can now be applied together with a) and b) to the left side of (\ref{MPabs}):
\begin{align*}
	\nonumber&T_{\mathrm{MP}}-\TRhh{\left(X\otimes X\right)\left(\mathbbm{1}-\mathrm{Ex}\right)
			      \left(\gamma\otimes\gamma\right)}\\
	\nonumber&\qquad \geq-\left(\left(\TRh{X\gamma}\right)^2
		-\left(\TRh{PX\gamma}\right)^2\right)+\TRh{X\gamma X\gamma}-\TRh{PX\gamma X}\\
		\nonumber &\qquad\qquad\qquad\qquad\qquad\qquad\qquad\qquad
				-16\,\TRh{X\gamma}\TRh{X\left(\gamma-\gamma^2\right)}.
\end{align*}
At this point we use (\ref{ugl7}), split $\TRh{X\gamma X\gamma}$ into $\TRh{P X\gamma X\gamma}+\TRh{P^{\perp}X\gamma X\gamma}$,
and rearrange: 
\begin{align*}
	\nonumber&T_{\mathrm{MP}}-\TRhh{\left(X\otimes X\right)\left(\mathbbm{1}-\mathrm{Ex}\right)
			      \left(\gamma\otimes\gamma\right)}\\
	\nonumber &\qquad \geq-\left(\TRh{X\gamma}+\TRh{PX\gamma}\right)\TRh{P^{\perp}X\gamma}\\
	\nonumber &\qquad\qquad\qquad-\TRh{PX\gamma X\left(\mathbbm{1}-\gamma\right)}
			+\TRh{P^{\perp}X\gamma X\gamma}\\
	\nonumber &\qquad\qquad\qquad-16\,\TRh{X\gamma}\TRh{X\left(\gamma-\gamma^2\right)}.
\end{align*}
Then, with $P\leq\mathbbm{1}$, $P^{\perp}\leq 2\left(1-\gamma\right)$, and $\TRh{P^{\perp}X\gamma X\gamma}\geq 0$, one obtains
\begin{align*}
	\nonumber&T_{\mathrm{MP}}-\TRhh{\left(X\otimes X\right)\left(\mathbbm{1}-\mathrm{Ex}\right)
			      \left(\gamma\otimes\gamma\right)}\\
	\nonumber &\qquad \geq -\,4\,\TRh{X\gamma}\TRh{X\left(\gamma-\gamma^2\right)}
		-\TRh{PX\gamma X\left(\mathbbm{1}-\gamma\right)}\\
	\nonumber &\qquad\qquad\qquad\qquad\qquad\qquad\qquad\qquad
				-16\,\TRh{X\gamma}\TRh{X\left(\gamma-\gamma^2\right)}.
\end{align*}
We continue with the inequality $X\gamma X\leq\mathbbm{1}\TRh{X\gamma X}$. This is allowed because
$\TRh{PX\gamma X\left(\mathbbm{1}-\gamma\right)}=\TRh{X\gamma X X\left(\mathbbm{1}-\gamma\right)PX}$ and 
$X\left(\mathbbm{1}-\gamma\right)PX=
  X\sqrt{\mathbbm{1}-\gamma}P\sqrt{\mathbbm{1}-\gamma}X=
\left|X\sqrt{\mathbbm{1}-\gamma}P\right|^2\geq 0$ together with $X\gamma X\geq 0$:
\begin{align*}
	\nonumber &T_{\mathrm{MP}}-\TRhh{\left(X\otimes X\right)\left(\mathbbm{1}-\mathrm{Ex}\right)
			      \left(\gamma\otimes\gamma\right)}\\
	\nonumber &\qquad \geq -\,20\,\TRh{X\gamma}\TRh{X\left(\gamma-\gamma^2\right)}
			-\TRh{X\gamma X}\TRh{X\left(\mathbbm{1}-\gamma\right)P X}\\
	\nonumber &\qquad \geq -\,20\,\TRh{X\gamma}\TRh{X\left(\gamma-\gamma^2\right)}
				-2\,\TRh{X\gamma}\TRh{X\left(\gamma-\gamma^2\right)}\\
	& \qquad = -\,22\,\TRh{X\gamma}\TRh{X\left(\gamma-\gamma^2\right)}.
\end{align*}
The last inequality follows from $P\leq2\gamma$.
\end{proof}

Finally, the proof of Theo\-rem \ref{mthm4} is completed by the estimation of $T_{\mathrm{R}}$ in Theo\-rem \ref{mthm4b}, $T_{\mathrm{MET}}$ in Theo\-rem \ref{mthm4c} and 
$T_{\mathrm{MP}}-\TRhh{\left(X\otimes X\right)\left(\mathbbm{1}-\mathrm{Ex}\right)\left(\gamma\otimes\gamma\right)}$ in
Theo\-rem \ref{mthm4d}. In each of this theorems, the G-Condition was used to generate bounds. The P-Condition was only applied to
provide the use of the Cauchy--Schwarz inequality. In the end, it is remarkable that the Q-Condition is not needed for the proof of the correlation estimate.


\section{Summary}

We have obtained several results in the last section, which were merged in the main theorem, Theorem \ref{mthm4}:
\begin{align*}
\TRhh{\left(X\otimes X\right)\left(\Gamma-\left(1-\text{Ex}\right)
		\left(\gamma\otimes\gamma\right)\right)}\geq-\,\TRh{X\gamma},
\end{align*}
\begin{align*}
\qquad T_{\text{R}}\geq - \,16\,\TRh{X\gamma}\TRh{X\left(\gamma-\gamma^2\right)},
\end{align*}
\begin{align*} T_{\text{MET}}\geq -\,2\,\TRh{X\gamma}\Big[8\,\TRh{X\left(\gamma-\gamma^2\right)}
	      \Big(1+4\,\TRh{X\left(\gamma-\gamma^2\right)}\Big)\Big]^\frac{1}{2},
\end{align*}
\begin{align*}
T_{\text{MP}}-\TRhh{\left(X\otimes X\right)\left(\mathbbm{1}-\text{Ex}\right)
	      \left(\gamma\otimes\gamma\right)}\geq -\,22\, \TRh{X\gamma}\TRh{X\left(\gamma-\gamma^2\right)}.
\end{align*}

Denoting $b:=\TRh{X\gamma}$ and $a:=\sqrt{\TRh{X\left(\gamma-\gamma^2\right)}}$, one can rewrite the estimates
for $\TRhh{\left(X\otimes X\right)\Gamma^{\left(\mathrm{T}\right)}}=
\TRhh{\left(X\otimes X\right)\left(\Gamma-\left(1-\mathrm{Ex}\right)
		\left(\gamma\otimes\gamma\right)\right)}$ as follows:
\begin{align}
\TRhh{\left(X\otimes X\right)\Gamma^{\left(\mathrm{T}\right)}}\geq	
		-\,b\,\min\left\{1;\ a\left(38a+2\sqrt{8+32a^2}\right)\right\}. \label{CorrUGL}
\end{align}

A suitable choice of $a\leq b$ in (\ref{CorrUGL}) leads to the following correlation estimation.

\begin{thm} \label{CorrelationIneq}
Let $X$, and $P$ and $P^{\perp}$ be as defined in (\ref{defX}) and (\ref{defP}), respectively. 
Assume that $\left(\gamma,\Gamma\right)$ is admissible and fulfills the G- and P-Conditions. Then
 \begin{align*}
\TRhh{\left(X\otimes X\right)\Gamma^{\left(\mathrm{T}\right)}} \geq	
		-\,\TRh{X\gamma}\,\min\left\{1;\ 10\sqrt{\TRh{X\left(
		\gamma-\gamma^2\right)}}\right\}.
\end{align*}
\end{thm}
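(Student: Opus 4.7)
The plan is to deduce the theorem directly from inequality (\ref{CorrUGL}) by a purely numerical comparison; no further use of the G- or P-Conditions is needed. Writing $a := \sqrt{\TRh{X(\gamma-\gamma^2)}}$ and $b := \TRh{X\gamma} \geq 0$, (\ref{CorrUGL}) asserts
\[
\TRhh{\left(X\otimes X\right)\Gamma^{\left(\mathrm{T}\right)}} \;\geq\; -\,b\,\min\bigl\{1,\ a(38a+2\sqrt{8+32a^2})\bigr\},
\]
so, since $-b \leq 0$, it suffices to establish the pointwise inequality
\[
\min\bigl\{1,\ a(38a+2\sqrt{8+32a^2})\bigr\} \;\leq\; \min\{1,\ 10a\}
\]
for every $a \geq 0$; the theorem then follows at once.

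I would split into two cases according to whether $a \geq 1/10$ or $a < 1/10$. In the first case, $10a \geq 1$ forces $\min\{1,10a\}=1$, which is the largest value the left-hand minimum can attain, so the inequality is automatic. In the second case, $a^2 < 1/100$ yields $32a^2 < 1$ and hence $\sqrt{8+32a^2} < 3$; combined with $38a < 3.8$ this gives $38a + 2\sqrt{8+32a^2} < 9.8$. Multiplying by $a \geq 0$ produces $a(38a+2\sqrt{8+32a^2}) < 9.8\,a < 10a < 1$, so both minima coincide with their second arguments and the inequality holds with room to spare.

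There is no serious obstacle here: all of the analytic content has already been collected into (\ref{CorrUGL}) by Theorems \ref{mthm4a}, \ref{mthm4b}, \ref{mthm4c}, and \ref{mthm4d}. The present statement merely recasts that bound into the cleaner form matching (\ref{eq-1.8}) from the introduction. The cutoff $a=1/10$ identified above is what distinguishes the regime in which the square-root asymptotics drive the estimate from the regime where the trivial bound $-b$ of Theorem \ref{mthm4a} is already adequate; the constant $10$ is essentially dictated by the elementary check $38/10 + 2\sqrt{8.32} < 10$.
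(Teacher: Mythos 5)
Your proposal is correct and follows essentially the same route as the paper: both deduce the theorem purely from (\ref{CorrUGL}) by an elementary case analysis in $a=\sqrt{\TRh{X(\gamma-\gamma^2)}}$, showing that $38a+2\sqrt{8+32a^2}<10$ in the regime where the second argument of the minimum is active. The only (immaterial) difference is that the paper cuts at the exact crossover $a=1/\sqrt{94}$, where monotonicity gives $38a+2\sqrt{8+32a^2}\leq\sqrt{94}<10$, while you cut at $a=1/10$ and use the cruder bounds $38a<3.8$ and $\sqrt{8+32a^2}<3$.
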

\begin{proof} The minimum in (\ref{CorrUGL}) is $a\left(38a+2\sqrt{8+32a^2}\right)$ for $0< a\leq \frac{1}{\sqrt{94}}$ and,
thus,  $\frac{1}{a}\geq \left(38a+2\sqrt{8+32a^2}\right)$.
 Since $\left(38a+2\sqrt{8+32a^2}\right)$ is monotonously increasing
in $a$, we find
\begin{align}
\left(38a+2\sqrt{8+32a^2}\right)\leq \sqrt{94}< 10 \label{const},
\end{align}
which implies the assertion.
\end{proof}

\begin{rem}
In section 4.1, we have split the eigenvalues of $\gamma$ in eigenvalues which are larger than
$\frac{1}{2}$ and lower or equal $\frac{1}{2}$. In fact, this split turns out to be almost optimal 
and (\ref{const}) cannot be sharpened by another choice of $P$ and $P^\perp$. 
\end{rem}

Up to the constant (\ref{const}), 
Theorem \ref{CorrelationIneq} is exactly the result which was already obtained
in \cite{VBA}. The difference of the constants comes, on the one hand, from a different arrangement of the terms
of $\TRhh{\left(X\otimes X\right)\Gamma^{\left(\mathrm{T}\right)}}$ and, on the other hand, from the fact that
in \cite{VBA} also the Q-Condition was used, which can be seen implicitly in estimate (68) in \cite{VBA}.  
With the result of Theorem \ref{CorrelationIneq} we can immediately perform the 
integration in the Feffermann--de la Llave identity according to \cite{VBA} which leads to an estimate of
$\TRhh{V\left(\Gamma-(1-\mathrm{Ex})(\gamma\otimes\gamma)\right)}$.


\bibliography{BKM_REF}
\Addresses
\end{document}